\documentclass[journal,twoside]{IEEEtran}
\usepackage{color}
\usepackage{subfigure}
\usepackage{graphicx}
\usepackage{booktabs}
\usepackage{longtable}
\usepackage{amssymb}
\usepackage{amsmath}
\usepackage{cite}
\usepackage{multirow}
\usepackage{url}
\usepackage{tabularray}
\usepackage{pgfplots}
\pgfplotsset{compat=1.5}
\usepgfplotslibrary{groupplots}
\usepackage{algorithm}
\usepackage{algpseudocode}
\usepackage{amsmath}

\usepackage{colortbl}
\usepackage{algorithmicx}
\usepackage{nicematrix}
\usepackage{threeparttable}    

\usepackage{xcolor}
\definecolor{darkblue}{rgb}{0, 0, 0.6}
\definecolor{darkgreen}{rgb}{0, 0.7, 0}
\definecolor{darkred}{rgb}{0.8, 0, 0}
\usepackage{pifont}
\newcommand*\colorcheck{%
  \expandafter\newcommand\csname greencheck\endcsname{\textcolor{darkgreen}{\ding{52}}}%
}
\newcommand*\colorcross{%
  \expandafter\newcommand\csname redcross\endcsname{\textcolor{darkred}{\ding{56}}}%
}
\colorcheck
\colorcross
\usepackage{mathrsfs}
\usepackage{listings}
\usepackage{tabularx} 

\usepackage{graphicx} 
\usepackage{epsfig}
\usepackage{epstopdf} 
\usepackage[switch, pagewise]{lineno} 

\usepackage[
    colorlinks=true,  
    linkcolor=red,   
    urlcolor=blue,    
    citecolor=green    
]{hyperref}

\newcommand{\secondbest}[1]{\underline{{#1}}}

\usepackage{paralist}

\usepackage{amsthm} 

\newtheorem{theorem}{Theorem}
\newtheorem{lemma}{Lemma}

\newcommand{\x}{{\mathbf{x}}}

\newcommand{\y}{{\mathbf{y}}}
\newcommand{\z}{{\mathbf{z}}}

\newcommand{\f}{{\mathbf{f}}}

\newcommand{\Rmnum}[1]{\expandafter\@slowromancap\romannumeral #1@}
\begin{document}

\title{TVRN: Invertible Neural Networks for Compression-Aware Temporal Video Rescaling}

\markboth{Accepted by IEEE Transactions on Image Processing}%
{Feng \MakeLowercase{\textit{et al.}}: TVRN: Invertible Neural Networks for Compression-Aware Temporal Video Rescaling}

\author{
Xinmin Feng,
Li Li, \IEEEmembership{Senior Member, IEEE},
Dong Liu, \IEEEmembership{Senior Member, IEEE},
and Feng Wu, \IEEEmembership{Fellow, IEEE}
\thanks{
Date of current version \today. This work was supported by the Natural Science Foundation of China under Grant 62021001. We acknowledge the support of GPU cluster built by MCC Lab of Information Science and Technology Institution, USTC.

The authors are with the MOE Key Laboratory of Brain-Inspired Intelligent Perception and Cognition, University of Science and Technology of China, Hefei 230093, China.
(e-mail: xmfeng2000@mail.ustc.edu.cn;  lil1@ustc.edu.cn; dongeliu@ustc.edu.cn; fengwu@ustc.edu.cn).
}
}

\maketitle

\begin{abstract}
To fit diverse display and bandwidth constraints, high-frame-rate videos are temporally downscaled to low-frame-rate (LFR) and later upscaled, requiring joint optimization for effective frame-rate rescaling.
However, existing methods typically link the two operations via training objectives, without fully exploiting their reciprocal nature, which may cause high-frequency information loss. Moreover, they overlook the impact of lossy codecs on LFR videos, limiting real-world applicability.
In this work, we propose an end-to-end framework for compression-aware frame-rate rescaling, named TVRN. To regularize high-frequency information lost during frame-rate downscaling, TVRN adopts an invertible architecture that combines a Multi-Input Multi-Output Temporal Wavelet Transform with a high-frequency reconstruction module. To enable end-to-end training through non-differentiable lossy codecs, we design a surrogate network that approximates their gradients. Finally, to improve robustness under various compression levels, we extend TVRN to an asymmetric architecture by incorporating compression-aware features learned via a learning-to-rank strategy.
Extensive experiments show that TVRN outperforms existing methods in reconstruction quality under industrial video compression settings. Source code is publicly available at \url{https://github.com/fengxinmin/TVRN_public}.

\end{abstract}

\begin{IEEEkeywords}
	Invertible Neural Network, Temporal Video Rescaling, Video Frame-rate Resampling, Temporal Wavelet Transform, Video Frame Interpolation
\end{IEEEkeywords}

\section{Introduction}
\label{introduction}

\IEEEPARstart{W}ITH the rapid growth of high-frame-rate (HFR) video content, efficient transmission of such videos has become a critical problem. In practical applications  such as video streaming, content providers often downsample HFR videos to low-frame-rate (LFR) formats to alleviate network congestion, later restoring the original frame rate on the client side \cite{james2019beta, palmer2021voxel, wang2023reparo, meng2023enabling, Yin2023SAFR, wang2021enabling}. This raises the question of whether the downscaling process can be optimized to facilitate high-quality video reconstruction. Video frame-rate rescaling, also termed temporal video rescaling, aims to optimize the downscaling and upscaling processes jointly while ensuring visually pleasing LFR videos for efficient transmission and preview.

Early methods typically discard frames and use a video frame interpolation network to recover the original frame rate \cite{wang2023reparo,Yin2023SAFR}, but they treat downscaling and upscaling as separate tasks, ignoring their reciprocal nature.
Recent approaches aim to jointly optimize both processes \cite{xiang2022learning, zhang2025continuous}, where the downscaling network embeds motion information from HFR videos into LFR videos in a visually imperceptible manner, and the upscaling network reconstructs the original HFR video. 
However, these methods only rely on learning objectives for HFR recovery, neglecting the modeling of motion information loss during downsampling, which potentially degrades reconstruction quality.
More importantly, they apply rounding operations to compress  LFR videos, limiting compatibility with lossy video codecs like HEVC \cite{sullivan2012overview}. Thus, \textit{the challenge of achieving effective frame-rate rescaling under non-differentiable lossy codecs remains largely unsolved}.

Recent advances in spatial video rescaling have shown the effectiveness of invertible architectures in mitigating information loss \cite{huang2021video, tian2021self, ding2024towards, ho2022video, tian2023clsa}. While Haar wavelets are widely used for spatial frequency decomposition, temporal Haar wavelets are unsuitable for frame-rate rescaling due to the need for additional motion cues \cite{dong2023temporal}. Besides, invertible architectures typically model discarded high-frequency components as distributions conditioned on low-frequency ones, then reconstruct them using residual networks with downscaled video as input. However, motion-induced spatial misalignment across frames limits the accuracy of high-frequency component reconstruction in temporal video rescaling.

For end-to-end optimization with non-differentiable lossy video codecs, prior works use surrogate networks to simulate compression distortions, enabling gradient-based backpropagation \cite{tian2023clsa}. Built upon Dense3D-T blocks \cite{iandola2014densenet}, these networks struggle to capture inter-frame compression artifacts, particularly those arising from motion compensation. This limitation is critical for temporal video rescaling, where accurately modeling inter-frame distortions is key to robust motion steganography in LFR videos under various compression levels.

To address these limitations, we introduce the \textbf{T}emporal \textbf{V}ideo \textbf{R}escaling \textbf{N}etwork (\textbf{TVRN}). It integrates the proposed Multi-Input Multi-Output Temporal Wavelet Transform (MIMO-TWT) and Video Rescaling Network (MIMO-VRN) to regularize information loss by modeling discarded temporal high-frequency components as distributions conditioned on downscaled videos. 
MIMO-TWT decomposes videos into temporal frequency frames, removing the need for additional motion information. We recover the spatially misaligned high-frequency components using a reconstruction module guided by bidirectional optical flow and contextual features.
Moreover, a surrogate network mimics compression distortions from non-differentiable codecs, enabling end-to-end optimization. This network uses compression-aware invertible blocks to emulate distortions by degrading high-frequency components. We further extend the framework to an asymmetric architecture for enhanced robustness across various compression levels.
Extensive experiments show that our framework outperforms both frame-skipping and learning-based methods in temporal video rescaling. To the best of our knowledge, this is the first learnable frame-rate rescaling method to outperform state-of-the-art VFI methods under modern lossy codecs.

In summary, our main contributions are as follows:
\begin{itemize}

\item To enable video frame-rate rescaling with an invertible architecture, we design the MIMO-TWT and the bidirectional optical flow-guided high-frequency reconstruction module, effectively regularizing and reconstructing high-frequency information lost during downscaling.

\item We introduce a surrogate network with compression-aware invertible networks to estimate the gradient of non-differentiable lossy codecs, extending the framework to compression-aware temporal video rescaling.
\item We identify the challenges of integrating video quality enhancement methods into the proposed approach and extend our framework to an asymmetric architecture by incorporating compression-aware features.
\item Experimental results show that our framework achieves better HFR reconstruction quality compared to previous methods, while maintaining visually friendly LFR videos. 
\end{itemize}

\begin{figure*}
  \centering
  \includegraphics[width=1.0\textwidth]{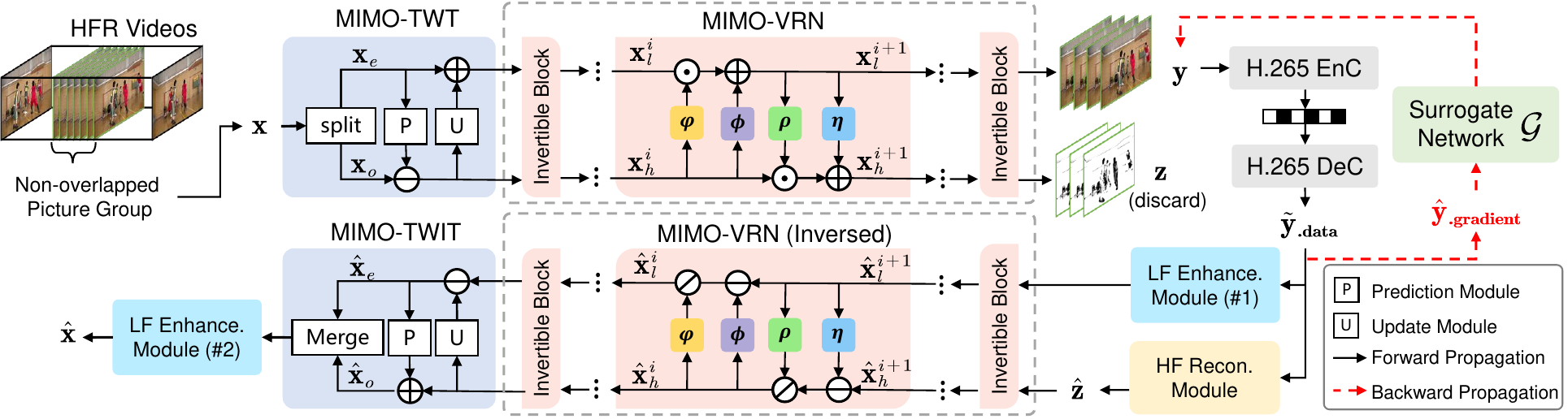}
  \caption{
  \textbf{Overview of the proposed Temporal Video Rescaling Network (TVRN)}.
  The MIMO-TWT followed by MIMO-VRN first decomposes a group of frames $\x$ from high-frame-rate (HFR) videos into visually pleasing low-frame-rate (LFR) videos $\y$ and high-frequency components $\z$. To reconstruct high-frame-rate videos $\widehat{\x}$, the inverse process is applied using the compressed LFR videos $\widetilde{\y}$ and the reconstructed high-frequency components $\widehat{\z}$.
To enable end-to-end training, the surrogate network $\mathcal{G}$ is introduced to mimic compression distortions. To further enhance robustness against distribution shifts caused by varying compression levels, we incorporate two identical video quality enhancement modules before (\#1) and after (\#2) upscaling, respectively.
  }
  \label{fig: framework}
\end{figure*}

\section{Related Work}
\label{gen_inst}


\textbf{Invertible Neural Network.} 
Invertible Neural Networks (INNs) are a class of neural networks composed of a series of bijective mappings, enabling exact recovery of the original data from its latent representation. INNs are widely used in generative modeling due to their ability to explicitly compute the probability density of the target distribution. For example, NICE \cite{dinh2014nice} introduced the additive coupling layer as an invertible transformation, splitting input dimensions to facilitate Jacobian computation for likelihood estimation. RealNVP \cite{dinh2016density} extended this with affine coupling layers to increase model expressiveness. Glow \cite{kingma2018glow} further enhanced the architecture with activation normalization and invertible 1$\times$1 convolutions, enabling more complex data distribution modeling. In addition to generative tasks, INNs have been applied to various image processing applications, including denoising, super-resolution, and compressive sensing 
\cite{li2021dehazeflow, huang2022winwave, liu2021idn}.

\textbf{Image Rescaling.} 
Recently, image rescaling has gained increasing attention. Early approaches mainly used encoder-decoder architectures for downscaling and upscaling \cite{kim2018task, li2018learning, sun2020learned, chen2020hrnet}. In contrast, Xiao \textit{et al.} \cite{xiao2020invertible, xiao2023invertible} proposed an invertible framework that models the bidirectional degradation and restoration process from a new perspective.
Subsequent methods have modeled high-frequency components with more complex distributions, such as Gaussian mixtures and conditional distributions \cite{chen2021direct, liang2021hierarchical, zhu2022high}. Other works focus on adapting the reversible framework to practical scenarios \cite{yang2023self, qi2023real,DBLP:journals/spl/HuangSC23,wang2025timestep}. For instance, 
{Huang \textit{et al.} \cite{DBLP:journals/spl/HuangSC23} presented a learned scale-arbitrary image downscaling method to downscale HR images to a target LR ones where it could be well upscaled by traditional simple upscaling methods. 
Wang \textit{et al.} \cite{wang2025timestep} explored extreme image rescaling with a timestep-aware diffusion model  that operates in the latent space of a pretrained autoencoder.
}
Additionally, this framework has been extended to various computer vision tasks, such as image colorization \cite{zhu2022high} and compression \cite{gao2023extremely}.

\textbf{Video Rescaling.}
Several notable studies have focused on jointly optimizing video downscaling and upscaling. MIMO-VRN \cite{huang2021video}  used an invertible coupling architecture with a MIMO strategy to handle multiple frames simultaneously. Tian \textit{et al.} \cite{tian2021self} proposed a self-conditioned probabilistic framework, which was further extended to lossy video compression and efficient action recognition. More recently, researchers have enhanced invertible video rescaling by leveraging temporal correlations within groups of frames \cite{ding2024towards, xiang2022learning,ho2022video}.

Efforts have also been made to extend video scaling from the spatial to the temporal domain \cite{xiang2022learning, zhang2025continuous}. For instance, STAA \cite{xiang2022learning} applies 3D low-pass filtering for temporal downscaling and recovers high-frequency details via space-time pixel shuffle. Concurrently, CSTVR \cite{zhang2025continuous} introduced a partially invertible architecture for continuous frame-rate rescaling. {However, these learned video frame-rate rescaling methods fall short of fully implementing an invertible architecture to mitigate information loss and simplify lossy compression to rounding-based quantization, restricting their practical deployment. 
Compared with existing invertible rescaling and motion-steganography pipelines, our method introduces several task-specific designs.}

\textbf{Video Frame Interpolation.}
Existing video frame interpolation (VFI) methods  based on deep learning can be generally categorized as flow-based or kernel-based. 
 Flow-based methods rely on optical flow estimation to generate interpolated frames.  Some methods~\cite{ niklaus2018context, niklaus2020softmax, sim2021xvfi} infer intermediate optical flows using the flows between two input frames by assuming certain motion types. Others~\cite{park2020bmbc,lu2022video,kong2022ifrnet,jin2025upr,lew2025disentangled, liu2024sparse} directly estimate the intermediate flows. Kernel-based methods argue that optical flows are unreliable in dynamic texture scenes, predicting locally adaptive convolution kernels to synthesize output pixels, thus allowing more flexible many-to-one mapping~\cite{ding2021cdfi, cheng2020video, cheng2021multiple, danier2022enhancing,zhang2023extracting, guo2024generalizable}.

\section{Methods}

\subsection{Preliminaries}
The invertible rescaling process can be viewed  as a distribution transformation. Let \( \mu \) denote the joint distribution of the original image \( \mathbf{x} \), and let \( \mathcal{F}_\sharp \)  be an invertible transformation. The \textit{pushforward measure} of \( \mu \) under \( \mathcal{F}_\sharp \) is defined as:
\begin{equation}
(\mathcal{F}_\sharp)_\# \mu = \nu = \nu_L \times \nu_H,
\end{equation}
where \( \nu \) is the joint distribution of the transformed variables $(\mathbf{x}_L,\mathbf{x}_H)$, and \( \nu_L \) and \( \nu_H \)  are the marginal distributions of the downsampled image \(\mathbf{x}_L\) and the high-frequency components \(\mathbf{x}_H\), respectively.
For any \textit{measurable set} \( {B} \) in the target space of $\mathcal{F}_\sharp$, where each element corresponds to a possible realization of $(\mathbf{x}_L, \mathbf{x}_H)$, the pushforward measure satisfies:
\begin{equation}
(\mathcal{F}_\sharp)_\# \mu(B) = \mu(\mathcal{F}_\sharp^{-1}(B)).
\end{equation}
Following~\cite{xiao2020invertible}, if we approximate \( \nu \) by \( \nu_L \times \mathcal{N}(\boldsymbol{\mu}_H, \boldsymbol{\sigma}_H^2) \), we can transmit only the downsampled image \( \mathbf{x}_L \), generate the high-frequency part \( \mathbf{x}_H \) by sampling from the Gaussian, and then reconstruct the original image by applying \( \mathcal{F}_\sharp^{-1} \).
This assumption is theoretically sound because, for any continuous random variable with density (\textit{e.g.}, \( \mathbf{x}_H \sim p(\mathbf{x}_H \mid \mathbf{x}_L) \)), there exists a bijection \( f_{{\mathbf{x}_H}} \) such that \( f_{{\mathbf{x}_H}}(\mathbf{x}_H) \sim \mathcal{N}(0, \mathbf{I}) \)~\cite{hyvarinen1999nonlinear}. This means \( \nu_H \) can be approximated by a simple Gaussian independent of input content.
In practice, however, we often model 
\( \nu_H \)  as a conditional distribution dependent on \( \nu_L \)   to improve reconstruction quality~\cite{chen2021direct, liang2021hierarchical, zhu2022high}.

\subsection{Framework Overview}
An overview of the proposed temporal video rescaling framework is shown in Fig.~\ref{fig: framework}. The system learns an invertible neural network \(\mathcal{F}_\sharp\) to perform frame-rate downscaling and upscaling through its forward and inverse mappings:
\begin{equation}
    (\mathbf{y}, \mathbf{z}) = \mathcal{F}_\sharp(\mathbf{x}), \quad
    \widehat{\mathbf{x}} = \mathcal{F}_\sharp^{-1}(\widetilde{\mathbf{y}}, \hat{\mathbf{z}}),
\end{equation}
where \(\mathbf{x}\) and \(\mathbf{y}\) are the HFR and LFR videos, \(\mathbf{z}\) is the high-frequency component; \(\widetilde{\mathbf{y}}\) is the compressed LFR video; and \(\hat{\mathbf{z}}\) is the reconstructed high-frequency component.

During downscaling, the proposed Multi-Input Multi-Output Temporal Wavelet Transform (MIMO-TWT, $\S$ \ref{sec: MIMO_TWT}) first decomposes \(\mathbf{x}\) into temporal low-frequency \(\mathbf{x}_l\) and high-frequency \(\mathbf{x}_h\) frames. The Multi-Input Multi-Output Video Rescaling Network (MIMO-VRN, $\S$~\ref{sec: MIMO_VRN}) further converts \(\mathbf{x}_l\) into the visually pleasing LFR video \(\mathbf{y}\), while \(\mathbf{x}_h\) is mapped to the high-frequency component \(\mathbf{z}\). 
The LFR video \(\mathbf{y}\) is then compressed using a lossy codec (e.g., H.265/HEVC \cite{sullivan2012overview}) to produce the transmitted video \(\widetilde{\mathbf{y}}\).
During upscaling, the inverse MIMO-VRN and MIMO-TWT networks reconstruct the original HFR video \(\widehat{\mathbf{x}}\) from the compressed LFR video \(\widetilde{\mathbf{y}}\) and the reconstructed high-frequency component \(\hat{\mathbf{z}}\), which is synthesized with guidance from bi-directional optical flow and contextual features ($\S$~\ref{sec: hf_recon}).

The system is optimized by minimizing a loss that balances reconstruction fidelity and visual similarity between the LFR video $\y$ and the even-indexed frames \(\mathbf{x}_e\) of the original video:
\begin{equation}
    \mathcal{L}_{\text{basic}} = \ell_1(\mathbf{x}, \widehat{\mathbf{x}}) + \lambda \cdot \ell_1(\mathbf{y}, \mathbf{x}_e),
    \label{equ: total}
\end{equation}
where \(\lambda\) stabilizes training and bitrate overhead caused by high-frequency embedding in LFR frames.

To jointly optimize temporal downscaling and upscaling in an end-to-end manner, we design a surrogate neural network \(\mathcal{G}\), built upon compression-aware invertible blocks, as a gradient estimator for non-differentiable lossy codecs ($\S$ \ref{sec: surrogate_network}). The loss function for training the surrogate network is defined as:
\begin{equation}
\mathcal{L}_{\text{surrogate}} = \ell_1(\mathcal{G}(\mathbf{y}), \mathcal{H}(\mathbf{y})),
\label{equ: surrogate_network}
\end{equation}
where \(\mathcal{H}(\cdot)\) denotes the actual video codec. The temporal video rescaling network and surrogate network are jointly trained with an alternate optimization strategy (§ \ref{sec: training_strategy}).

Given that symmetric invertible architectures are sensitive to distribution shifts from compression artifacts \cite{yang2023self}, we extend our framework to an asymmetric architecture by integrating video quality enhancement (VQE) methods ($\S$ \ref{sec: self_asymmetric}). However, directly applying VQE models to remove compression artifacts from LFR frames proves ineffective, particularly at high bitrate compression, as the VQE process distorts the high-frequency components embedded in the downscaled videos. To overcome this, we treat upscaling as a decoupling process that decouples the useful high-frequency components from compressed LFR videos. We further enhance the VQE model by incorporating compression-aware features learned through a pairwise learning-to-rank strategy, improving restoration performance across various compression levels.

\subsection{MIMO-TWT}
\label{sec: MIMO_TWT}

Most previous efforts in image and video rescaling have relied on the 2D-Haar Wavelet Transform to separate low- and high-frequency information \cite{xiao2020invertible, huang2021video, tian2021self, yang2023self, tian2023clsa}. However, traditional temporal wavelet transforms cannot be directly applied to frame-rate rescaling due to their Multi-Input Single-Output (MISO) structure, which requires transferring additional motion cues between forward and inverse transforms \cite{dong2023temporal}. To address this, we introduce the Multi-Input Multi-Output Temporal Wavelet Transform (MIMO-TWT) and Inverse Transform (MIMO-TWIT), which use video frame interpolation instead of unidirectional motion estimation, eliminating the need for transferring motion cues.

MIMO-TWT employs a prediction-first lifting scheme for wavelet transform~\cite{claypoole2003nonlinear}.
As shown in Fig.~\ref{fig: framework}, this process involves three steps: split/merge, prediction, and update. Given a sequence of \(N\) video frames \(\mathbf{x} \in \mathbb{R}^{N \times 3 \times H \times W}\), we first split it into even- and odd-indexed frames: \(\mathbf{x}_e \in \mathbb{R}^{\lceil N / 2 \rceil \times 3 \times H \times W}\) and \(\mathbf{x}_o \in \mathbb{R}^{\lfloor N / 2 \rfloor \times 3 \times H \times W}\), respectively.
Next, the odd frames \(\mathbf{x}_o\) are predicted from the even frames \(\mathbf{x}_e\), and the residuals between original and predicted odd frames, representing the temporal high-frequency components \(\mathbf{x}_h\), are computed. These residuals are then used to update the even frames, yielding the temporal low-frequency components \(\mathbf{x}_l\), as follows:
\begin{equation}
\mathbf{x}_h = \mathbf{x}_o - \mathbf{P}(\mathbf{x}_e), \quad \mathbf{x}_l = \mathbf{x}_e + \mathbf{U}(\mathbf{x}_h).
\label{equ: mimo_twt}
\end{equation}
Here, \(\mathbf{P}\) and \(\mathbf{U}\) represent the prediction and update modules. Specifically, we use a lightweight video frame interpolation network~\cite{jin2025upr} for \(\mathbf{P}\), and stacked Dense3D-T blocks~\cite{iandola2014densenet} for \(\mathbf{U}\).
This design allows MIMO-TWT and MIMO-TWIT to effectively decompose and reconstruct video frames from their temporal low- and high-frequency components.

\subsection{MIMO-VRN}
\label{sec: MIMO_VRN}

Following MIMO-VRN \cite{huang2021video}, we adopt a series of multi-input multi-output invertible coupling blocks, as shown in Fig.~\ref{fig: framework}, to learn a mapping that encodes the case-specific high-frequency motion information within the low-frequency components. Each invertible block comprises four coupling layers. 
Specifically, in the $i$-th invertible block, the forward and inverse transformations are defined as:
\begin{align}
&\text{Forward:}&&\left\{
\begin{aligned}
&\mathbf{x}_{l}^{i+1} = \mathbf{x}_l^i \odot \exp\left(\varphi(\mathbf{x}_h^i)\right) + \phi(\mathbf{x}_h^i) , \\
&\mathbf{x}_h^{i+1} = \mathbf{x}_h^i \odot \exp \left(\rho(\mathbf{x}_l^{i+1})\right) + \eta(\mathbf{x}_l^{i+1}),
\end{aligned}
\right.
\\
&\text{Inverse:}&&\left\{
\begin{aligned}
&\hat{\mathbf{x}}_h^i = \left(\hat{\mathbf{x}}_h^{i+1} - \eta(\hat{\mathbf{x}}_l^{i+1})\right) \odot \exp \left(-\rho(\hat{\mathbf{x}}_l^{i+1})\right), \\
&\hat{\mathbf{x}}_l^i = \hat{\mathbf{x}}_l^{i+1} - \phi(\hat{\mathbf{x}}_h^i) \odot \exp\left(-\varphi(\hat{\mathbf{x}}_h^i)\right).
\end{aligned}
\right.
\label{equ: mimo_vrn}
\end{align}
Here, \(\mathbf{x}_l^i\) and \(\mathbf{x}_h^i\) denote the low- and high-frequency components at the \(i\)-th stage during the downscaling process, and \(\hat{\mathbf{x}}_l^i\), \(\hat{\mathbf{x}}_h^i\) denote their corresponding reconstructions during upscaling. The operator \(\odot\) indicates element-wise Hadamard multiplication. All transformation functions, \(\varphi(\cdot)\), \(\phi(\cdot)\), \(\rho(\cdot)\), and \(\eta(\cdot)\), are implemented using the Dense3D-T architecture \cite{iandola2014densenet}. To maintain numerical stability after exponentiation, we apply a sigmoid activation to normalize the exponential terms $\exp\left(\pm\rho(\cdot)\right)$ and $\exp\left(\pm\varphi(\cdot)\right)$ .

\begin{figure}[t]
  \centering
  \includegraphics[width=0.8\columnwidth]{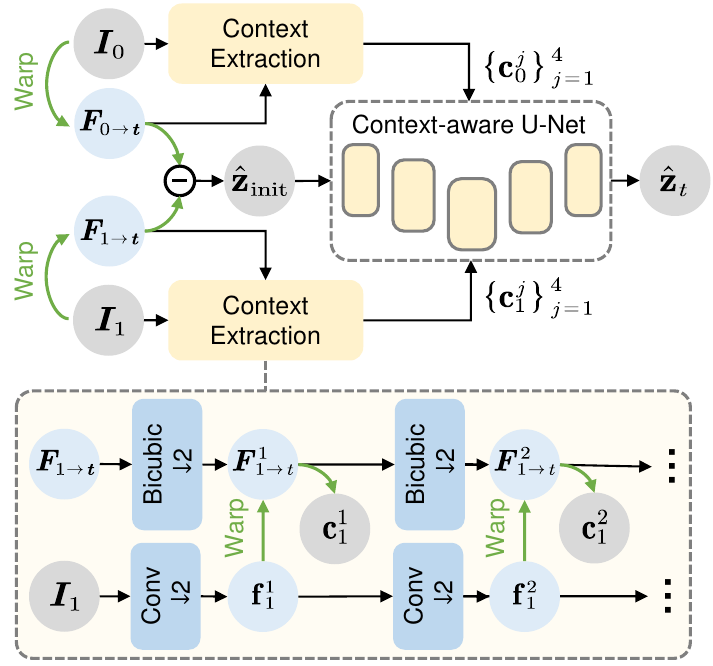}
  \caption{\textbf{Bidirectional Optical Flow-Guided High-Frequency Component Reconstruction Module.} This module reconstructs the high-frequency components $\mathbf{z}_t$ at time $t$ using the neighboring frames $I_0$ and $I_1$. We first compute the difference between the bidirectionally warped frames as the initial estimate $\hat{\mathbf{z}}_{\text{init}}$. Then, multi-scale contextual features are aligned to time $t$ using the bidirectional optical flows $F_{0 \to t}$ and $F_{1 \to t}$. Finally, a context-aware U-Net is designed to refine $\hat{\mathbf{z}}_{\text{init}}$ into $\hat{\mathbf{z}}_t$.
  }
  \label{fig: hf_recon}
\end{figure}

\begin{figure*}
  \centering
  \includegraphics[width=\textwidth]{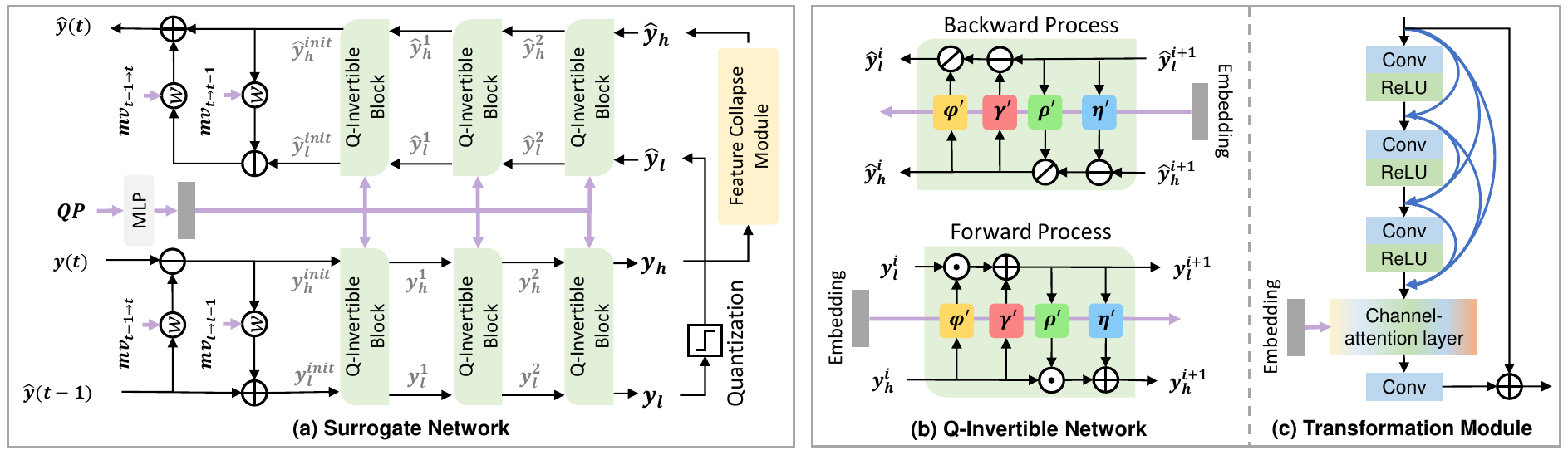}
  \caption{
\textbf{Structure of the proposed surrogate network.} The network emulates non-differentiable lossy video codecs by degrading the original frame $\mathbf{y}_{(t)}$ into a low-quality frame $\hat{\mathbf{y}}_{(t)}$, using the previously degraded frame $\hat{\mathbf{y}}_{(t-1)}$ as a reference. We first apply the MISO temporal wavelet transform with the motion vector derived from bitsreams, followed by stacked Q-Invertible Blocks, to decompose the temporal frequency components. Then, a feature collapse module is used to degrade the high-frequency components $\mathbf{y}_h$, mimicking the lossy compression of inter-frame residuals in traditional video codecs.
  }
  \label{fig: surrogate_network}
\end{figure*}

\subsection{High-Frequency Components Reconstruction}
\label{sec: hf_recon}

Prior works in spatial video rescaling commonly adopt residual block-based predictive modules to estimate the missing high-frequency components from downscaled low-resolution frames~\cite{huang2021video, tian2021self, li2024video}. However, these designs are not ideal for temporal video rescaling, where high-frequency frames may not be spatially aligned with neighboring LFR frames due to motion.
To address this, we propose a bi-directional optical flow-guided high-frequency reconstruction module, as shown in Fig.~\ref{fig: hf_recon}. 
The goal is to reconstruct the discarded high-frequency component \( \hat{\mathbf{z}}_t \) at a given intermediate time \( t \in (0,1) \), using two reference frames \( I_0 \) and \( I_1 \) of the downscaled video \(\hat{\mathbf{y}}\). 
The process consists of four stages:

\textit{1) Flow Estimation:}  
We adopt the lightweight bi-directional motion estimator~\cite{jin2022enhanced} to compute the optical flow \( F_{0 \to 1} \) and \( F_{1 \to 0} \) between the two reference frame \( I_0 \) and \( I_1 \). These are then linearly scaled to generate intermediate flows toward the target time \( t \in (0, 1) \):
\begin{equation}
F_{0 \to t} = t \cdot F_{0 \to 1}, \quad F_{1 \to t} = (1 - t) \cdot F_{1 \to 0}.
\end{equation}

\textit{2) Initial Estimate:}  
The intermediate flows are used to forward-warp the reference frames to time $t$. The difference between the warped frames yields an initial estimate of the high-frequency component:
\begin{equation}
\hat{\mathbf{z}}_\text{init} = \overrightarrow{\mathcal{W}}(I_0, F_{0 \to t}) - \overrightarrow{\mathcal{W}}(I_1, F_{1 \to t}),
\label{eq:initial_estimate}
\end{equation}
where \(\overrightarrow{\mathcal{W}}(\cdot, \cdot)\) denotes the forward warping operation.  

\textit{3) Contextual Feature Extraction:}  
While temporally aligned, $\hat{\mathbf{z}}_\text{init}$ often lacks fine details and suffers from warping holes, as shown in Fig. \textcolor{red}{4}(c) in the supplementary material.  
To refine it, we extract multi-level contextual features from \( I_0 \) and \( I_1 \) using four convolutional layers with stride 2, producing \(\{\mathbf{f}_0^j, \mathbf{f}_1^j\}_{j=0}^3\). Similarly, multi-level intermediate flows \(\{F_{0 \to t}^j, F_{1 \to t}^j\}_{j=0}^3\) are obtained by bicubic downscaling the original flows \(F_{0 \to t}\) and \(F_{1 \to t}\), and used to warp these contextual features toward the target time $t$:
\begin{equation}
\mathbf{c}_i^j = \overrightarrow{\mathcal{W}}(\mathbf{f}_i^j, F_{i \to t}^j), \quad i \in \{0, 1\}, \ j \in \{0,1,2,3\}.
\end{equation}

\textit{4) High-Frequency Synthesis:}  
The aligned contextual features \( \{\mathbf{c}_0^j, \mathbf{c}_1^j\}_{j=0}^3 \) along with the initial estimate \( \hat{\mathbf{z}}_\text{init} \) are fed into a context-aware U-Net to reconstruct the final high-frequency component \( \hat{\mathbf{z}}_t \).  
The U-Net architecture details are provided in the supplementary material.

\subsection{Surrogate Network for Lossy Codecs}
\label{sec: surrogate_network}
In practical applications, video rescaling is often accompanied by lossy video compression. To enable end-to-end training, prior works on spatial video rescaling~\cite{tian2021self, lu2024preprocessing, tian2023clsa} introduced a differentiable surrogate network that mimics compression artifacts. However, existing surrogate networks based on Dense3D-T blocks struggle to accurately simulate the distortions caused by inter-frame prediction in conventional video codecs~\cite{sullivan2012overview}.
To overcome this limitation, we propose a recurrent surrogate network \(\mathcal{G}\), which mimics traditional codec behavior under low-delay configurations. 
As shown in Fig.~\ref{fig: surrogate_network}, \(\mathcal{G}\)  is composed of compression-aware invertible blocks and a feature collapse module, and it progressively degrades the original LFR video $\y$ into a compressed counterpart $\widehat{\y}$. 

At each timestep $t$, the network receives the previously degraded frame \(\widehat{\mathbf{y}}_{(t-1)}\) and the current original frame \(\mathbf{y}_{(t)}\). Both frames are first transformed into temporal frequency components using the Haar temporal wavelet transform~\cite{dong2023temporal}:
\begin{align}
    \mathbf{y}_h^{\text{init}} &= \mathbf{y}_{(t)} - \overleftarrow{\mathcal{W}}(\widehat{\mathbf{y}}_{(t-1)}, \mathbf{mv}_{t -1\to t}), \\
    \mathbf{y}_l^{\text{init}} &= \widehat{\mathbf{y}}_{(t-1)} + \overleftarrow{\mathcal{W}}(\mathbf{y}_{(t)}, \mathbf{mv}_{t \to t-1}),
\end{align}
where \(\mathbf{mv}_{t \to t+1}\) and \(\mathbf{mv}_{t+1 \to t}\) denote motion vectors derived from the HEVC bitstreams for backward warping, respectively.
The initial frequency components \((\mathbf{y}_{l}^\text{init}, \mathbf{y}_{h}^\text{init})\) are processed by three compression-aware invertible coupling layers, termed Q-Invertible Blocks, to produce low- and high-frequency components $(\mathbf{y}_{l}, \mathbf{y}_{h})$. These are then degraded to mimic codec artifacts: $\mathbf{y}_{l}$ is quantized into $\widehat{\mathbf{y}}_l$, and $\mathbf{y}_{h}$ is passed through a feature collapse module to obtain a distorted residual $\widehat{\mathbf{y}}_h$, simulating lossy residual encoding.  The degraded components \(\widehat{\mathbf{y}}_l\) and \(\widehat{\mathbf{y}}_h\) are then passed through the inverse Q-Invertible Blocks and the Haar Temporal Wavelet Transform to generate the compressed frame \(\widehat{\mathbf{y}}_{(t)}\). 

As depicted in Fig. \ref{fig: surrogate_network} (b), the Q-Invertible Block follows the invertible coupling design of MIMO-VRN, with a key difference: the incorporation of the quantization parameter (QP) into the transformation functions \(\varphi'(\cdot)\), \(\phi'(\cdot)\), \(\rho'(\cdot)\), and \(\eta'(\cdot)\).
Specifically, for each input frame, a two-layer MLP processes the QP to generate a scale vector \( \mathbf{s} \in \mathbb{R}^{C \times 1 \times 1} \), which modulates intermediate features \( \mathbf{f}_d \in \mathbb{R}^{C \times H \times W} \) via channel-wise multiplication, \textit{i.e.}, \( \mathbf{f}_d' = \mathbf{f}_d \odot \mathbf{s} \). This channel-wise attention mechanism enables each Q-Invertible Block to dynamically adapt to different compression levels.

\subsection{Asymmetric Rescaling with Compression-Aware Low-Frame-Rate Video Enhancement}
\label{sec: self_asymmetric}

Since symmetric rescaling with invertible architectures is sensitive to distribution shifts caused by varying compression levels \cite{yang2023self}, we extend our framework to an asymmetric design by incorporating video quality enhancement (VQE) models. While applying a VQE model before upscaling may seem like a simple solution to remove compression artifacts, it proves inefficient. As shown in Fig.~\ref{fig: restoration_locations}, training a single model across multiple quantization parameters (QPs) significantly underperforms compared to QP-specific models. Moreover, the performance improvement of QP-specific models diminishes as the QP decreases. This highlights two key challenges for integrating VQE models into temporal rescaling: 1) improving performance at higher-bitrate compression, and 2) enhancing model generalization across different QPs.

\begin{figure}[t]
  \centering
    \includegraphics[width=1.01\columnwidth]{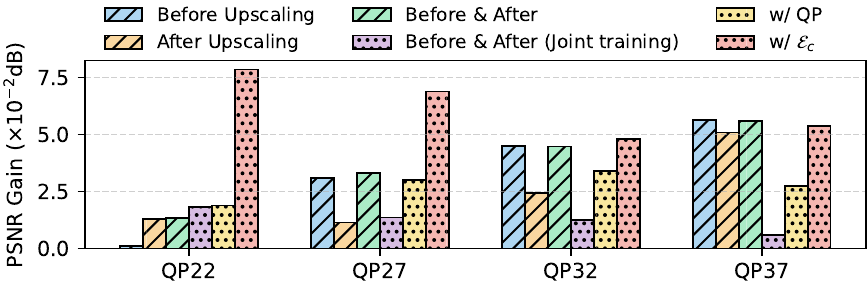}
  \caption{
  \textbf{Reconstruction performance gain using different strategies for integrating enhancement modules into our framework}, compared to the symmetric rescaling network.
The striped box denotes separate models trained for each QP,
while the spotted box indicates a single model trained across all four QPs. For a clear comparison of PSNR gains at different compression levels, the pre-trained MIMO-TWT and MIMO-VRN are frozen to keep the bitrate unchanged.
  }
  \label{fig: restoration_locations}
\end{figure}

For the first challenge, applying the VQE model before upscaling improves performance at lower bitrates (\textit{e.g.}, QP 27, 32, 37), as the enhanced LFR frames provide better input for upscaling. However, at higher bitrates (\textit{e.g.}, QP 22), VQE models struggle to distinguish useful high-frequency details from compression artifacts. As shown in Fig.~\ref{fig: freq_LFR}, downscaled LFR frames retain more high-frequency content, which is better preserved at higher bitrates.
We also observe that upscaling helps decouple meaningful details from artifacts, simplifying the enhancement task.
\textit{Based on this insight, we treat upscaling as a decoupling step and apply VQE afterward.}
This improves reconstruction at QP 22, as demonstrated by the comparison between “Before Upscaling” and “After Upscaling” in Fig.~\ref{fig: restoration_locations}.

Regarding the second challenge, we find that training a single model across compression levels proves less effective than training QP-specific models, as shown by the comparison between ``Before\&After'' and ``Before\&After (Joint Training)'' in Fig.\ref{fig: restoration_locations}. Furthermore, the optimal placement of enhancement modules—before or after upscaling—varies with QP. These findings motivate the use of compression-aware guidance. \textit{We address this by leveraging compression-aware features learned through a pairwise learning-to-rank strategy to adaptively control the enhancement process.} This approach outperforms directly incorporating QPs, as demonstrated by the performance gap between ``w/ $\mathcal{E}_c$'' and ``w/ QP'' in Fig.~\ref{fig: restoration_locations}.

\begin{figure}[t]
  \centering
  \includegraphics[width=1.0\columnwidth]{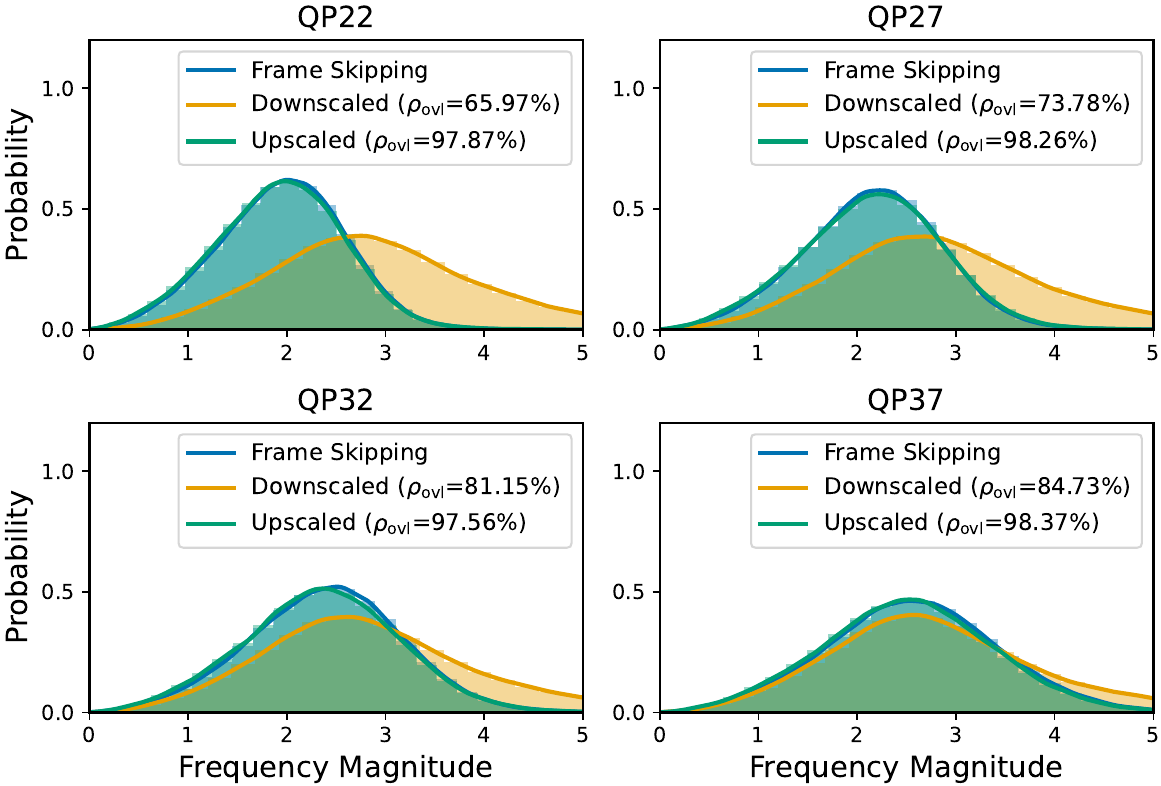}
\caption{\textbf{Frequency Analysis of Compressed LFR Frames}.
We visualize the log-magnitude spectrum of LFR frames produced by frame skipping, downscaling, and downscaling-then-upscaling on the SNU-FILM dataset. The overlap ratio  \(\rho_\mathrm{ovl}\)  quantifies spectral similarity to frame skipping. Results indicate that as QP decreases, downscaled frames exhibit increasingly rich high-frequency content, complicating subsequent enhancement. Computation details are provided in the supplementary material.}
  \label{fig: freq_LFR}
\end{figure}

 \begin{figure}[t]
  \centering
  \includegraphics[width=0.95\columnwidth]{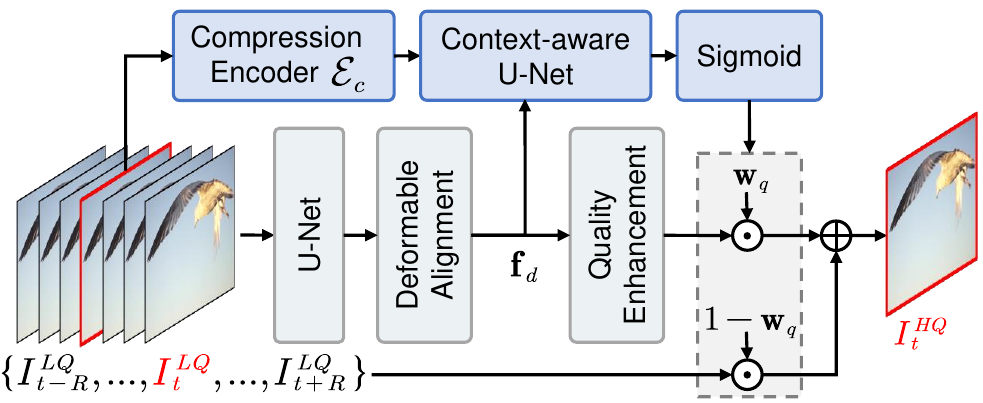}
  \caption{\textbf{Structure of the LFR restoration module.} 
  Built on the classical video enhancement method STDR~\cite{luo2022spatio}, we integrate compression-aware features learned via a learning-to-rank strategy to adaptively control enhancement strength through a residual shortcut.
  }
  \label{fig:structure_STDR}
\end{figure}

We adopt the classical video quality enhancement method STDR~\cite{luo2022spatio} both before and after upscaling to suppress compression artifacts. As shown in Fig.~\ref{fig:structure_STDR}, STDR aligns $2R$ neighboring frames with the target frame and performs enhancement in three stages: multi-level feature extraction with a U-Net backbone, deformable alignment, and quality refinement.  
To further guide enhancement, a pretrained compression encoder $\mathcal{E}_c$ extracts multi-level compression-aware features from the low-quality input frame $I_t^{LQ}$. These are fused with the aligned features $\mathbf{f_d}$ via a context-aware U-Net to produce a spatially adaptive weight map $\mathbf{w}_q$, as follows:
\begin{equation}
    \mathbf{w}_q = \text{Sigmoid}\left(\text{U-Net}\left(\mathcal{E}_c(I_t^{LQ}), \mathbf{f_d}\right)\right).
\end{equation}
The normalized map $\mathbf{w}_q$ adaptively controls the enhancement strength through a residual shortcut:
\begin{equation}
    I_t^{HQ} = \mathbf{w}_q \cdot \mathbf{f}_q  + (1 - \mathbf{w}_q) \cdot I_t^{LQ}.
\end{equation}

To enable the compression encoder $\mathcal{E}_c$ to capture subtle differences across varying compression levels, we adopt a learning-to-rank paradigm~\cite{wang2023compression}. Specifically, two frames encoded with two different QPs, $Q(i)$ and $Q(j)$, are passed through a shared compression encoder to extract a pair of compression representations. These are then fed into a shared ranker to produce two corresponding scores,  $\mathbf{s}_i$ and $\mathbf{s}_j$. The model is trained using a pairwise margin-ranking loss:
\begin{align}
    \mathcal{L}_{rank} = \max\left(0, \left(\mathbf{s}_i - \mathbf{s}_j\right) \cdot \kappa + \xi\right), \notag \\
\text{where} \quad \kappa = \begin{cases} 
1 & \text{if } Q(i) < Q(j) \\
-1 & \text{if } Q(i) > Q(j).
\end{cases} 
\label{equ:L_ranker}
\end{align}
Here, \( \kappa \) indicates the ground-truth preference between the pair based on their QP values, and \( \xi \) denotes the ranking margin, which is empirically set to 0.5. Details of the compression encoder and ranker are provided in the supplementary material.

\subsection{{Alternate} Training Strategy}
\label{sec: training_strategy}
To enable stable optimization of the temporal video rescaling network and the surrogate network, we adopt an alternate training strategy inspired by~\cite{ye2025rate}, as outlined in Algorithm~\ref{algorithm-training_strategy}. 
During each training iteration, the forward propagation process begins by compressing the downscaled video \(\mathbf{y}\) using the HEVC codec \(\mathcal{H}\), yielding both the compressed video \(\widetilde{\mathbf{y}}\) and associated metadata. These are fed into the surrogate network \(\mathcal{G}\), which simulates compression distortions and generates a disturbed LFR video \(\widehat{\mathbf{y}}\). As indicated in Line 5 of Algorithm~\ref{algorithm-training_strategy}, the values of \(\widehat{\mathbf{y}}\) are \textit{reassigned} to match those of \(\widetilde{\mathbf{y}}\).
Next, the reassigned \(\widehat{\mathbf{y}}\) is passed through the high-frequency reconstruction module \(\mathcal{F}_{\mathbf{z}}\) to recover the high-frequency component \(\widehat{\mathbf{z}}\). Both \(\widehat{\mathbf{y}}\) and \(\widehat{\mathbf{z}}\) are then input into the inversed invertible neural network \(\mathcal{F}_{\sharp}^{-1}\) to reconstruct the high-frame-rate video. 
For backward propagation, we adopt an alternate update strategy: the surrogate network is frozen while updating the temporal rescaling network using the loss defined in Eq. (\ref{equ: total}); conversely, the rescaling network is frozen while updating the surrogate network with the loss in Eq. (\ref{equ: surrogate_network}).

\begin{algorithm}[t]
\caption{Alternate Training Strategy}
\label{algorithm-training_strategy}
\textbf{Input: } Training dataset \( D = \{\mathbf{x}_k\}_{k=1}^m \); learning rates \( \eta_1, \eta_2 \) \\
\textbf{Output: }Trained MIMO-TWT and MIMO-VRN with parameters \( \mathbf{\Theta}_{\sharp} \), HF reconstruction module with parameters \( \mathbf{\Theta}_{z} \), and LF enhancement module with parameters \( \mathbf{\Theta}_{r} \) \\
\textbf{Training Variables:} \( \mathbf{\Theta}_{\sharp}, \mathbf{\Theta}_{z}, 
\mathbf{\Theta}_{r},
\mathbf{\Theta}_{c} \)
\begin{algorithmic}[1]
\For{each \( \mathbf{x} \in D \)}
    \State \( \mathbf{y}, \mathbf{z} \leftarrow \mathcal{F}_{\sharp}(\mathbf{x} ; \mathbf{\Theta}_{\sharp}) \) 
    \Comment{Downscaling}
    \State \( \widetilde{\mathbf{y}}, \text{Metadata} \leftarrow \mathcal{H}(\mathbf{y}) \) 
    \Comment{Lossy compression}
    \State \( \widehat{\mathbf{y}} \leftarrow \mathcal{G}(\mathbf{y}, \text{Metadata} ; \mathbf{\Theta}_{c}) \) 
    \Comment{Compression simulation}
    \State \( \widehat{\mathbf{y}}.\text{data} \leftarrow \widetilde{\mathbf{y}}.\text{data} \) 
    \State $\widehat{\mathbf{z}} \leftarrow \mathcal{F}_\mathbf{z}(\widehat{\y}; \mathbf{\Theta}_{z})$
    \State \( \widehat{\mathbf{x}} \leftarrow \mathcal{F}_{\sharp}^{-1}(\widehat{\mathbf{y}}, \widehat{\mathbf{z}} ; \mathbf{\Theta}_{\sharp},  \mathbf{\Theta}_{r}) \) 
    \Comment{Upscaling}
    \State \( \mathcal{L}_{\text{basic}} \leftarrow \ell_1(\mathbf{x}, \widehat{\mathbf{x}}) + \lambda \cdot \ell_2(\mathbf{y}, \mathbf{x}_e) \) \Comment{Eq.~(\ref{equ: total})}
    \State \( [\mathbf{\Theta}_{\sharp}, \mathbf{\Theta}_{z},
    \mathbf{\Theta}_{r}] \leftarrow [\mathbf{\Theta}_{\sharp}, \mathbf{\Theta}_{z},
    \mathbf{\Theta}_{r}] - \eta_1 \cdot \nabla_{\mathbf{\Theta}_{\sharp}, \mathbf{\Theta}_{z},
    \mathbf{\Theta}_{r}} \mathcal{L}_{\text{basic}} \) 
    \State /* Clear computational graph */        
        \State \( \mathcal{L}_{\text{s}} \leftarrow \ell_1\left(\mathcal{G}(\mathbf{y}, \text{Metadata} ; \mathbf{\Theta}_{c}), \widetilde{\mathbf{y}}\right) \) \Comment{Eq.~(\ref{equ: surrogate_network})}
    \State \( \mathbf{\Theta}_{c} \leftarrow  \mathbf{\Theta}_{c} - \eta_2 \cdot  \nabla_{\mathbf{\Theta}_c} \mathcal{L}_{\text{s}}  \) 
\EndFor
\State \Return \( \mathbf{\Theta}_{\sharp}, \mathbf{\Theta}_{z} ,
    \mathbf{\Theta}_{r}\)
\end{algorithmic}
\label{alg:interactive_training}
\end{algorithm}

\section{Experiments}

\label{exp}
\subsection{Dataset and Metrics}
\label{dataset}
\textbf{(1) Training Dataset.}
We utilize the widely adopted Vimeo90K-Septuplet dataset~\cite{xue2019video}, consisting of 65K 7-frame video clips with various motion types. The training set is compressed using H.265 with QP values of 17, 22, 27, 32, and 37 to generate the compressed sequences, which are used for pretraining the ranker and fine-tuning the baseline methods.

\textbf{(2) Testing Dataset.}
For the evaluation of the temporal video rescaling task, we use three standard benchmarks: UCF-101~\cite{soomro2012ucf101}, Vimeo90K test~\cite{xue2019video}, and SNU-FILM Medium~\cite{choi2020scene}, covering various levels of motion complexity. We extend these datasets from 3-frame sequences to full-length sequences to evaluate performance on longer sequences.

\textbf{(3) Metrics.}
We evaluate the quality of reconstructed HFR video using Peak Signal-to-Noise Ratio (PSNR) and Structural Similarity Index (SSIM)~\cite{wang2004image} in the RGB space, and use bits-per-pixel (bpp) to measure the average bits required for storing or transmitting each pixel of HFR videos. 
{
To assess perceptual and temporal quality, we further include the following metrics: VMAF~\cite{li2016toward} and LPIPS~\cite{zhang2018unreasonable} for perceptual fidelity, and tOF~\cite{chu2020learning} and warping error~\cite{lai2018learning} for temporal consistency.
Specifically, tOF quantifies motion inconsistency by comparing the optical flow fields between two consecutive frames in the reconstructed sequence $\{b_{t-1}, b_t\}$ and the ground truth $\{g_{t-1}, g_t\}$:
\begin{equation}
\text{tOF} = \left\| \text{OF}(b_{t-1}, b_t) - \text{OF}(g_{t-1}, g_t) \right\|_1,
\end{equation}
where $\text{OF}(\cdot,\cdot)$ denotes optical flow estimated using the RAFT network~\cite{teed2020raft}.
The warping error is computed by first estimating optical flow with RAFT, then warping the previous frame to align with the current frame, and finally measuring the mean squared error (MSE). The result is reported in PSNR and denoted as $\text{PSNR}_{\text{warp}}$. Together, tOF and $\text{PSNR}_{\text{warp}}$ provide complementary assessments of temporal coherence from motion and reconstruction perspectives.
}
Running time and FLOPs are measured by upscaling a $256\times448$ LFR video with four frames to an HFR video with seven frames on an NVIDIA GeForce RTX 3090 GPU using PyTorch~1.13.1 for a fair comparison.

 \begin{table*}[t]
\caption{
{Quantitative evaluation of temporal upscaling efficiency and reconstruction fidelity compared to other methods on the UCF101~\cite{soomro2012ucf101}, Vimeo90K~\cite{xue2019video}, and SNU-FILM Medium~\cite{choi2020channel} benchmarks.} 
Here, $\sigma_\text{PSNR}$ denotes the standard deviation of per-frame PSNR values within each upscaled video. “OOM” stands for out-of-memory.
The measurements are made on an NVIDIA GeForce RTX 3090 GPU with PyTorch-1.13.1. {\textbf{Bold}}: best performance, {\underline{Underline}}: second best performance.
}
\centering
 \resizebox{\textwidth}{!}{
 \large
\begin{tabular}{{l l  c  c c c c c c c c c c }}
\toprule
\multirow{2}{*}{\textbf{Downscaler}} & \multirow{2}{*}{\textbf{Upscaler}}& \multicolumn{2}{c}{\textbf{Bitrate-Distortion}} & \multicolumn{2}{c}{\textbf{Upscaling Cost} $\downarrow$ }  & \multicolumn{3}{c}{\textbf{Reconstructed HFR PSNR/SSIM} $\uparrow$} & \multirow{2}{*}{\textbf{$\sigma_\text{PSNR}\downarrow$}}\\
\cmidrule(l){3-4} \cmidrule(l){5-6} \cmidrule(l){7-9} 
 & & {bpp} & {PSNR/SSIM} & {Time(s)} & {TFLOPs} & {UCF101} & {Vimeo90k} & {SNU-FILM} & &\\
 \midrule
 \multirow{9}{*}{\begin{tabular}[c]{@{}c@{}}\textbf{Frame}\\      \textbf{Skipping}\end{tabular}}
 & VFIformer {\small \textcolor{gray}{(CVPR'22)}} \cite{lu2022video} & 0.1652 &  28.40/0.8822   & 6.940 & 1.77 & 28.30/0.8834 & 28.19/0.8719 & 28.73/0.8912 & 5.2513 \\
   & XVFI$_v$ {\small \textcolor{gray}{(ICCV'21)}} \cite{sim2021xvfi} & 0.1652 & 33.96/0.9220 & 1.270 & 0.48 &  31.84/0.9108 & 34.70/0.9194  & 35.35/0.9359 & 1.2722 \\ 
& MoMo {\small \textcolor{gray}{(AAAI'25)}} \cite{lew2025disentangled} & 0.1652    & 34.19/0.9224 & 1.782 & 2.34  & 31.70/0.9104 & 34.93/0.9205 & 35.63/0.9362 & 1.0457  \\
 & EBME-H {\small \textcolor{gray}{(WACV'23)}} \cite{jin2022enhanced} & 0.1652  & 34.22/0.9236 & {\textbf{0.089}} & \textbf{0.06} & 31.96/0.9121 & 34.98/0.9216 & 35.71/0.9375 & 1.0144  \\
& SGM-VFI {\small \textcolor{gray}{(CVPR'24)}} \cite{liu2024sparse} & 0.1652   &  34.26/0.9240 &  1.848 & {1.07} & 32.13/0.9129  & 34.99/0.9215 & 35.67/0.9376 &0.9989  \\
& UPR-Net L {\small \textcolor{gray}{(IJCV'25)}} \cite{jin2025upr} & 0.1652 &  34.30/0.9241   & 0.739 & 0.58 &  32.05/0.9127 & 35.08/0.9220 & 35.76/0.9376  &0.9629  \\
 & IFRNet {\small \textcolor{gray}{(CVPR'22)}} 
 \cite{kong2022ifrnet} &  0.1652 &  34.31/{0.9243}  &  0.423 & 0.30 & {32.09}/{0.9130} & 35.11/0.9222 & 35.74/0.9376 &0.9623  \\
& EMA-VFI {\small \textcolor{gray}{(CVPR'23)}} 
 \cite{zhang2023extracting} & 0.1652    & 34.33/0.9243 & 0.982 & 1.21 & 32.07/0.9129 & 35.12/0.9224 &  35.79/0.9378 &0.9407  \\
& GIMM-VFI {\small \textcolor{gray}{(NIPS'24)}} \cite{guo2024generalizable} & 0.1652 & {34.40}/{0.9251} & {0.867} & 3.21 & {32.16}/{0.9141} & {35.23}/{0.9231} & {35.82}/{0.9381} &0.9229  \\
\midrule
\multirow{5}{*}{\begin{tabular}[c]{@{}c@{}}\textbf{Learnable}\\\textbf{Network}\end{tabular}}
&{STAA {\small \textcolor{gray}{(ECCV'22)}} \cite{xiang2022learning}}& 0.1675 & 34.27/0.9210 & 1.541 & 1.63 & 32.04/0.9092& 35.05/0.9188 & 35.71/0.9349 & 0.9887\\
&CSTVR {\small \textcolor{gray}{(CVPR'25)}} \cite{zhang2025continuous} & - & - & 1.226  & 1.37 &29.45/0.8783 & 31.00/0.8766 & OOM & 3.2115\\ 
&CSTVR{*} {\small \textcolor{gray}{(CVPR'25)}} \cite{zhang2025continuous} & - & - & 1.226 & 1.37 & 31.96/0.9121 & 34.40/0.9233 & OOM & 1.6802\\
&{TVRN-S} & {0.1654} & {\underline{34.65}}/{\underline{0.9280}}  & {\underline{0.310}} & \underline{0.21}  &  {\underline{32.32}/\underline{0.9173}} & {\underline{35.55}/\underline{0.9259}} & {\underline{36.09}/\underline{0.9409}} & \underline{0.8798}  \\ 

&\cellcolor[HTML]{FFEEED}{TVRN} & \cellcolor[HTML]{FFEEED}{\textbf{0.1650}} & \cellcolor[HTML]{FFEEED}\textbf{34.73}/\textbf{0.9307}  & \cellcolor[HTML]{FFEEED}{{0.911}} & \cellcolor[HTML]{FFEEED}{0.97}  & \cellcolor[HTML]{FFEEED}\textbf{32.37}/\textbf{0.9212}  & \cellcolor[HTML]{FFEEED}\textbf{35.65}/\textbf{0.9288}  & \cellcolor[HTML]{FFEEED}{\textbf{36.16}/\textbf{0.9422}} & \cellcolor[HTML]{FFEEED}\textbf{0.8488}  \\ 
\bottomrule
\end{tabular}
}
\label{tab:qualitative_comparison}
   \begin{tablenotes}
      \footnotesize
      \item[1] {*} indicates that the model is retrained using our proposed surrogate network as a gradient estimator.
    \end{tablenotes}
\end{table*}
\begin{figure*}[!t]
\setlength{\tabcolsep}{0.5pt}
\centering

\resizebox{\textwidth}{!}{
\scriptsize
\begin{tabular}{cccccccc>{\columncolor[HTML]{FFEEED}}c}
Ground Truth & UPR-Net L \cite{jin2025upr} & IFRNet \cite{kong2022ifrnet} & EMA-VFI \cite{zhang2023extracting} & GIMM-VFI \cite{guo2024generalizable} & STAA \cite{xiang2022learning} & CSTVR \cite{zhang2025continuous} & CSTVR* \cite{zhang2025continuous} & \textbf{TVRN (Ours)} \\
\includegraphics[width=0.09\textwidth]{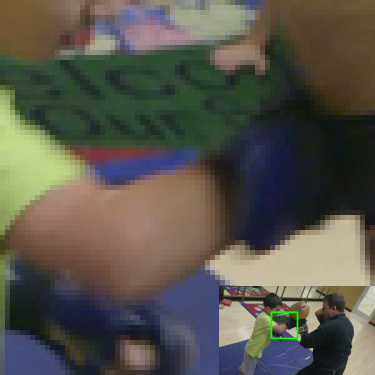}
&\includegraphics[width=0.09\textwidth]{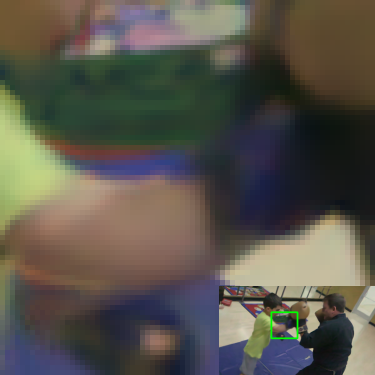}
&\includegraphics[width=0.09\textwidth]{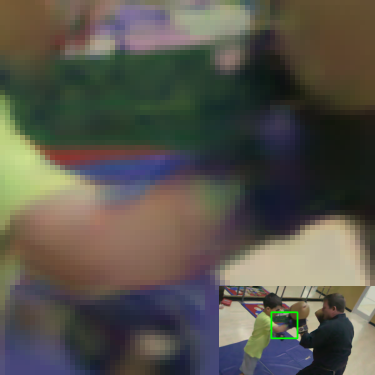}
&{\includegraphics[width=0.09\textwidth]{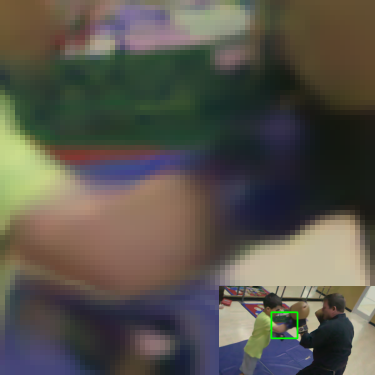}}
&\includegraphics[width=0.09\textwidth]{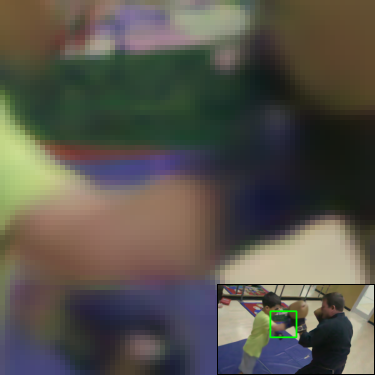}
&\includegraphics[width=0.09\textwidth]{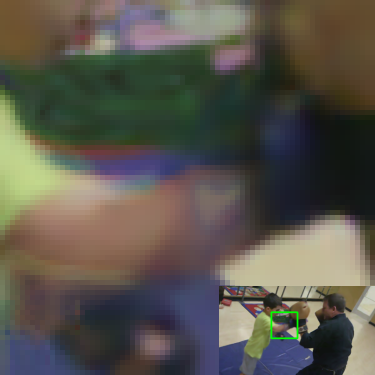}
&\includegraphics[width=0.09\textwidth]{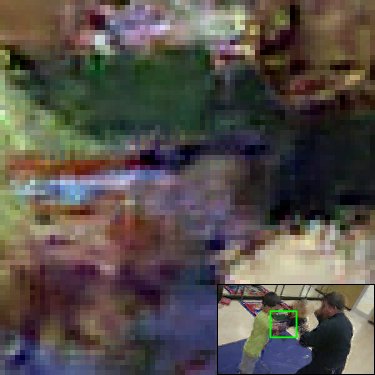}
&\includegraphics[width=0.09\textwidth]{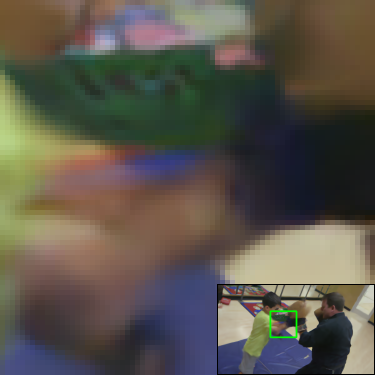}
&\includegraphics[width=0.09\textwidth]{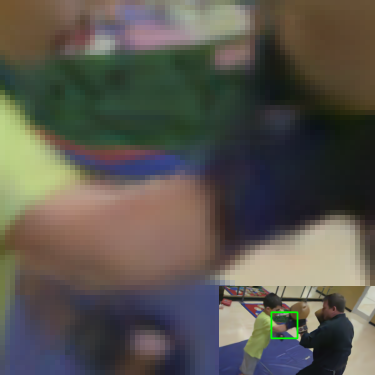}\\
BPP/PSNR &  0.1866/34.99 & 0.1866/34.79 & \secondbest{0.1866}/\secondbest{35.05} &  0.1866/34.98 & 0.2126/34.76 & 0.1847/27.72 & 0.2081/33.86 & \textbf{0.1741}/\textbf{35.28}
\end{tabular}}

\resizebox{\textwidth}{!}{
\scriptsize
\begin{tabular}{cccccccc>{\columncolor[HTML]{FFEEED}}c}
\includegraphics[width=0.09\textwidth]{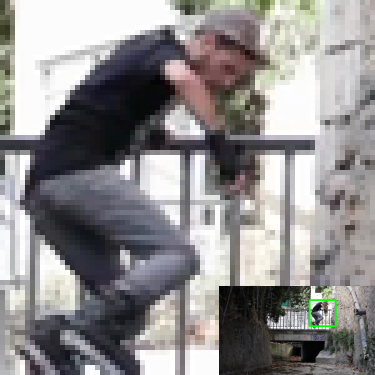}
&\includegraphics[width=0.09\textwidth]{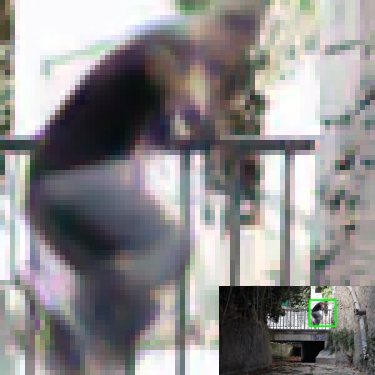}
&\includegraphics[width=0.09\textwidth]{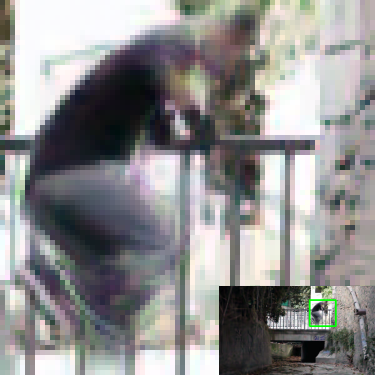}
&\includegraphics[width=0.09\textwidth]{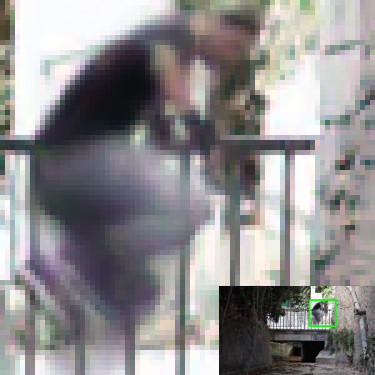}
&\includegraphics[width=0.09\textwidth]{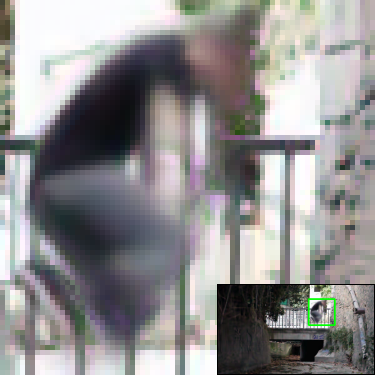}
&\includegraphics[width=0.09\textwidth]{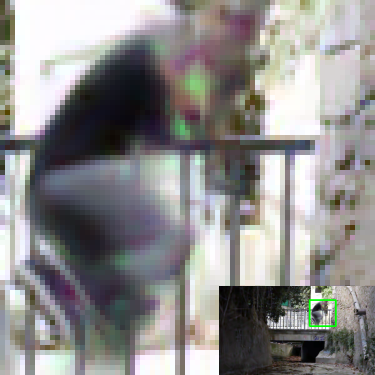}
&\includegraphics[width=0.09\textwidth]{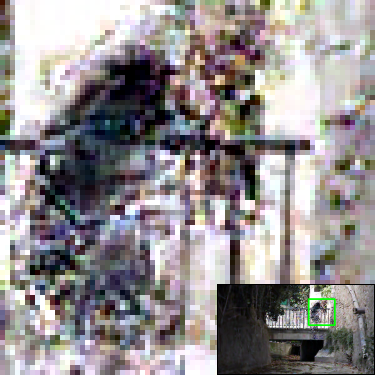}
&\includegraphics[width=0.09\textwidth]{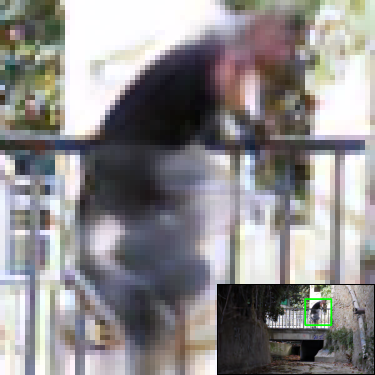}
&\includegraphics[width=0.09\textwidth]{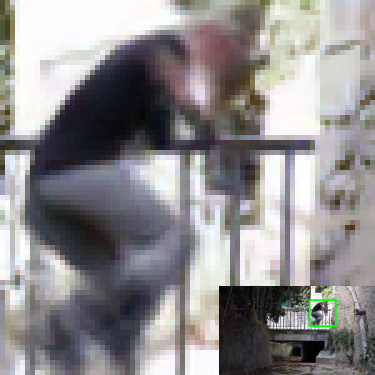}\\
BPP/PSNR  & \secondbest{0.3185}/\secondbest{32.39} & 0.3185/31.05 &  0.3185/30.49 & 0.3185/31.43 & {0.3208}/31.90 & 0.3602/27.65 & 0.3496/30.84 & \textbf{0.3185}/\textbf{32.89} \\
\end{tabular}}

\resizebox{\textwidth}{!}{
\scriptsize
\begin{tabular}{cccccccc>{\columncolor[HTML]{FFEEED}}c}
\includegraphics[width=0.09\textwidth]{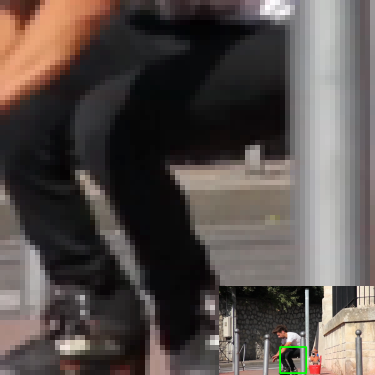}
&\includegraphics[width=0.09\textwidth]{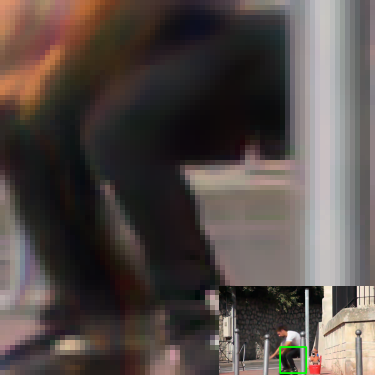}
&\includegraphics[width=0.09\textwidth]{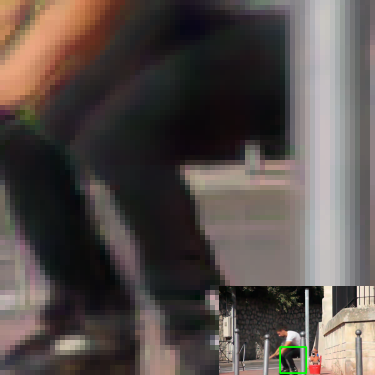}
&\includegraphics[width=0.09\textwidth]{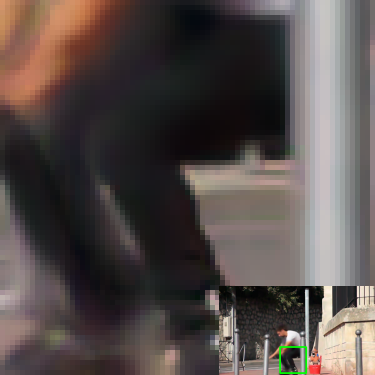}
&\includegraphics[width=0.09\textwidth]{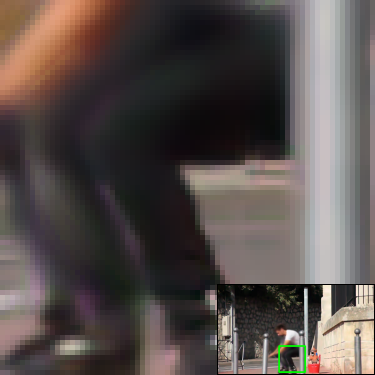}
&\includegraphics[width=0.09\textwidth]{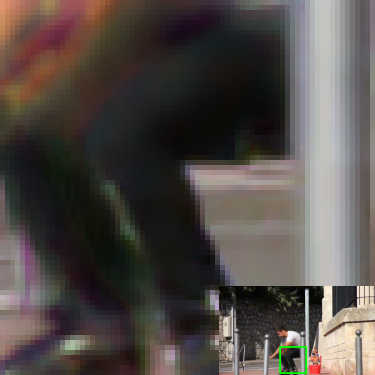}
&\includegraphics[width=0.09\textwidth]{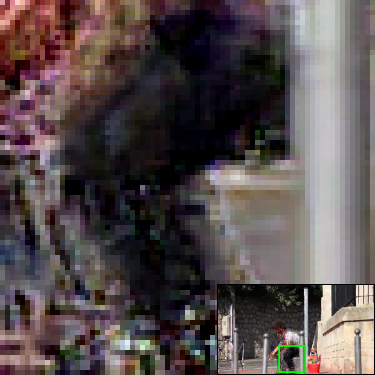}
&\includegraphics[width=0.09\textwidth]{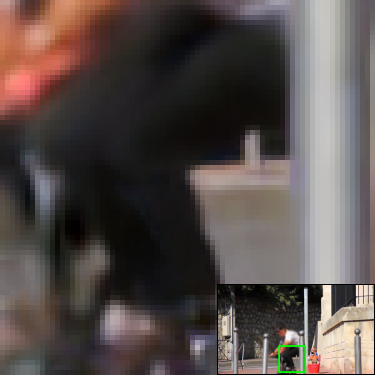}
&\includegraphics[width=0.09\textwidth]{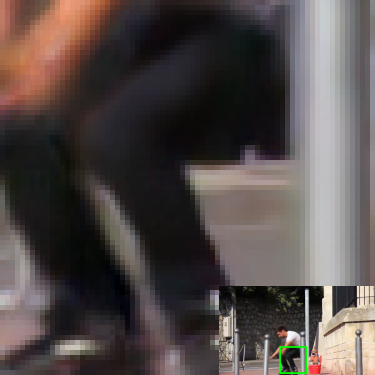}\\
BPP/PSNR & 0.3225/30.13 & 0.3225/30.47 & 0.3225/30.02 & 0.3225/\secondbest{30.51} & 0.3340/30.01 & 0.3282/26.26& \secondbest{0.3167}/30.40 & \textbf{0.3040}/\textbf{30.61} \\
\end{tabular}}

\resizebox{\textwidth}{!}{
\scriptsize
\begin{tabular}{cccccccc>{\columncolor[HTML]{FFEEED}}c}
\includegraphics[width=0.09\textwidth]{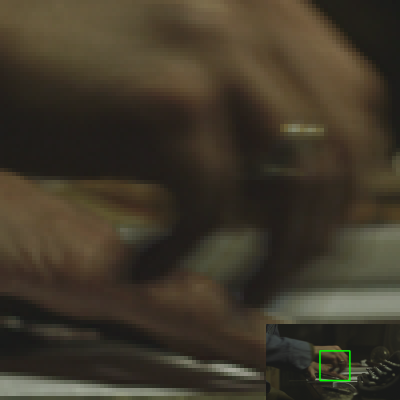}
&\includegraphics[width=0.09\textwidth]{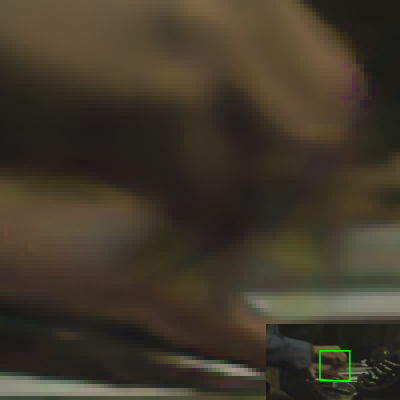}
&\includegraphics[width=0.09\textwidth]{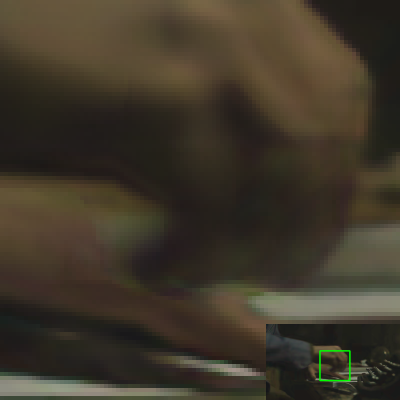}
&\includegraphics[width=0.09\textwidth]{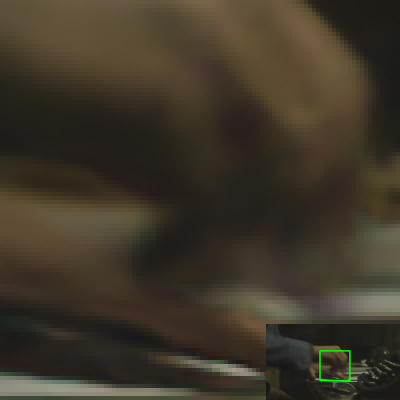}
&\includegraphics[width=0.09\textwidth]{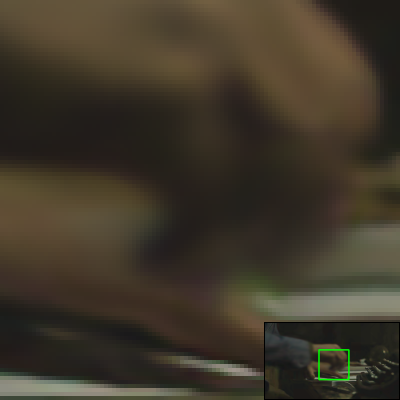}

&\includegraphics[width=0.09\textwidth]{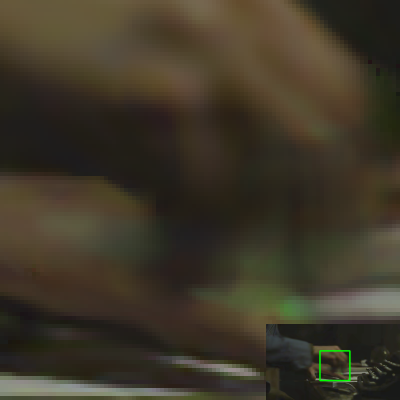}
&\includegraphics[width=0.09\textwidth]{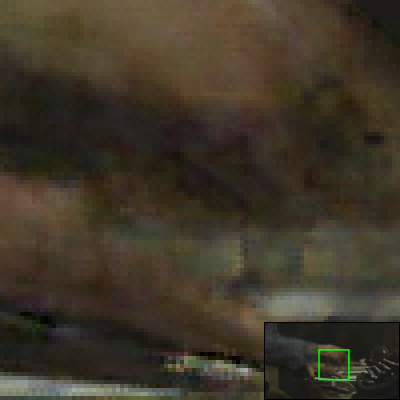}
&\includegraphics[width=0.09\textwidth]{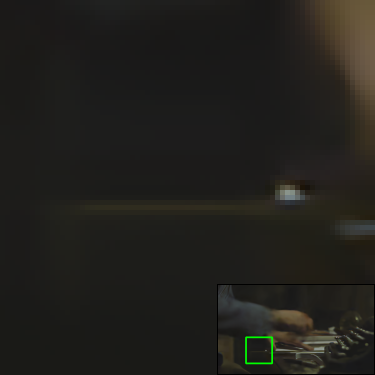}
&\includegraphics[width=0.09\textwidth]{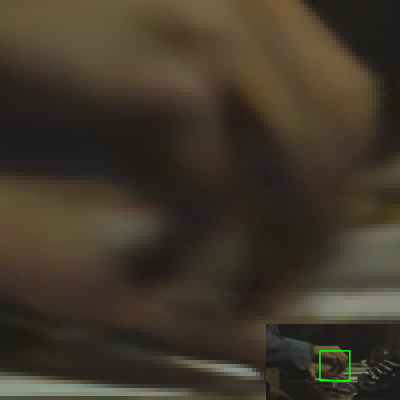}\\
BPP/PSNR &   {0.0771}/{38.61}  & \secondbest{0.0771}/\secondbest{39.03} & 0.0771/38.46 & 0.0771/38.61 & 0.0839/38.40 & 0.0835/34.15 & 0.0891/37.02 & \textbf{0.0764}/\textbf{39.79}
\label{fig: LFR_recon}
 \end{tabular}}


\caption{\textbf{Qualitative comparison between our method and competitive baselines on sequences \textit{00023\_0059}, \textit{00026\_0036}, \textit{00026\_0003}, and \textit{00025\_0034} from the Vimeo90K septuplet dataset}~\cite{xue2019video}.
To facilitate visual comparison, we crop and display regions with noticeable interpolation artifacts. The full frames are shown at the bottom right for context. 
{\textbf{Bold}}: best performance, {{\underline{Underline}}}: second best performance.
\textit{Best viewed by zooming in.}
}
\label{fig:comp_diff_urban100_and_div2k}
\end{figure*}

\subsection{Experimental Setup}
\label{setup}
\textbf{Network setup.}
We adopt the lightweight VFI network UPR-Net \cite{jin2025upr} as the prediction module in MIMO-TWT and utilize Dense3D blocks \cite{iandola2014densenet} as the update module. The surrogate network is employed only during training to estimate gradients of the x265 encoder. During inference, the actual H.265 encoder is used to compress LFR videos. 
For the first frame, a DenseBlock-based surrogate network \cite{tian2021self} directly estimates its compressed representation without temporal references.
We choose a group size of 7 instead of 3 for three reasons: (1) it aligns with the maximum sequence length in the Vimeo90k training set; (2) it represents a more challenging frame-rate rescaling ratio; and (3) it enables a fair comparison with the concurrent method CSTVR \cite{zhang2025continuous}, which adopts the same group size and frame-rate rescaling ratio.

\textbf{Training setup.}
The overall framework is trained in two stages. 
First, the compression encoder and ranker are pre-trained with a pairwise learning-to-rank loss defined in Eq. \ref{equ:L_ranker} for 100,000 iterations at a fixed learning rate of \(1 \times 10^{-4}\). Then, with the compression encoder frozen, we jointly train the invertible network and the surrogate network using an alternate strategy, optimizing Eq.~\ref{equ: total} and Eq.~\ref{equ: surrogate_network} for 50,000 iterations. 
The initial learning rate is set to $1\times10^{-4}$ and is halved every 10,000 iterations. The weight $\lambda$ for guidance loss is set to 10.
We adopt the Adam optimizer \cite{kingma2014adam} with $\beta_1 = 0.9$ and $\beta_2 = 0.999$. 
The entire framework is trained on 8 NVIDIA 1080Ti GPUs with a batch size of 8. In each iteration, the H.265 codec compresses the downscaled LFR videos and extracts metadata such as motion vectors and quantization parameters. The compressed videos and associated metadata are then used to train the surrogate network.
During training, QPs are randomly sampled from integers between 17 and 27 to enhance adaptation to various compression levels. Training the entire model from scratch takes approximately 90 hours.

\begin{table*}[tb]
\caption{BDBR (PSNR/SSIM) results of various methods against frame skipping with GIMM-VFI \cite{guo2024generalizable} on different codecs. 
The BDBR results of CSTVR$^{*}$ \cite{zhang2025continuous} are omitted on the SNU-FILM test set due to out-of-memory issues, and are also unavailable on VVC and AV1 due to insufficient PSNR/SSIM overlap with the anchor for reliable BDrate computation.}
\vspace{3pt}
\centering
\Large
\resizebox{\linewidth}{!}
{%
\begin{tabular}{|l|ccc|ccc|ccc|}
\hline
\multirow{2}{*}{\textbf{Upscaler}} & \multicolumn{3}{c|}{\textbf{HEVC}} & \multicolumn{3}{c|}{\textbf{VVC}} & \multicolumn{3}{c|}{\textbf{AV1}} \\
\cline{2-10}
& \textbf{UCF101} & \textbf{Vimeo90k} & \textbf{SNU-FILM} & \textbf{UCF101} & \textbf{Vimeo90k} & \textbf{SNU-FILM} & \textbf{UCF101} & \textbf{Vimeo90k} & \textbf{SNU-FILM} \\
\hline\hline
IFRNet \cite{kong2022ifrnet}& 8.80/1.78& 4.52/1.64& 2.60/1.28 
& 4.55/1.21 & 2.69/0.75 & -1.67/0.01 
& 10.32/1.32 & 6.04/1.27 & -1.00/0.10 \\

EMA-VFI \cite{zhang2023extracting} & 10.00/1.94& 4.13/1.32 & 1.22/1.00 
& 6.57/1.36 & 2.82/0.32 & 1.40/0.09 
& 15.96/3.70 & 7.27/1.70 & 8.18/1.09 \\

GIMM-VFI \cite{guo2024generalizable}& \textcolor{gray}{0.00/0.00} & \textcolor{gray}{0.00/0.00} & \textcolor{gray}{0.00/0.00} 
& \textcolor{gray}{0.00/0.00}& \textcolor{gray}{0.00/0.00}&\textcolor{gray}{0.00/0.00} 
& \textcolor{gray}{0.00/0.00}&\textcolor{gray}{0.00/0.00} &\textcolor{gray}{0.00/0.00} \\

GIMM-VFI+VQE \cite{guo2024generalizable}  
& -0.08/-0.05 & -0.13/-0.27 & -0.31/-0.46 
& -1.10/-0.67 & -0.45/-0.42 & -0.75/-0.80 
& -1.24/-0.85 & -0.76/-0.85 & -1.24/-1.30 \\

\hline

STAA \cite{xiang2022learning} 
& 12.26/11.41& 11.45/11.14& 4.10/9.48 
& 7.15/26.37 & 4.64/9.82 & 1.08/2.99 
& 19.66/93.81 & 11.30/24.78 & 11.38/7.68 \\

CSTVR{*} \cite{zhang2025continuous}
& 2.78/-8.56 & 21.94/-2.52&  -- 
& -- & -- & -- 
& - & -- & -- \\

\cellcolor[HTML]{FFEEED}TVRN 
& \cellcolor[HTML]{FFEEED}\textbf{-7.22/-16.40}& \cellcolor[HTML]{FFEEED}\textbf{-6.68/-3.88} & \cellcolor[HTML]{FFEEED}\textbf{-12.82/-10.62}  
& \cellcolor[HTML]{FFEEED}\textbf{-17.53/7.83} & \cellcolor[HTML]{FFEEED}\textbf{-15.56/-16.92} & \cellcolor[HTML]{FFEEED}\textbf{-24.02/-20.15} 
& \cellcolor[HTML]{FFEEED}\textbf{-35.76/-4.85} & \cellcolor[HTML]{FFEEED}\textbf{-34.94/-29.21} & \cellcolor[HTML]{FFEEED}\textbf{-42.61/-29.22} \\

\hline
\end{tabular}%
}
\label{tab:bdbr_performance}
\end{table*}

\begin{figure*}[!t]
\centering
\setlength{\tabcolsep}{0pt}
\renewcommand{\arraystretch}{1.0}
\scriptsize
\begin{tabular}{c}
\includegraphics[width=\textwidth]{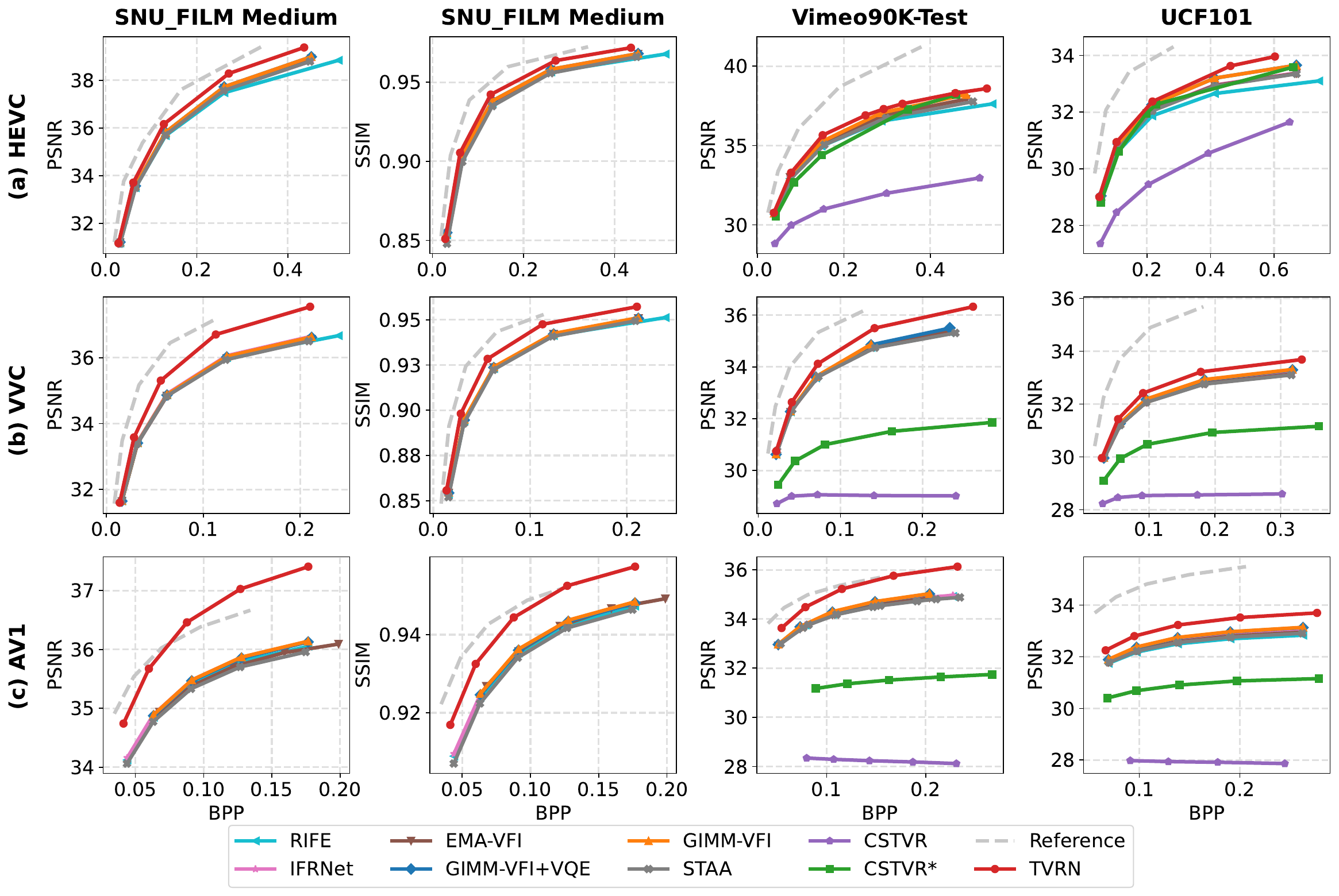} \\
\end{tabular}
\vspace{-10pt}
\caption{\textbf{Rate–Distortion Curves on test datasets}. 
For frame-skipping methods \cite{kong2022ifrnet,zhang2023extracting,guo2024generalizable}, fixed QPs of 18, 22, 27, 32, and 37 are used for HEVC and VVC, and 20, 26, 32, 38, and 44 for AV1. For learned frame-rate downscaling methods \cite{xiang2022learning,zhang2025continuous}, the QP is slightly increased to align bitrates.
}

\label{fig:rd_curve}
\end{figure*}

\begin{figure*}[!h]
  \centering
    \includegraphics[width=\linewidth]{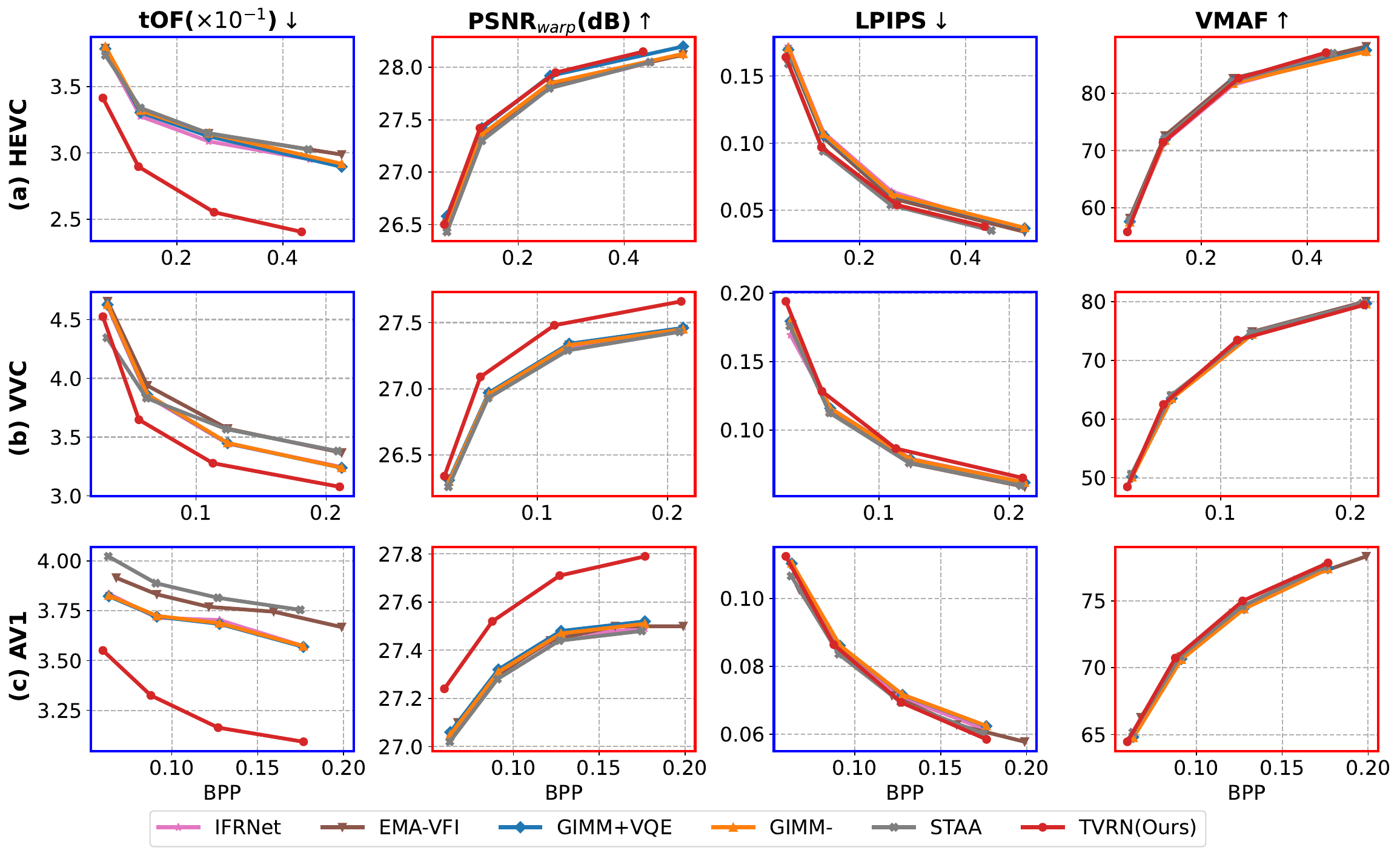}
    \vspace{-1em}
\caption{\textbf{Rate–Perception Curves on the SNU-FILM Medium testset \cite{choi2020channel}}. 
Temporal consistency is evaluated using tOF \cite{chu2020learning} and $\text{PSNR}_{\text{warp}}$ \cite{lai2018learning}, 
while perceptual quality is assessed by LPIPS \cite{zhang2018unreasonable} and VMAF \cite{li2016toward}. 
Red boxes denote metrics where higher values indicate better performance, 
whereas blue boxes denote metrics where lower values are preferred.}
    \label{fig:bd_vary_psnr_perceptual}
\end{figure*}

\textbf{Inference setup.}
Inference experiments are conducted on a server equipped with a single NVIDIA RTX 3090 GPU. The input video is divided into non-overlapping clips with a temporal length of 7, each of which is rescaled independently. For videos that cannot be evenly divided, the last frame is duplicated to pad the final clip.
Following the prior work CSTVR \cite{zhang2025continuous}, we set the frame-rate downscaling ratio to $4/7$ and further compress the downscaled video using the H.265 codec in \textit{zero-latency} mode \footnote{\texttt{ffmpeg -pix\_fmt yuv444p -s WxH -r 50 -i video.yuv -c:v libx265 -preset veryfast -tune zerolatency -x265-params "qp=QP"}}.
{
 We further extend the evaluation to include both AV1 \cite{han2021av1overview} and VVC \cite{bross2021vvcoverview}. 
Specifically, for AV1, we adopt the SVT-AV1 encoder with preset 4 in CQP mode
\footnote{\texttt{ffmpeg -pix\_fmt yuv444p -s WxH -f rawvideo -i video.yuv -c:v libsvtav1 -svtav1-params crf=0:qp=QP:chromafmt=444 -preset 4}}.
For VVC, we use the VVenC encoder ~\cite{VVenC} in fast preset under CQP mode \footnote{\texttt{ffmpeg -pix\_fmt yuv444p -s WxH -f rawvideo -i video.yuv -c:v libvvenc -qp QP -vvenc-params qpa=0 -preset fast -b:v 0 -maxrate 0 -bufsize 0}}. For frame-skipping-based baselines, the tested QPs are 20, 26, 32, 38, and 44 for AV1, and 18, 22, 27, 32, and 37 for HEVC and VVC. For learning-based temporal rescaling methods, we increase the QPs by 1 to better align the resulting bitrate with the baseline methods.
}

\subsection{Comparison to Baseline Methods}
We consider two categories of temporal video rescaling methods as our baselines:
(1) direct frame skipping followed by upscaling using various VFI models \cite{lu2022video, sim2021xvfi, lew2025disentangled, jin2022enhanced, liu2024sparse, jin2025upr, kong2022ifrnet, zhang2023extracting, guo2024generalizable}; and
(2) learnable frame-rate downscaling methods \cite{xiang2022learning, zhang2025continuous}.
Here, STAA \cite{xiang2022learning} performs downscaling using 3D convolutional layers and reconstructs LFR videos using space-time pixel-shuffle and deformable convolutional layers. The concurrent method CSTVR \cite{zhang2025continuous} also employs a \textit{partially} invertible architecture to embed motion information into LFR videos.
We fine-tune VFI models on LFR videos compressed by the H.265 codec to restore HFR videos for a fair comparison. To compare our RD performance against baselines, we keep the bpp on UCF101, Vimeo90K, and SNU-FILM Medium to be around 0.2158, 0.1480, and 0.1317, respectively, by adjusting the quantization parameters of the H.265 codec.

\begin{figure*}[t]
  \centering
  \includegraphics[width=\textwidth]{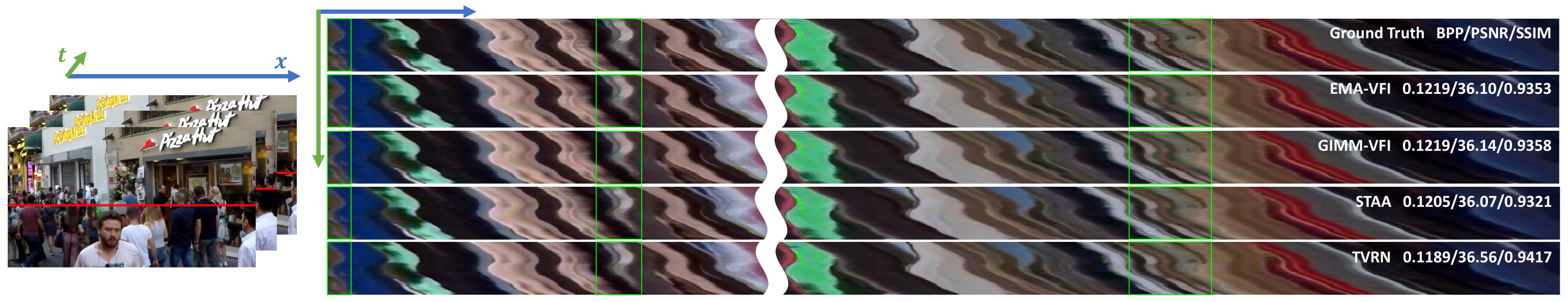}
  \caption{
\textbf{Comparison of temporal profiles in reconstructed high-frame-rate videos.} 
We examine a specific row across consecutive frames to assess temporal consistency. The red line indicates row 400 of the \textit{GOPR0384\_11\_05} sequence from the SNU-FILM test dataset. The visualization spans 50 frames starting from the 200th frame. Error-prone regions are marked with green boxes and arrows. 
\textit{Best viewed by zooming in.}
  }
  \label{fig:temporal_diagram}
\end{figure*}

\textbf{Upscaling efficiency and HFR fidelity.}
Table \ref{tab:qualitative_comparison} compares the upscaling efficiency and reconstruction fidelity of all methods at an average bitrate of 0.1650 bpp. The first group summarizes the methods based on frame skipping followed by VFI models, while the second group  is dedicated to temporal video rescaling methods, which include jointly learned downscaling networks.
Our proposed TVRN framework outperforms all other methods, achieving state-of-the-art performance: 32.37 dB on UCF101, 35.65 dB on Vimeo90K-Test, and 36.16 dB on SNU-FILM. Compared to the most recent VFI model, our method achieves improvements of +0.21 dB, +0.42 dB, and +0.34 dB on these three datasets, respectively. Furthermore, the lightweight variant TVRN-S, which omits multi-frame enhancement modules, reduces FLOPs to 6.54\% and upscaling time to 35.75\% of GIMM-VFI \cite{guo2024generalizable}, thanks to its multi-input multi-output parallel inference strategy.

We observe that state-of-the-art temporal video rescaling methods, such as STAA \cite{xiang2022learning} and CSTVR \cite{zhang2025continuous}, fall short of VFI-based approaches in terms of HFR reconstruction quality. This is mainly because high-frequency components embedded in downscaled LFR videos are often distorted by lossy compression, making them difficult for the upscaler to recognize and utilize during frame interpolation. The large performance gains achieved by our framework over prior learnable rescaling methods can be attributed to two key factors: (1) the invertible architecture that effectively regularizes information loss during downscaling, and (2) the surrogate network that accurately mimics the distortion introduced by non-differentiable video codecs. 
To surgically study the contribution of the surrogate network, we adopt the same architecture as CSTVR, replace its gradient estimator with our proposed surrogate network, and retrain the model from scratch under identical configurations
\footnote{The original pretrained weights of CSTVR are available at \url{https://gitee.com/zhanghahaxixi/cvrs}. The retrained CSTVR* weights are available on our project homepage: \url{https://github.com/fengxinmin/TVRN_public/releases}.}.
As shown in Table \ref{tab:qualitative_comparison}, the resulting model CSTVR* outperforms the original CSTVR by a large margin, achieving 3.40 dB higher PSNR (\textbf{34.40} vs. 31.00) and a 0.0493 improvement in SSIM (\textbf{0.9259} vs. 0.8766) on the Vimeo90k test set, strongly validating the effectiveness of our surrogate network. 
Note that while CSTVR* surpasses VFI-based methods in SSIM, it still lags in PSNR. This is primarily due to the downscaling and upscaling processes are not fully reversible and thus cannot effectively regularize information loss.

We present a visual comparison of the reconstructed HFR videos in Fig. \ref{fig:comp_diff_urban100_and_div2k}. From left to right, we show the ground truth and the results of VFI-based methods \cite{jin2022enhanced, jin2025upr, kong2022ifrnet, zhang2023extracting, guo2024generalizable}, learnable temporal video rescaling methods \cite{xiang2022learning, zhang2025continuous}, and our framework. Compared to other approaches, our framework  reconstructs object positions and recovers finer texture details more accurately.
For instance, in the second row, only our method successfully recovers the position of the human foot, while other models exhibit severe artifacts or color shifts. In the last row, competing methods fail to reconstruct the ring finger, whereas ours restores both its shape and position.
These challenging cases highlight how rapid motion complicates reconstructing HFR videos, particularly when upscaling is performed independently. Our framework addresses this limitation by learning a frame-rate downscaling strategy that embeds high-frequency motion information into the downsampled LFR frames, facilitating more effective HFR reconstruction during upscaling.
Full sequences are provided in the supplementary material. 
{Moreover, we compare perceptual metrics in Fig. \ref{fig:bd_vary_psnr_perceptual}, including LPIPS \cite{zhang2018unreasonable} and VMAF \cite{li2016toward}. Our method achieves competitive perceptual quality compared to baseline approaches and shows slight improvements at higher bitrates, which may be attributed to the use of an L1-based training objective.  }

Moreover, we compare the temporal consistency of different approaches. As shown in Table~\ref{tab:qualitative_comparison}, our method effectively reduces frame quality fluctuation on the test video sequences, as indicated by the smaller standard deviation ($\sigma_\text{PSNR}$) of per-frame PSNR values in the upscaled videos. This is mainly due to our model’s improved interpolation quality, as evidenced by complete video results in the supplementary material.
We also visualize the temporal variations in Fig.~\ref{fig:temporal_diagram}.
{Furthermore, we provide quantitative metrics for temporal consistency in Fig. \ref{fig:bd_vary_psnr_perceptual}, including tOF \cite{chu2020learning} and wapring error \cite{lai2018learning}, which clearly indicate that the proposed method achieves significant improvements in temporal consistency. }
Compared to other methods, our framework not only reconstructs finer details but also effectively suppresses temporal inconsistencies across frames subjected to varying compression levels.

{
Finally, we conduct a Mean Opinion Score (MOS) study, where 25 participants (21 males, 4 females) aged 22$\sim$31 rate 61 video pairs on a scale of 1 to 5 in terms of visual quality and temporal consistency. The interface included standard media controls to facilitate precise visual comparison of the sequences. The presentation order was randomized to avoid bias. According to  ITU-T P.913 \cite{itu2016p913}, the Pearson correlation between the data provided by each subject and the average of all resulted in no subject being removed because of being considered
an outlier.  The MOS results are shown in Figure \ref{fig:user_study_results}, where 95\% confidence intervals are included to properly measure the agreement between subjects, according to the ITU-R BT.500-13 \cite{itu2012bt500}. 
Our method achieves the highest average MOS score of 4.02, outperforming the second-best method GIMM \cite{guo2024generalizable} with a score of 3.69, indicating a relatively clear preference for our method. Qualitative comparisons are provided in the supplementary material, and the user study interface is available on our project page.
}

\begin{table}[tb]
\caption{
Quality comparison (PSNR in dB and SSIM) of {compressed LFR videos} produced by different methods.  The original input video is used as the reference.
}
\centering
\begin{tabular}{|l|c|c|c|} 
\hline
\textbf{Downscaler} & \textbf{UCF101} & \textbf{Vimeo90k} & \textbf{SNU-FILM} \\
\hline 
\hline
Frame Skipping   &   35.94/0.9292 & 36.84/0.9254 &  37.13/0.9392  \\ \hline
STAA  \cite{xiang2022learning}  &    35.43/0.9221    &    36.26/0.9142    &    36.71/0.9317       \\
CSTVR{*} \cite{zhang2025continuous}   &   35.25/0.9150 &  36.29/0.9132  & --  \\ 
\rowcolor[HTML]{FFEEED}
TVRN   &   \textbf{35.56/0.9230}   &      \textbf{{36.32}/0.9150}  &  \textbf{36.75/0.9320}         \\
\hline
\end{tabular}
\label{tab:LFR_quality}
\end{table}

\begin{figure}[t]
  \centering

\setlength{\tabcolsep}{0.5pt}
\centering
\resizebox{\linewidth}{!}{
\scriptsize
\begin{tabular}{cccc>{\columncolor[HTML]{FFEEED}}c}
 {Ground Truth} & Frame Skipping & STAA~\cite{xiang2022learning} & CSTVR* \cite{zhang2025continuous}  & \textbf{TVRN}\\
\includegraphics[width=0.09\textwidth]{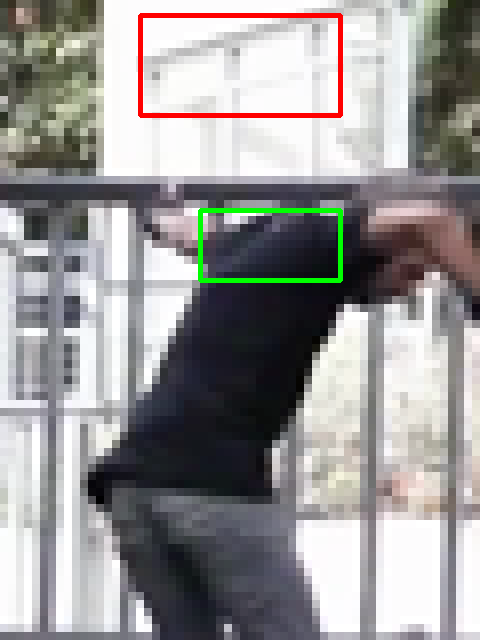}
&\includegraphics[width=0.09\textwidth]{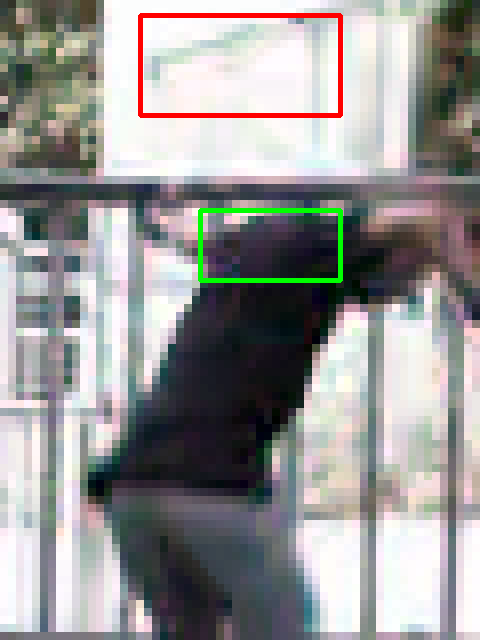}
&\includegraphics[width=0.09\textwidth]{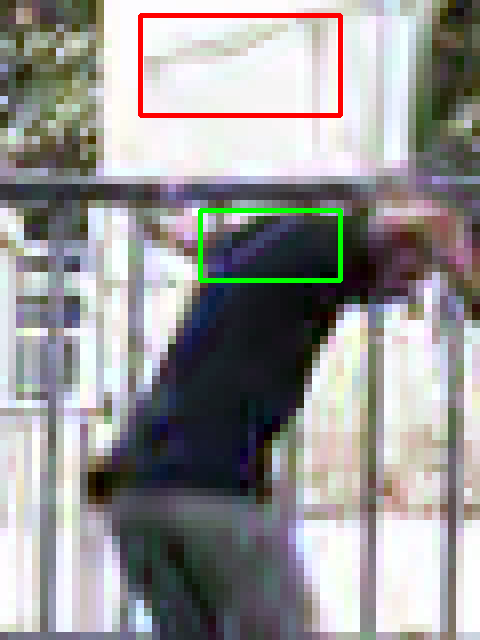}
&\includegraphics[width=0.09\textwidth]{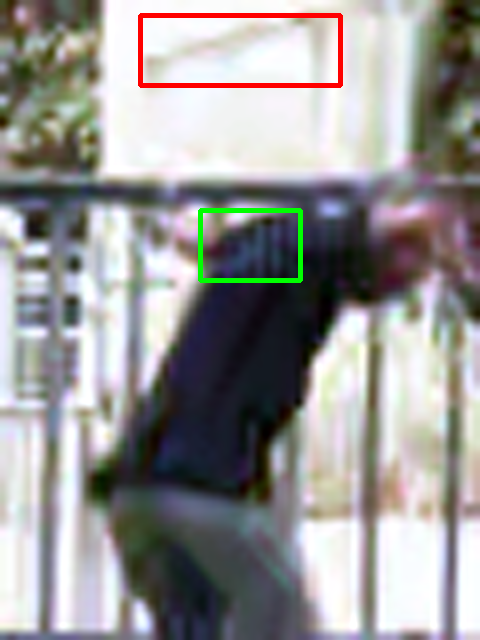}
&\includegraphics[width=0.09\textwidth]{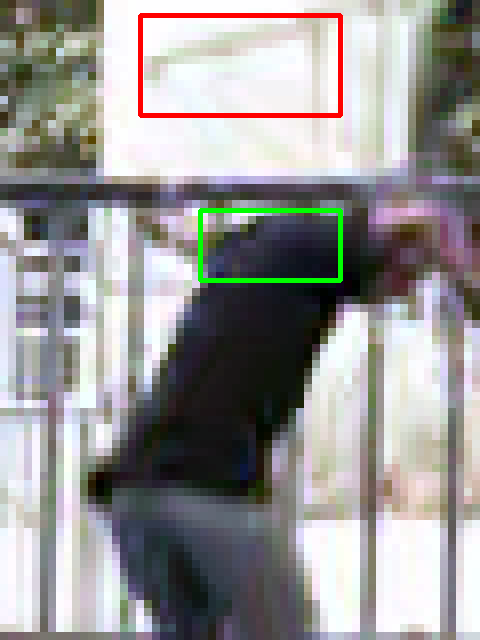}\\
\end{tabular}}

\resizebox{\linewidth}{!}{
\scriptsize
\begin{tabular}{ccccc}
\includegraphics[width=0.09\textwidth]{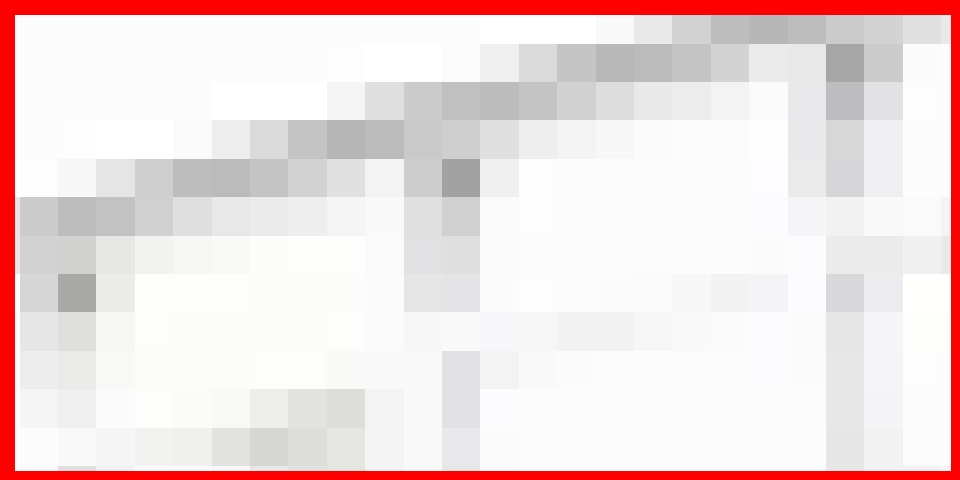}
&\includegraphics[width=0.09\textwidth]{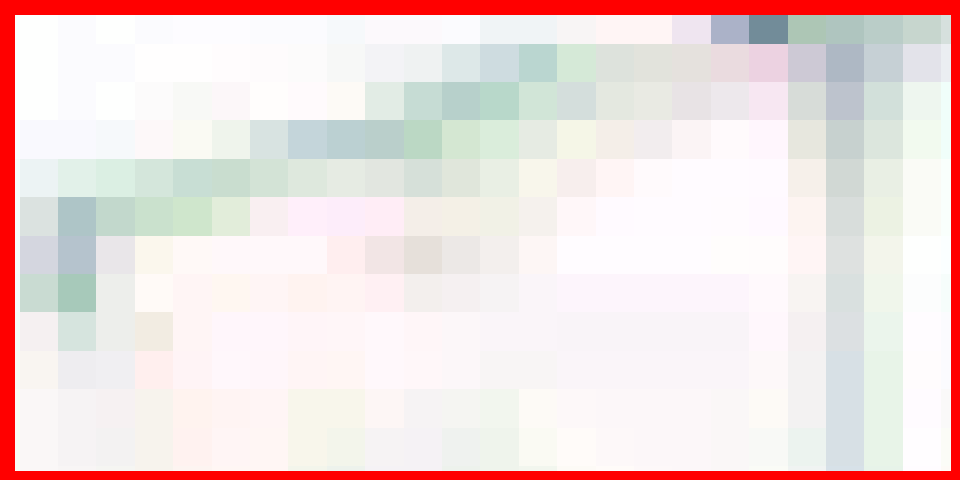}
&\includegraphics[width=0.09\textwidth]{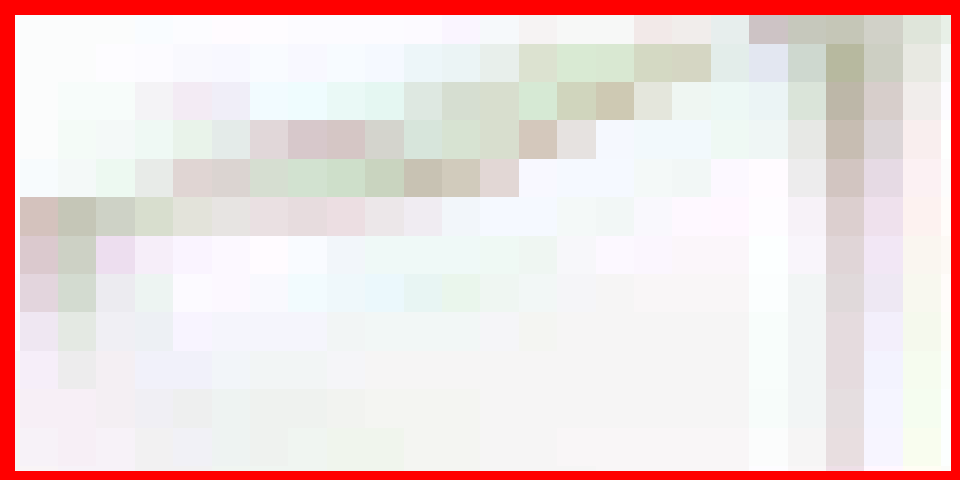}
&\includegraphics[width=0.09\textwidth]{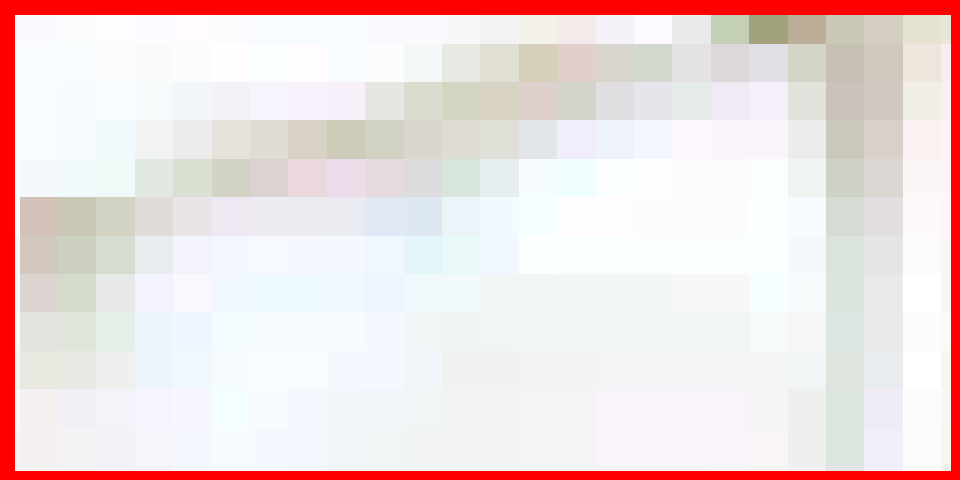}
&\includegraphics[width=0.09\textwidth]{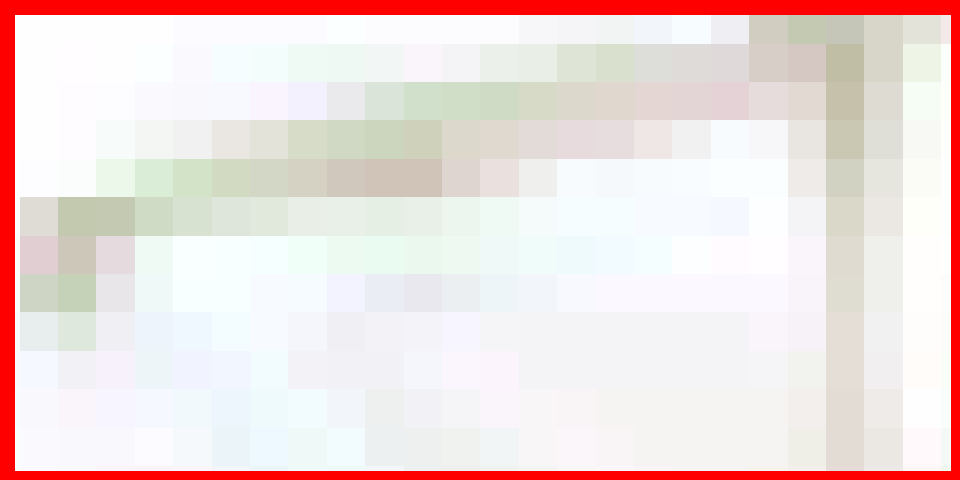}\\
\end{tabular}}

\resizebox{\linewidth}{!}{
\scriptsize
\begin{tabular}{ccccc}
\includegraphics[width=0.09\textwidth]{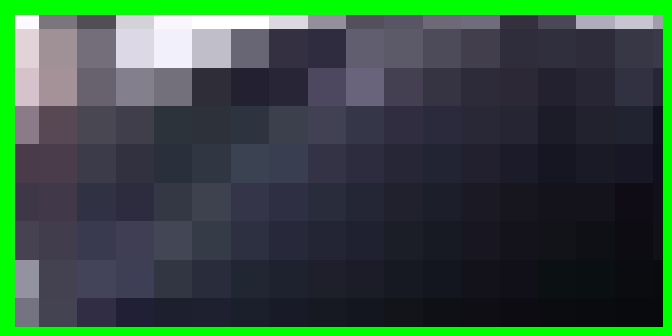}
&\includegraphics[width=0.09\textwidth]{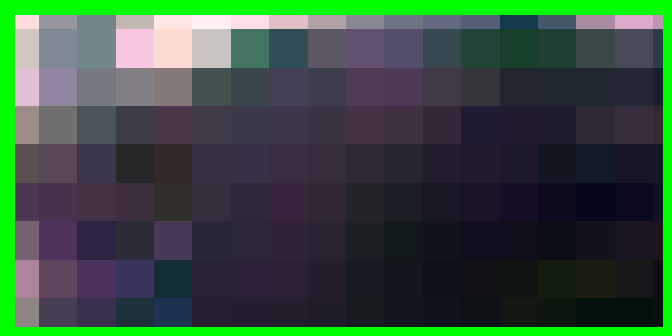}
&\includegraphics[width=0.09\textwidth]{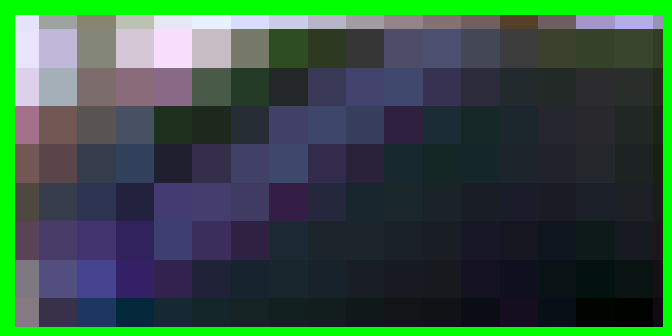}
&\includegraphics[width=0.09\textwidth]{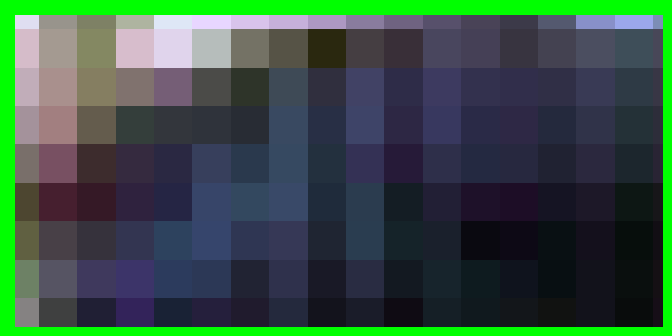}
&\includegraphics[width=0.09\textwidth]{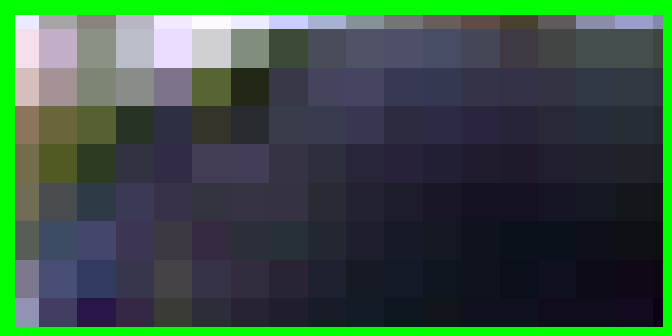}\\

BPP/PSNR &  0.3185/34.27 & 0.3208/33.39 & 0.3496/33.16 & \cellcolor[HTML]{FFEEED}\textbf{0.3185/33.77} 
\end{tabular}}

\caption{
\textbf{Qualitative comparison on sequence \textit{00026\_0036} from the Vimeo90K septuplet dataset, showing low-frame-rate frames compressed by HEVC}. Compared to frame-skipping, learned frame-rate downscaling methods better suppress color shifts ({red} box) and preserve fine details ({green} box). However, since high-frequency motion information is implicitly embedded in the downscaled frames, these methods typically yield lower PSNR than directly compressing the original frames.
}

  \label{fig: LFR_viz}
\end{figure}

\textbf{LFR qualitative evaluation.}
For temporal video rescaling, the visual quality of the LFR videos is also critical, as users with limited resources may preview them directly. As shown in Table~\ref{tab:LFR_quality}, our method achieves better LFR video quality compared to other temporal video rescaling approaches in terms of both PSNR and SSIM. 
For instance, our framework outperforms CSTVR* by 0.31 dB on the UCF101 test dataset.
Nevertheless, since downscaled videos are embedded with high-frequency temporal information from neighboring frames, their quality is inevitably lower than that of the compressed original frames. This degradation remains within an acceptable range.
Fig.~\ref{fig: LFR_viz} presents a qualitative comparison of LFR frames produced by different downscalers. Our approach generates visually pleasing results with fewer compression artifacts, such as blocking effects and color shifts, compared to directly compressing the original high-frame-rate frames.

\textbf{Rate-distortion performance.}
While the previous analysis focuses on reconstruction quality at a fixed bitrate, we further evaluate the RD performance of reconstructed HFR and downscaled LFR videos across varying compression levels. The RD curves on the test datasets are shown in Fig.~\ref{fig:rd_curve}. {Here, we denote directly encoding HFR videos as “Reference,” which serves as an upper-bound reference for temporal rescaling methods, as it compresses and transmits the intermediate frames directly.} To further quantify bitrate savings, we report Bjøntegaard Delta Bit Rate (BD-rate) \cite{bjontegaard2001calculation} results on three benchmark datasets, comparing our method with the state-of-the-art VFI model, GIMM-VFI \cite{guo2024generalizable}. For fairness, we also apply the same VQE model \cite{luo2022spatio} before GIMM-VFI for artifact removal, denoted as “GIMM-VFI{+VQE}.”  
Experimental results show that our approach is the \textbf{first} learnable temporal video rescaling method to surpass state-of-the-art VFI models in RD performance. Compared to GIMM-VFI, our method achieves BD-rate improvements (PSNR/SSIM) of -7.22\% / -16.40\%, -6.88\% / -3.88\%, and -12.82\% / -10.62\% on the UCF101, Vimeo90K, and SNU-FILM test sets, respectively.
Our approach is particularly effective at higher bitrates, where the HEVC codec better preserves the high-frequency information embedded in the downscaled LFR videos.

\begin{figure}[t]
\centering
    \begin{minipage}{0.5\textwidth}
        \centering
        \includegraphics[trim=0cm 0cm 0cm 0cm, width=0.75\textwidth]{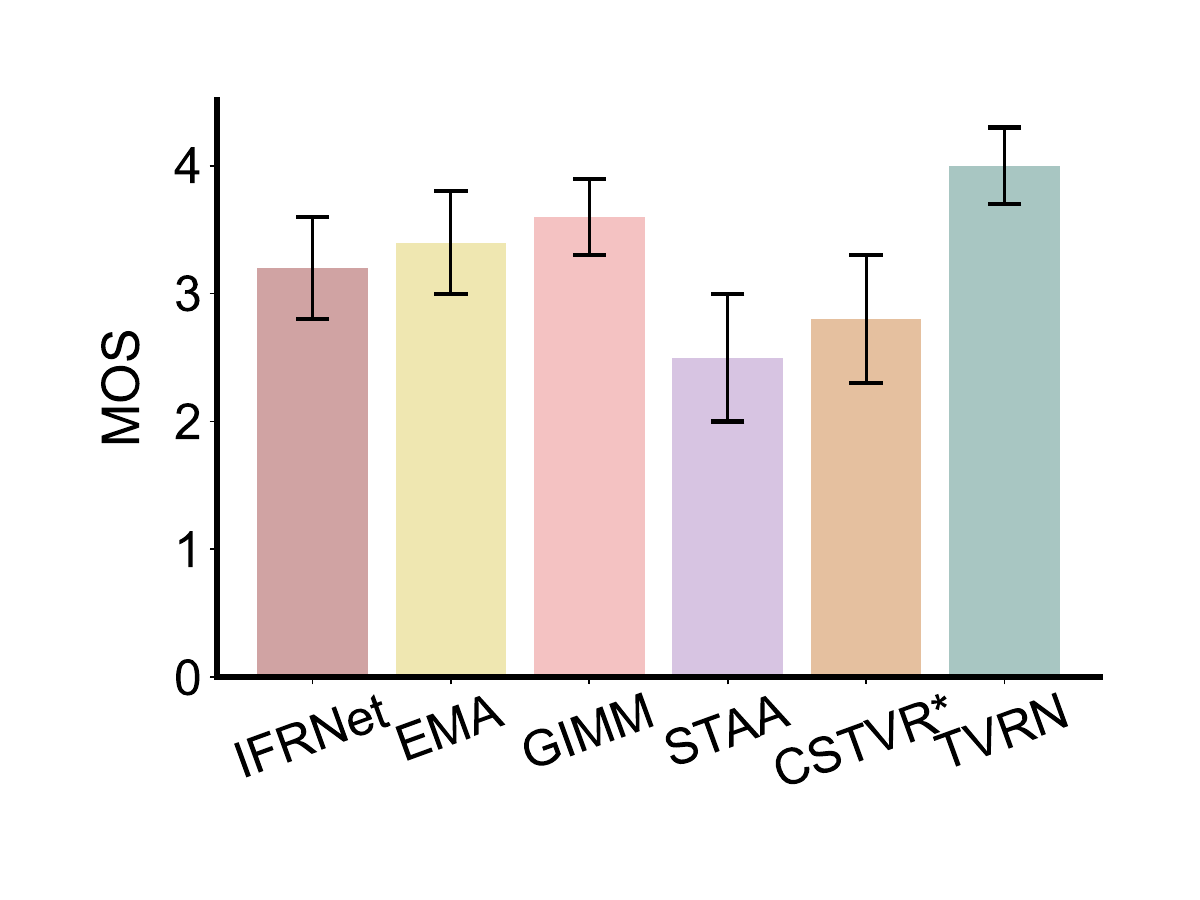}
        \caption{\textbf{Subjective quality comparison in terms of Mean Opinion Score (MOS).} 
Scores are averaged over all users, and error bars indicate standard deviation. Higher MOS values indicate better perceived visual quality.} 
        \label{fig:user_study_results}
    \end{minipage}
\end{figure}

\subsection{Ablation Studies}
\label{ablation}

\begin{table}[!t]
\centering
\caption{Ablation study of different gradient simulation strategies. Experiments are conducted on the Vimeo90K-Septuplet test set. 
BDBR is calculated using GIMM-VFI \cite{guo2024generalizable} as the baseline.}
\resizebox{0.5\textwidth}{!}{\Large
\begin{tabular}{|l|cccc|ccc|}
\hline
\multirow{2}{*}{\textbf{Method}}                   & \multicolumn{4}{c|}{\textbf{Simulation PSNR} }  &\multirow{2}{*}{\textbf{BDBR}}& \multirow{2}{*}{\textbf{Params.}} & \multirow{2}{*}{\textbf{FLOPs}}\\ 
\cline{2-5} 
 &  QP22 & QP27 & QP32 & QP37  & & & \\
\hline\hline
STE \cite{bengio2013estimating}              & 43.80 & 43.06 & 41.14 & 38.19  &          566.77\% &  N/A   & N/A \\ 
DenseBlock~\cite{tian2021self}               & 44.98 & 43.38 & 41.85 & 39.04     &     192.99\%    &  77.38M  & 2.39T \\ 
TINN (w/o Q-Inv)        & 45.88 & 43.40 & 42.25 & 39.40  &           -2.85\%&    \textbf{0.28M}  & \textbf{37.7G} \\
\rowcolor[HTML]{FFEEED}
TINN       & \textbf{46.58} & \textbf{43.75} & \textbf{43.44} &\textbf{41.86} &         \textbf{-12.82\%}&    0.28M  & 37.7G \\  
\hline
\end{tabular}
}
\label{table:error}
\end{table}

\begin{figure}[!t]
\setlength{\tabcolsep}{0.5pt}
\centering
\resizebox{\linewidth}{!}{
\scriptsize
\begin{tabular}{ccccc}
 {HEVC} & STE \cite{bengio2013estimating} & DenseBlock~\cite{tian2021self} & TINN {\tiny (w/o Q-Inv)} & TINN\\
\includegraphics[width=0.09\textwidth]{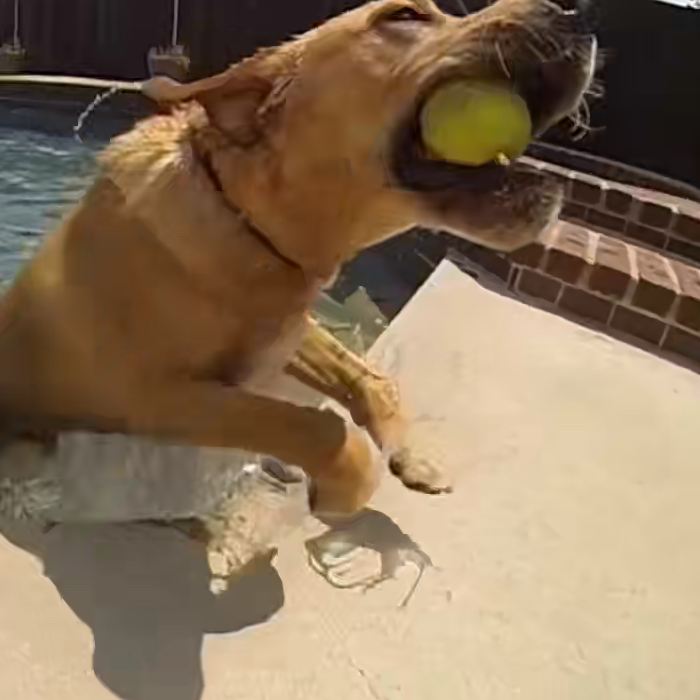}
&\includegraphics[width=0.09\textwidth]{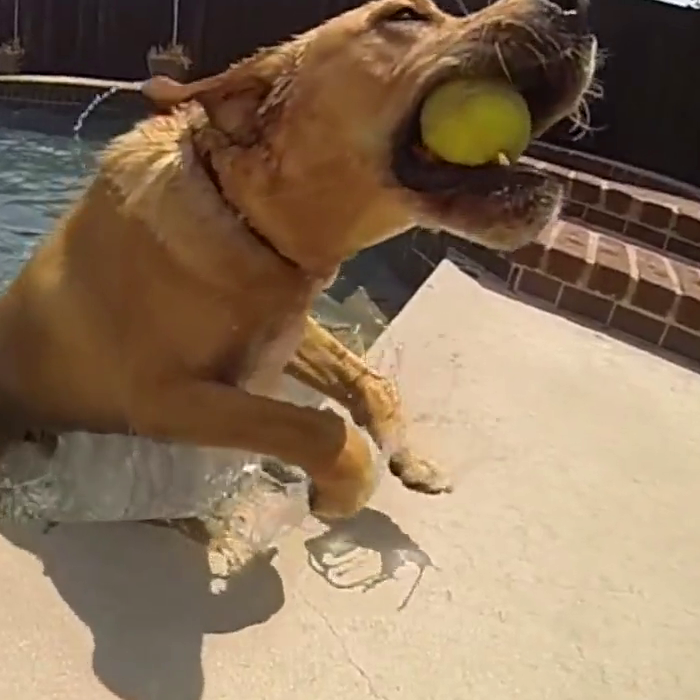}
&\includegraphics[width=0.09\textwidth]{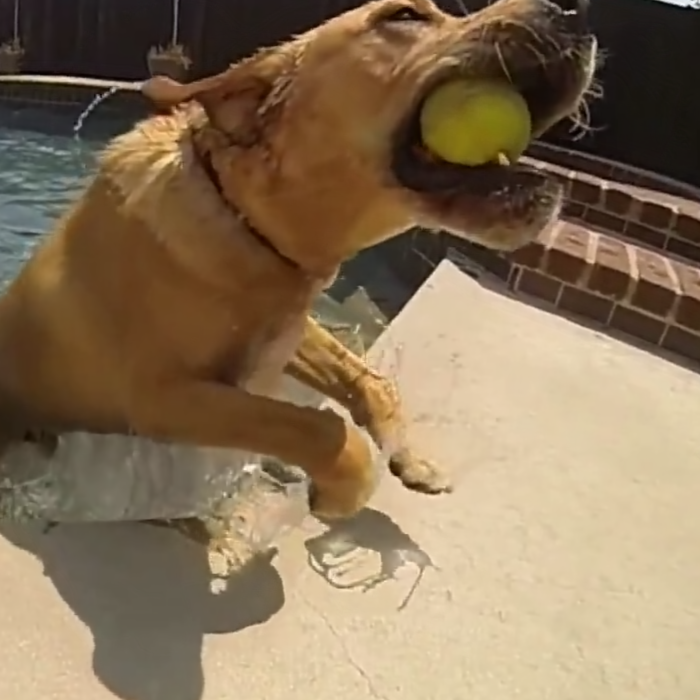}
&\includegraphics[width=0.09\textwidth]{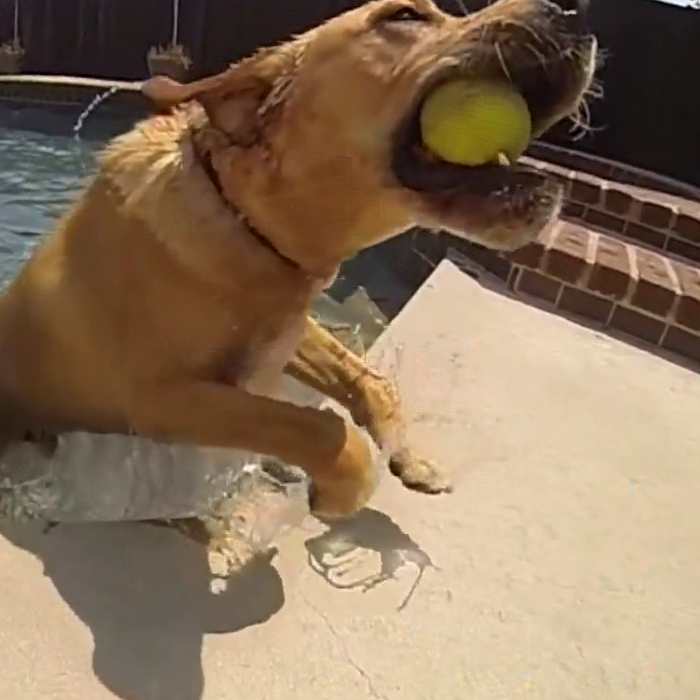}
&\includegraphics[width=0.09\textwidth]{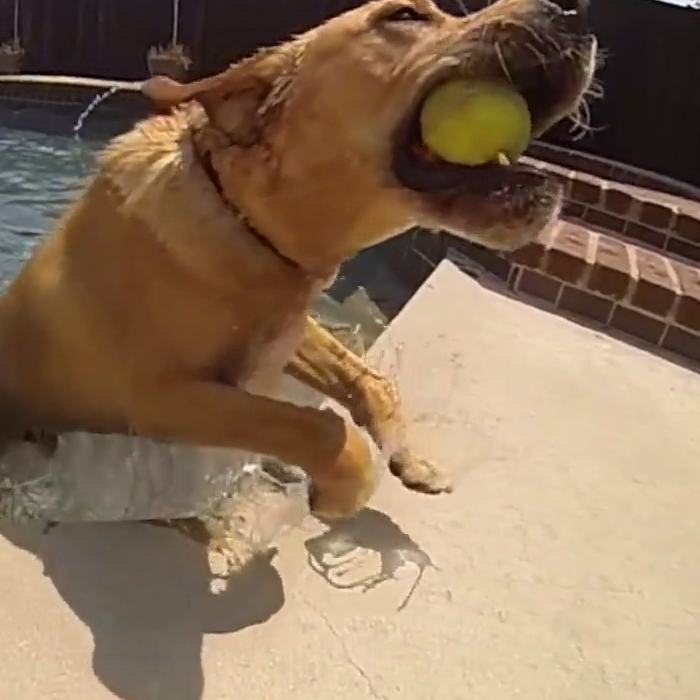}\\
PSNR &  38.80 & 39.88 & 39.98 & 41.41
\end{tabular}}
\resizebox{\linewidth}{!}{
\scriptsize
\begin{tabular}{ccccc}
\includegraphics[width=0.09\textwidth]{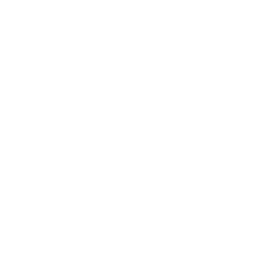}
&\includegraphics[width=0.09\textwidth]{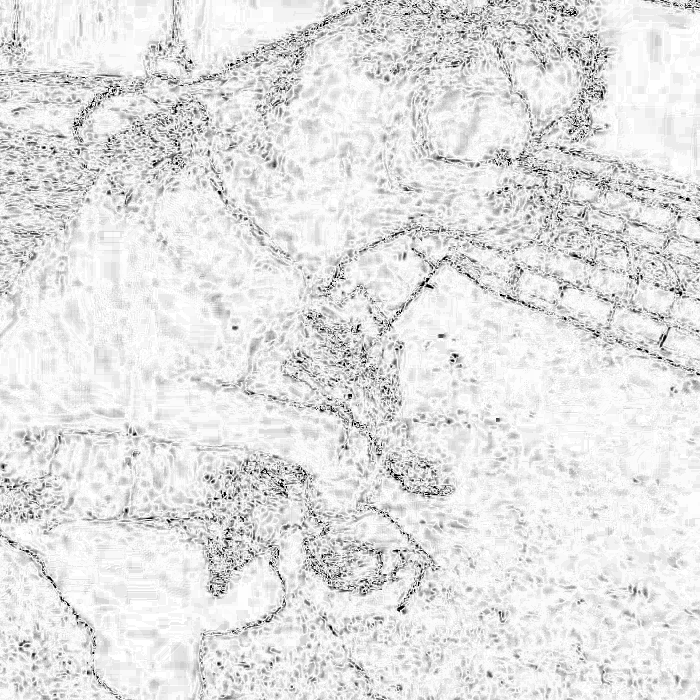}
&\includegraphics[width=0.09\textwidth]{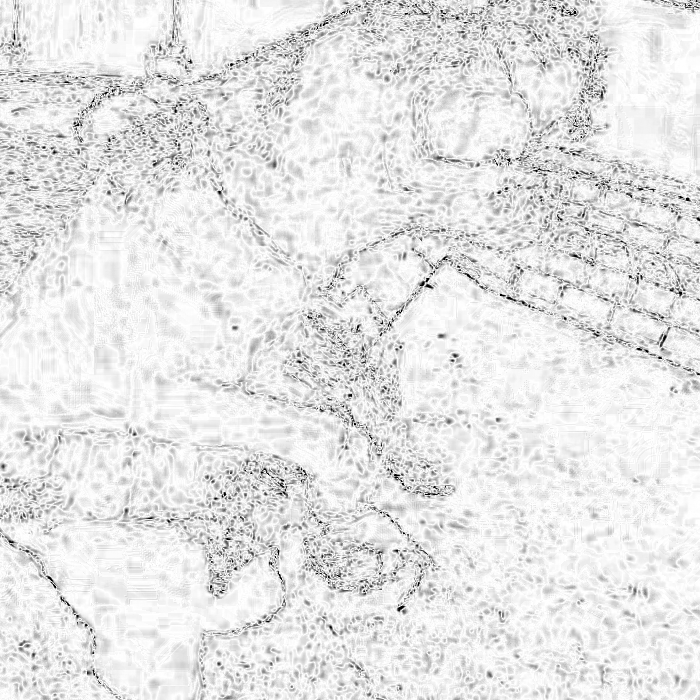}
&\includegraphics[width=0.09\textwidth]{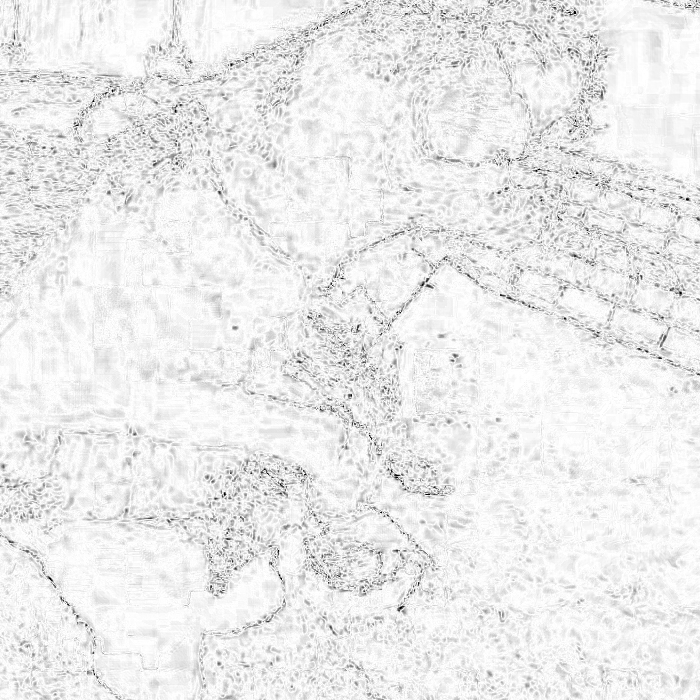}
&\includegraphics[width=0.09\textwidth]{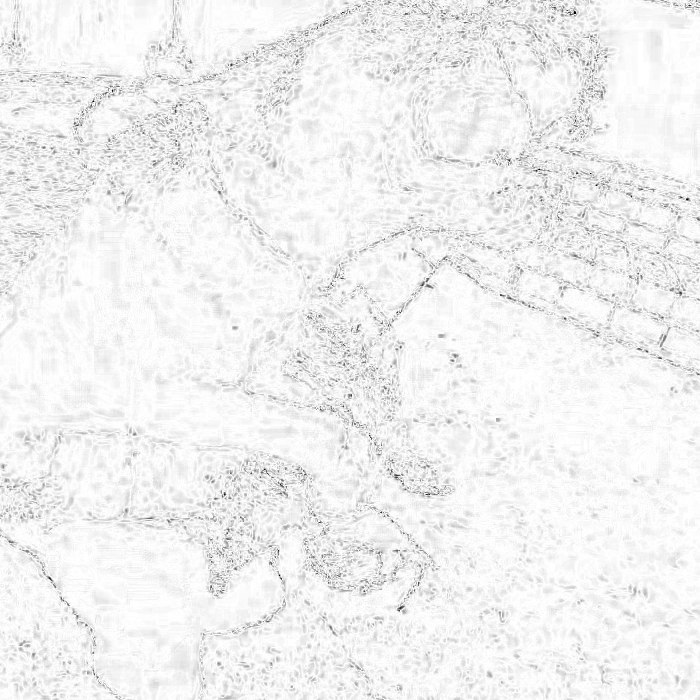}\\
\end{tabular}}

\caption{\textbf{Comparison of predicted compressed frames among compression simulation methods} on the \textit{YouTube\_0000} sequence from the SNU-FILM test dataset with the QP of 37. 
The second row shows the difference between the predicted and actual compressed frames in the luma channel.}
\label{fig:compression_simulation_viz}
\end{figure}
\textbf{Surrogate Network.}
The surrogate network plays a critical role in enabling the joint optimization of downscaling and upscaling in an end-to-end fashion. We explore three strategies for simulating the gradients of H.265 compression: (1) the biased straight-through estimator (STE) \cite{bengio2013estimating}; (2) stacked Dense3D-T blocks, used in spatial video rescaling methods \cite{tian2021self, iandola2014densenet}; and (3) our proposed Temporal Invertible Neural Network (TINN), which incorporates compression metadata.
Table~\ref{table:error} presents the simulation PSNR between predicted and actual H.265 compression residuals on the SNU-FILM test set, along with the BDBR-PSNR results for reconstructed HFR videos. Compared to the surrogate network built upon DenseBlock \cite{tian2021self}, the proposed TINN reduces the prediction PSNR from 39.04 to 41.86dB, thanks to the introduction of temporal frequency decomposition and Q-Invertible Blocks. In addition, our surrogate network is highly lightweight, requiring only 37.70 GFLOPs in total, of which the invertible coupling layers consume 11.32 GFLOPs, and 26.38 GFLOPs by the feature collapse module. To validate the effectiveness of the Q-Invertible  Blocks, we ablate the channel attention layer in each transformation module (w/o Q-Inv). While this modification has a negligible impact on model size and computational cost, it significantly degrades the surrogate network’s ability to generalize across different QPs. Furthermore, we visualize both the actual and predicted compressed frames in Fig.~\ref{fig:compression_simulation_viz}. The proposed TINN mimics compression distortions more accurately, particularly around the boundaries of moving objects.

Another insight from Table~\ref{table:error} is that a better gradient simulator also contributes to higher-quality HFR reconstruction.  Our method achieves a significant reduction in BDBR compared to the straight-through gradient estimation baseline \cite{bengio2013estimating} by minimizing the difference between the tensors used for gradient backpropagation and those used in the forward computation (line 5 of Algorithm \ref{algorithm-training_strategy}).

\begin{figure}[!t]
\centering
\resizebox{\linewidth}{!}{
\begin{tabular}{c}
\begin{tikzpicture}
\node[anchor=south west,inner sep=0] (image) at (0,0) {\includegraphics[width=1.1\linewidth]{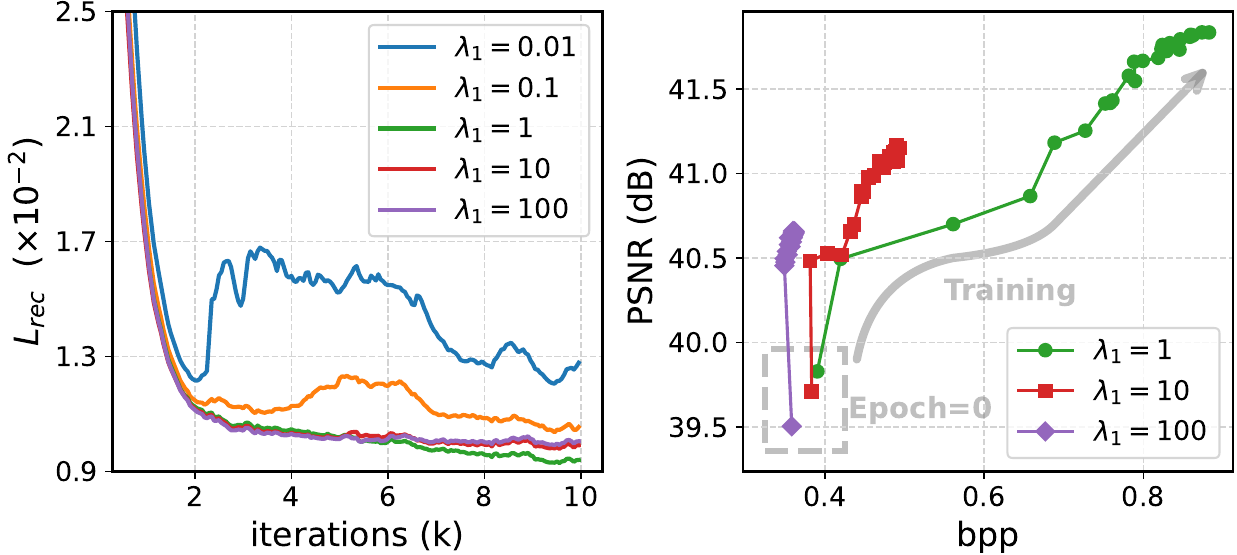}};
\node at ($(image.south)+(0,-0.2)$) {\small {(a) Training loss and validation performance}};

\end{tikzpicture}
\end{tabular}
}
\resizebox{\linewidth}{!}{
\begin{tabular}{c}
\begin{tikzpicture}
\node[anchor=south west,inner sep=0] (image) at (0,0) {\includegraphics[width=1.0\linewidth]{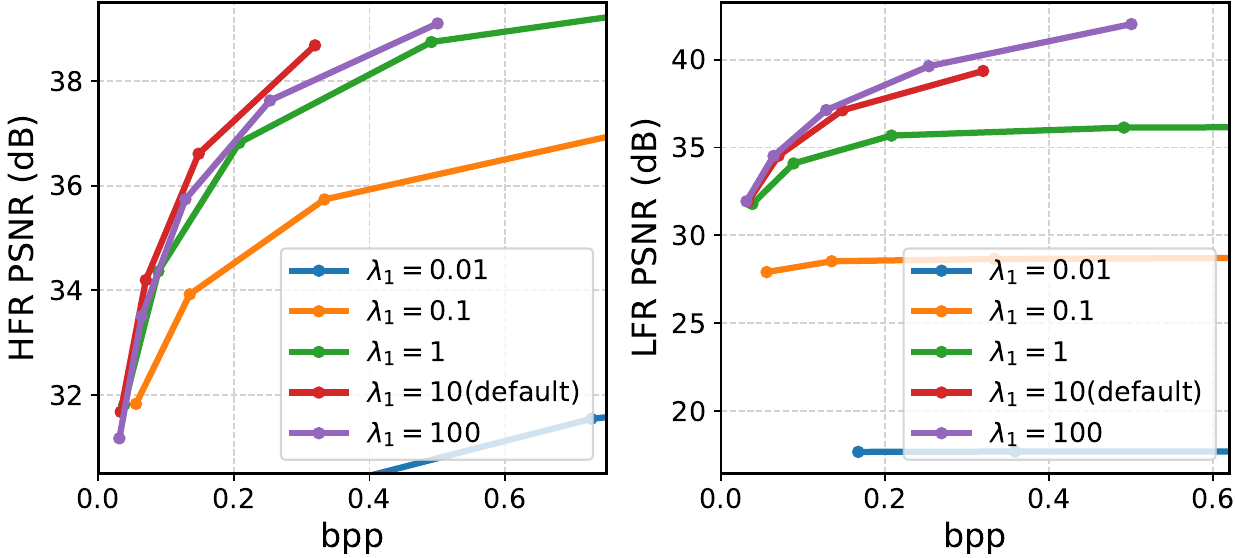}};
\node at ($(image.south)+(0,-0.2)$) {\small {(b) Rate-Dsitortion Curve on SNU-FILM test dataset}};
\end{tikzpicture}
\end{tabular}
}
\caption{\textbf{Ablation study on the guidance loss weight $\lambda$.} (a) Training loss and validation performance across different $\lambda$ values. The left subfigure shows HFR reconstruction loss over the first 10,000 iterations, while the right subfigure shows RD performance on the validation set over 50,000 iterations, evaluated every 2,000 iterations. (b) RD performance comparison for reconstructed HFR (left) and downscaled LFR (right) videos under varying $\lambda$ settings. The default value is $\lambda = 10$.}
\label{fig:ablation_combined}
\end{figure}

\textbf{Guidance Loss.}
Considering that the proposed surrogate network  cannot simulate the bitrate of the HEVC codec, we introduce a guidance loss to indirectly constrain the bitrate of the downscaled LFR videos, thereby adjusting the RD trade-off. To evaluate the effectiveness of the guidance loss, we vary the weight $\lambda$ in Eq.~\ref{equ: total} among ${10^{-2}, 10^{-1}, 1, 10, 100}$.
Figure~\ref{fig:ablation_combined}(a) shows the HFR reconstruction loss $\mathcal{L}_{rec}$ over the first 10K training iterations and the RD performance over 50K iterations on the validation set. The results highlight the crucial role of the guidance loss in stabilizing training and regulating the bitrate of the downscaled LFR videos. Specifically, a very small $\lambda$ hinders convergence, requiring $\lambda \geq 1$. However, an excessively large $\lambda$ overly restricts the bitrate of high-frequency information passed from the downscaler to the upscaler, thereby limiting joint optimization. Thus, we set $\lambda = 10$ to strike a balance between convergence and RD efficiency.
Figure~\ref{fig:ablation_combined}(b) further presents RD curves of reconstructed HFR and downscaled LFR videos for different $\lambda$ values, with QP values set to 17, 22, 27, 32, and 37. The results validate that $\lambda=10$ yields the better RD performance while maintaining high visual quality of the LFR videos.

\textbf{Network Design.}
To demonstrate the effectiveness of our proposed network design, we start with the baseline method, MIMO-VRN \cite{huang2021video}, and progressively replace or augment its components. The experimental results on the SNU-FILM test set are summarized in Table~\ref{tab:ablation_architecture}. Here, BDBR results are calculated on the SUN-FILM test dataset using GIMM-VFI  \cite{zhang2023extracting} as the anchor. Qualitative results under different ablation settings are provided in the Fig.~\ref{fig:compression_ablation_viz}. For the baseline MIMO-VRN, we directly divide the HFR video into even- and odd-indexed frames, feeding them into the two branches of the network, as shown in the first row of Table~\ref{tab:ablation_architecture}.

\subsubsection{MIMO-TWT}
We first evaluate the impact of integrating MIMO-TWT either before or after MIMO-VRN, denoted as \textit{+MIMO-TWT (before)} and \textit{+MIMO-TWT (after)} in Table~\ref{tab:ablation_architecture}. Results show that placing MIMO-TWT before MIMO-VRN yields better BDBR performance (\textbf{-0.65\%} vs. 15.64\%), as it effectively decomposes temporal frequency components, making them easier for MIMO-VRN to further process. As shown in Fig.~\ref{fig:compression_ablation_viz}, compared to the original MIMO-VRN (C1), incorporating MIMO-TWT (C3) leads to noticeably sharper texture reconstruction and significantly improves overall reconstruction quality ({38.36}\,dB vs. 30.34\,dB).

\subsubsection{High-Frequency Reconstruction Module}
Building on \textit{+MIMO-TWT (before)}, we replace the Dense2D-based HF reconstruction module in MIMO-VRN with our U-Net featuring coarse initialization, referred to as \textit{+HF Recon.} in Table~\ref{tab:ablation_architecture}. This is further enhanced by incorporating  contextual features from neighboring frames, denoted as \textit{+HF Recon. (w/ Ctx.)}. Experimental results show that our HF reconstruction module improves performance by \textbf{-10.34\%} BDBR-PSNR and \textbf{-9.91\%} BDBR-SSIM over the baseline, with a little additional inference cost. This is achieved using a lightweight bi-directional optical flow estimation module~\cite{jin2022enhanced}, which is 15$\times$ smaller than PWC-Net~\cite{sun2018pwc} while effectively handling complex motion. Visual comparisons are provided in the supplement material.

\begin{table}[tb]
\caption{
Investigation of different variants of TVRN. All models are trained with the same configuration for a fair comparison. BDBR results are calculated on the SUN-FILM test dataset using the VFI method GIMM-VFI  \cite{zhang2023extracting} as the anchor.
}
\centering
\resizebox{0.5\textwidth}{!}{
\begin{tabular}{|c|l| c|  c | c| c|  c | } 
\hline
\multirow{2}{*}{\textbf{Cfg}}&
\multirow{2}{*}{\textbf{Variants}} & \multirow{2}{*}{\textbf{Params.}} & \multirow{2}{*}{\textbf{FLOPs}}  & \multicolumn{2}{c|}{\textbf{BDBR}(\%)}\\
\cline{5-6}
& & & &  \textbf{PSNR} & \textbf{SSIM}\\
\hline \hline
1 & MIMO-VRN & 166.52K & 18.95G & 317.37 & 500.26 \\ \hline
2 & + MIMO-TWT{ (after)} & 1.96M & 185.05G & 15.64 & 27.31 \\
\rowcolor[HTML]{FFEEED}
3 & + MIMO-TWT{ (before)} & 1.96M & 185.05G & -0.65 & 2.12 \\
\hline
4 & \ + HF Recon. & 2.76M & 208.92G & -5.65 & -6.15 \\
\rowcolor[HTML]{FFEEED}
5 & \ + HF Recon. (w/ Ctx.) & 2.78M & 209.07G & -10.99 & -7.79 \\ \hline
6 & \ \ \ + Restore. (sep.) & 8.08M & 970.23G & -12.19 & -10.33 \\
7 & \ \ \ + Restore. (base) & 4.10M & 970.23G & -11.41 & -8.02 \\
8 & \ \ \ + Restore. (w/ QP) & 4.11M & 970.25G & -11.89 & -9.85 \\
\rowcolor[HTML]{FFEEED}
9 & \ \ \ + Restore. (w/ $\mathbf{f}_c$) & 4.53M & 972.81G & -12.82 & -10.61 \\ \hline
10 & w/o Downscaler & 4.53M & 972.81G & -0.27 & 2.80 \\

\hline
\end{tabular}
}
\label{tab:ablation_architecture}
\end{table}

\begin{figure}[!t]
\centering
\setlength{\tabcolsep}{0pt}
\renewcommand{\arraystretch}{1.0}
\scriptsize
\begin{tabular}{c}
\includegraphics[width=0.475\textwidth]{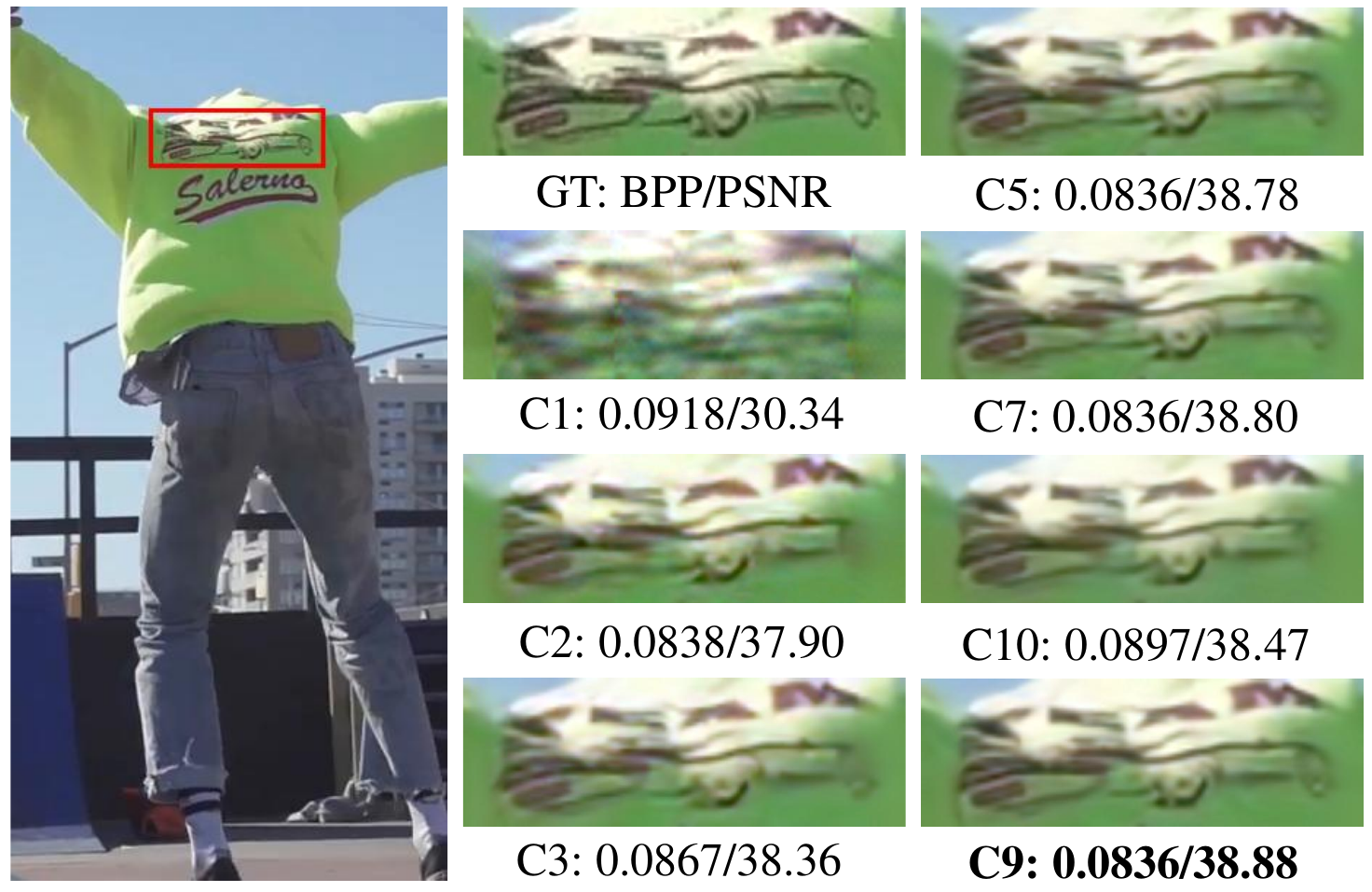} \\
\end{tabular}
\caption{\textbf{Visual comparison of ablation studies} on the \textit{YouTube\_0017} sequence from the SNU-FILM test dataset.}
\label{fig:compression_ablation_viz}
\end{figure}

\begin{figure}[!h]
  \centering
    \includegraphics[width=\linewidth]{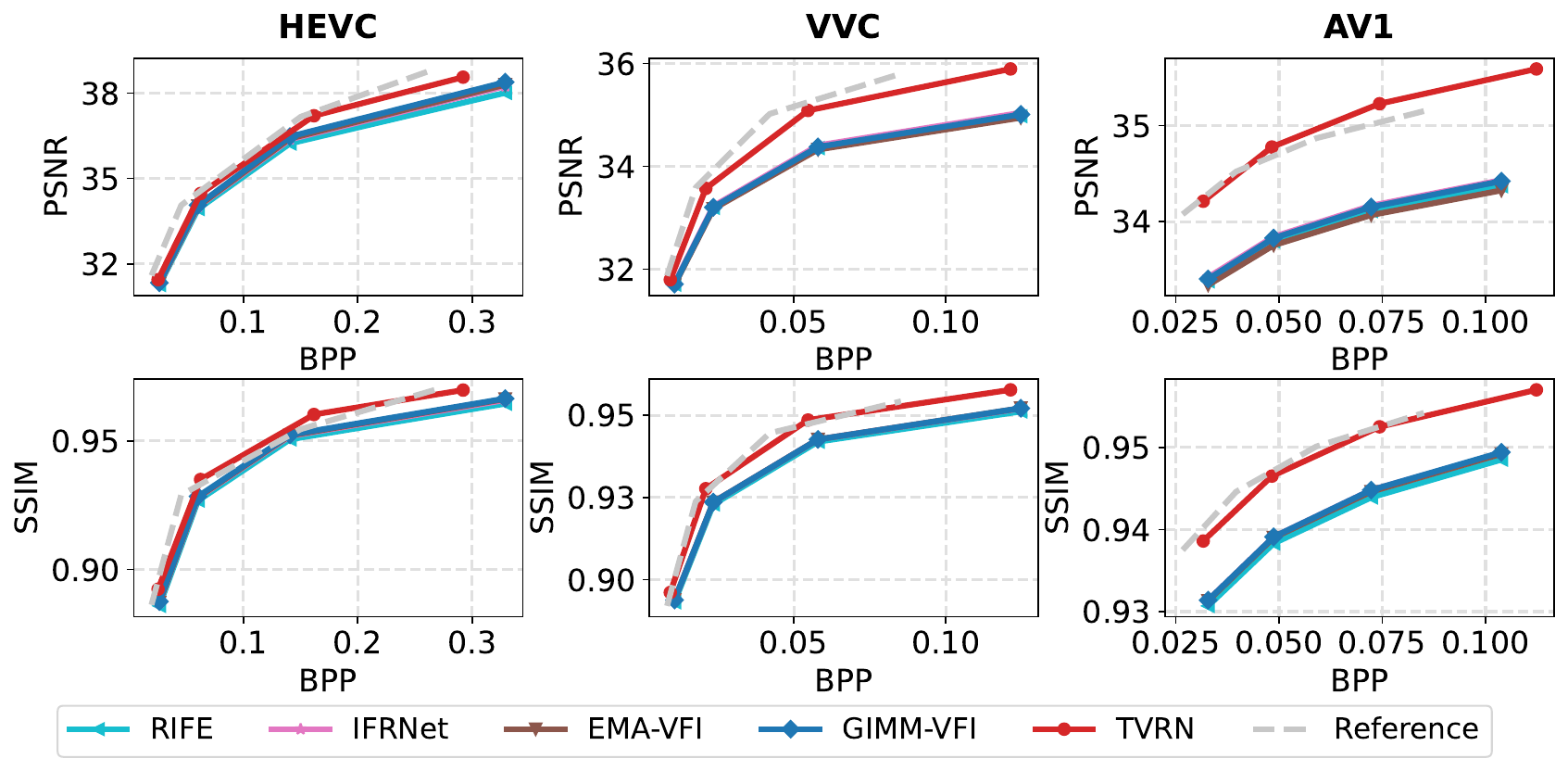}
    \vspace{-1em}
    \caption{{\textbf{Rate–distortion curves on 65-frame clips from the SNU-FILM test set}. The comparison includes three categories of methods: (1) frame-skipping-based approaches~\cite{kong2022ifrnet,zhang2023extracting,guo2024generalizable}; (2) learned frame-rate downscaling methods~\cite{xiang2022learning,zhang2025continuous}; and (3) direct compression of high-frame-rate (HFR) videos using various lossy codecs, which serves as a reference for offline coding efficiency.}}
    \label{fig:bd_vary_psnr_65frames}
\end{figure}

\subsubsection{Low-Frequency Restoration Module}
Building on \textit{+HF Recon. (w/ Ctx.)}, we apply VQE models for low-frequency restoration and evaluate four strategies:
\begin{inparaenum}[\itshape i\upshape)]
\item Training separate models for each QP, denoted \textit{+Restore. (sep.)};
\item Using a single model across all QPs, denoted \textit{+Restore. (base)};
\item Incorporating QPs or compression-aware features $\mathbf{f}_c$ into a unified model, denoted \textit{+Restore. (w/ QP)} and \textit{+Restore. (w/ $\mathbf{f}_c$)}, respectively.
\end{inparaenum}
While training QP-specific models is impractical for deployment, the first strategy acts as a performance benchmark. As shown in Table~\ref{tab:ablation_architecture}, applying generic VQE models yields limited gains due to mismatched behavior across compression levels (refer to Fig.~\ref{fig: restoration_locations}). Incorporating QPs improves generalization but still underperforms QP-specific models. In contrast, using compression-aware features allows content-adaptive restoration and enhances robustness across various compression levels. Leveraging these features, a single model even outperforms QP-specific models in BDBR performance (\textbf{-12.82\%} vs. -12.19\%).

\subsubsection{Downscaler}
To assess the impact of hidden HF information embedded in downscaled videos on HFR reconstruction, we replace the TVRN downscaler with frame skipping, denoted \textit{w/o Downscaler} in Table~\ref{tab:ablation_architecture}. This results in a significant drop in HFR reconstruction performance (-0.27\% vs. -12.82\%), highlighting two key insights: (1) the HF information embedded in downscaled videos is crucial for high-quality reconstruction, and (2) the performance gains primarily stem from the joint optimization of downscaling and upscaling, rather than post-processing.

{
\textbf{Sequence Length.}
To validate the effectiveness of the proposed method on long sequences,  we have conducted additional evaluations on clips with 65 frames from the SNU-FILM dataset~\cite{choi2020channel}. Specifically, we select 720p and 1080p sequences with more than 65 frames and split them into clips of 65 frames for evaluation, while still using the 7/5 frame rate conversion ratio. For cases where the sequence length is not divisible by the model’s group-of-pictures size, we pad the input by replicating the last frame and exclude the padded frames when computing distortion and bitrate. 
As shown in Fig.~\ref{fig:bd_vary_psnr_65frames}, the results indicate that our method consistently outperforms prior approaches in longer clips, confirming its generalization on clip length. 
 Here, we present directly compressing HFR videos as “reference” for offline coding efficiency, rather than a baseline expected to be surpassed in our scenario. 
As expected, directly encoding HFR videos often achieves better RD performance than temporal rescaling methods in Fig. \ref{fig:rd_curve}, since it explicitly encodes and transmits all intermediate frames. However, we observe that on longer sequences, the proposed method achieves comparable performance to direct HFR compression. This highlights the effectiveness of our approach under practical constraints, while maintaining competitive performance against coding efficiency focused references. 
}

\section{Model Analysis}
\begin{figure}[!t]
\setlength{\tabcolsep}{0.1pt}
\centering
\resizebox{\linewidth}{!}{
\scriptsize
\begin{tabular}{ccc}
\vspace{-3pt}
\includegraphics[width=0.09\textwidth]{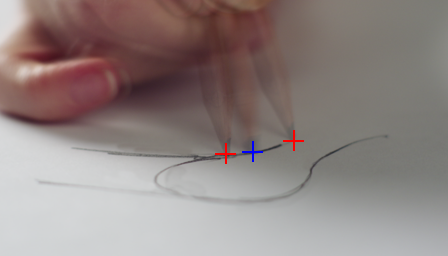}&
\includegraphics[width=0.09\textwidth]{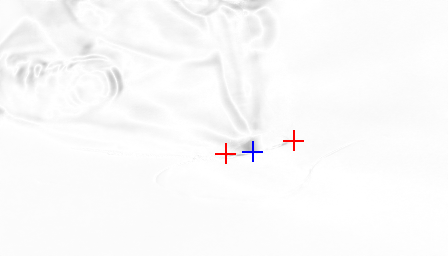}
&\includegraphics[width=0.09\textwidth]{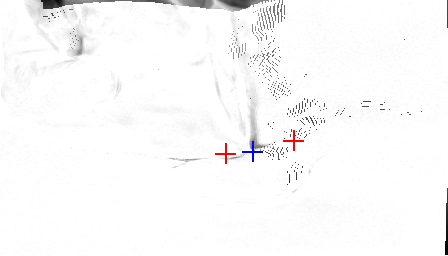}\\
{\tiny (a) Overlapped frames} & {\tiny (b) Ground truth} & {\tiny (c) Initial estimation} \\
\vspace{-3pt}
\includegraphics[width=0.09\textwidth]{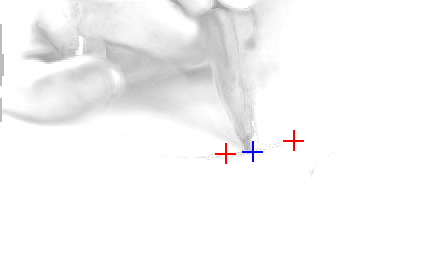}
&\includegraphics[width=0.09\textwidth]{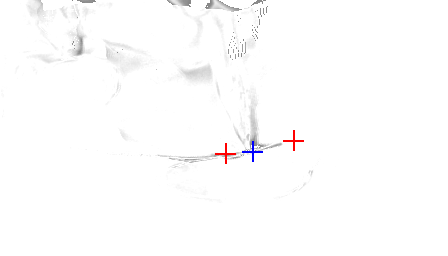}
&\includegraphics[width=0.09\textwidth]{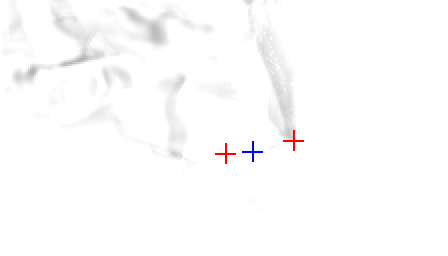}\\
{\tiny (d) Reconstructed HF} & {\tiny (e) w/o Context} & {\tiny (f) DenseBlock~\cite{huang2021video}}\\
\end{tabular}}
\caption{
{
\textbf{Visualization of temporal High-Frequency (HF) Components.}
(a) Overlapped neighbouring and target frames,
(b) Ground-truth HF components produced by the downscaler,
(c) Initial estimation of HF components using bidirectional optical flow at $t = 0.3$, as defined in Eq. (\textcolor{red}{9}),
(d) Reconstructed HF components from our model,
(e) Reconstruction without contextual features,
(f) Reconstruction produced by stacked DenseBlocks~\cite{huang2021video}.
For visual clarity, the pencil's positions in neighbouring frames are marked with red crosses, and its position in the target frame is marked with blue crosses. 
}
}
\label{fig:viz_hf_discarded}
\end{figure}

{
\subsection{Analysis of Temporal High-Frequency Components}
\label{subsec: viz_temporal_hf}
Compared to previous methods, the proposed invertible architecture effectively regularizes the information loss from frame-rate downscaling into temporal high-frequency (HF) components. Thus, accurately reconstructing these components is essential for HFR video reconstruction. To this end, our framework uses bidirectional optical flow to align contextual features from neighboring frames with the HF information of discarded frames.
As shown in Fig.~\ref{fig:viz_hf_discarded}, we visualize the original and reconstructed HF components from different reconstruction modules. The aligned residuals of warped neighboring frames better approximate the original HF components than direct residuals, making them more suitable for initialization. We further show the impact of removing contextual features (w/o Context) and replacing the context-aware U-Net with a DenseBlock-based module~\cite{huang2021video}. The results highlight the importance of both contextual information and the U-Net architecture in improving HF reconstruction.
}

{\subsection{Analysis of Failure cases}
We sample one frame every 50 frames from the \textit{GOPR0881\_11\_01} sequence in the SNU-FILM test set \cite{choi2020channel}, as shown in Fig. 
\textcolor{red}{18}. In this example, the large temporal gap leads to long-range motion that exceeds the effective receptive field of optical flow estimation. As a result, flow-based methods, including EMA \cite{zhang2023extracting}, STAA \cite{xiang2022learning}, and our method, suffer from noticeable misalignment artifacts. In contrast, GIMM \cite{guo2024generalizable}, which relies on implicit modeling rather than explicit motion estimation, is more robust to such long-range motion. This highlights a limitation of flow-based approaches under extreme temporal displacement.}

{\begin{table}[t]
\centering
\caption{End-to-end efficiency comparison in a sender--receiver pipeline. 
The sender performs temporal downscaling followed by video encoding; the receiver handles decoding and temporal upscaling. 
}
\label{tab:end_to_end_efficiency}
\vspace{4pt}
\resizebox{0.5\textwidth}{!}{
\begin{tabular}{l|ccccc|cc}
\toprule
\multirow{2}{*}{\textbf{Methods}} & \multicolumn{5}{c|}{\textbf{Latency (second/clip)}} & \multicolumn{2}{c}{\textbf{Peak Memory (GB)}} \\
\cline{2-8}
& Downscaling & Encoding & Decoding & Upscaling & Total & CPU & GPU \\
\midrule
IFRNet~\cite{kong2022ifrnet}      & 0.00 & 1.01 & 2.06 & 0.46 & 3.53 & 2.05 & 1.04 \\
EBME-H~\cite{jin2022enhanced}     & 0.00 & 1.01 & 2.06 & 1.01 & 4.08 & 2.00 & 0.63 \\
RIFE~\cite{huang2022rife}         & 0.00 & 1.01 & 2.06 & 1.22 & 4.29 & 1.97 & 0.58 \\
GIMM~\cite{guo2024generalizable}$^\dagger$  & 0.00 & 1.01 & 2.06 & 6.04 & 9.11 & 2.08 & 8.40 \\
\midrule
STAA~\cite{xiang2022learning}     & 1.06 & 1.01 & 2.06 & 5.94 & 10.07 & 2.09 & 2.88 \\
TVRN-S (Ours)                     & 1.02 & 1.01 & 2.06 & 1.12 & 5.21 & 2.06 & 1.62 \\
TVRN (Ours)                       & 1.56 & 1.01 & 2.06 & 5.75 & 10.38 & 2.07 & 3.74 \\
\bottomrule
\end{tabular}
}
\begin{tablenotes}
  \footnotesize
  \item[$\dagger$] $\dagger$ Due to the Out-Of-Memory problem, we utilize an Nvidia RTX 1080Ti with 11GB memory for temporal upscaling.
\end{tablenotes}
\end{table}}

{\subsection{End-to-end Complexity Analysis}}

{To reflect practical deployment, we use a server-grade sender (NVIDIA GeForce RTX 3090 GPU and dual Intel Xeon Gold 5118 CPUs @ 2.30\, GHz, 48 cores) and a consumer-grade laptop receiver (NVIDIA GeForce RTX 3050 4\, GB Laptop GPU and Intel Core i7-13700H CPU, 14 cores). All experiments are conducted on the SNU-FILM Medium dataset~\cite{choi2020channel} using HEVC codecs under the same configurations as the main experiments.
For all methods, we report end-to-end latency, peak memory usage, and RD performance in Table~\ref{tab:end_to_end_efficiency}. 
To further assess the behavior under different resource constraints, we evaluate the receiver side in both GPU-enabled and GPU-disabled settings. 
For memory usage, we report the peak GPU memory usage (Maximum Resident Set Size) measured over the entire end-to-end pipeline. 
Notably, all models are evaluated without specialized optimizations.}

{As shown in Table \ref{tab:end_to_end_efficiency}, with GPU enabled, the proposed models achieve end-to-end latency similar to lightweight VFI methods. Specifically, for encoding a 1080p sequence with 7 frames, the full TVRN model incurs an end-to-end latency of 10.38 seconds, while the lightweight variant TVRN-S has a latency of 5.21 seconds. In comparison, lightweight VFI methods show latencies ranging from 3.53 to 8.74 seconds. Thus, the full TVRN model is primarily focused on quality, whereas TVRN-S strikes a better quality-efficiency balance, making it more suitable for deployment under tighter computation and memory constraints. For peak memory usage, the full model reaches a maximum of 3.74 GB, which is compatible with a laptop GPU with 4 GB of memory. }

\begin{figure}[!t]
\label{fig:bad_cases}
\setlength{\tabcolsep}{0.5pt}
\centering
\resizebox{\linewidth}{!}{
\scriptsize
\begin{tabular}{ccccccc}
Overlapped Input& Ground Truth &  EMA \cite{zhang2023extracting} & GIMM \cite{guo2024generalizable}& STAA \cite{xiang2022learning} & TVRN(Ours)\\

\includegraphics[width=0.09\textwidth]{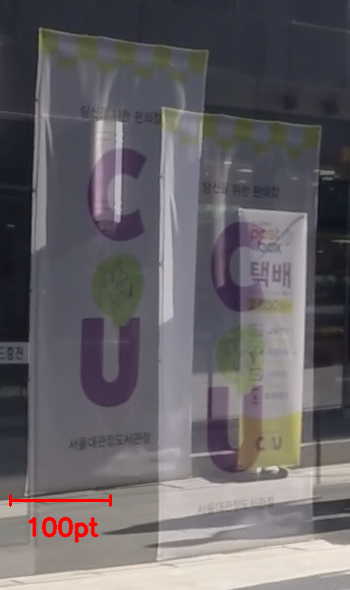}
&\includegraphics[width=0.09\textwidth]{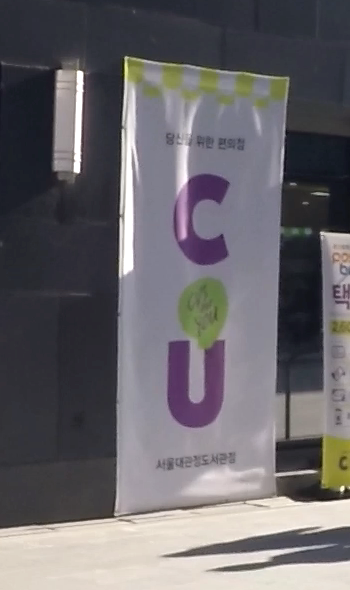}
&\includegraphics[width=0.09\textwidth]{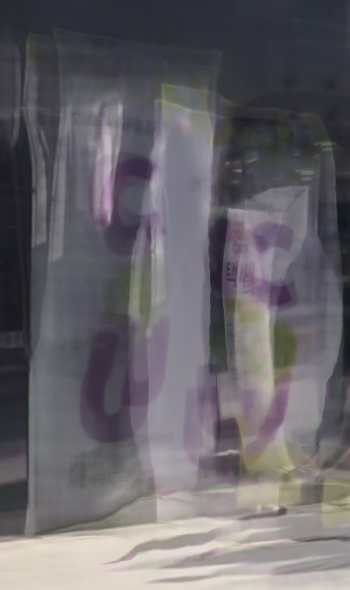}
&\includegraphics[width=0.09\textwidth]{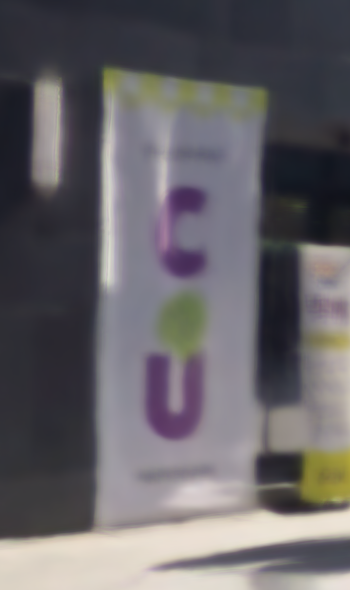}
&\includegraphics[width=0.09\textwidth]{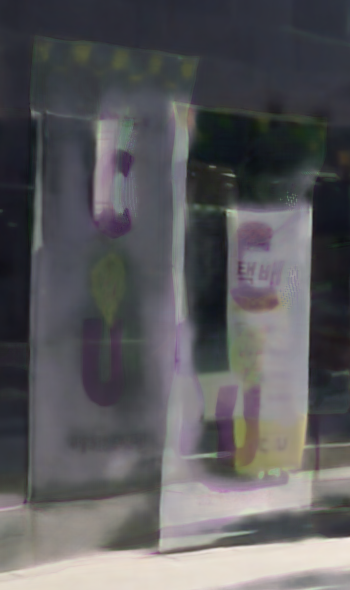}
&\includegraphics[width=0.09\textwidth]{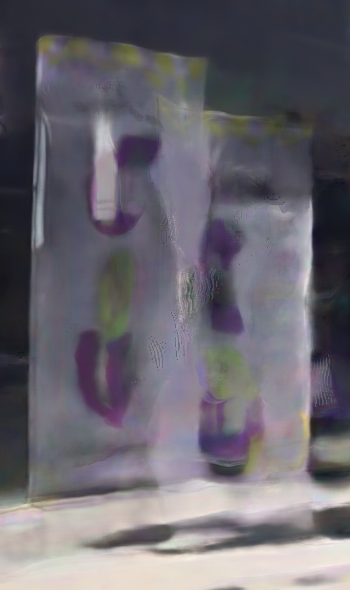}\\

& PSNR/BPP &  21.75/0.4791  & 21.67/0.4791 & 20.90/0.5094 & 21.20/0.5340
\end{tabular}
}
{\caption{Visualization of failure cases on  the \textit{GOPR0881\_11\_01} sequence from the SNU-FILM test dataset. }}
\end{figure}

\section{Conclusion}
This paper proposes TVRN, a Temporal Video Rescaling Network that validates the effectiveness of invertible network architectures for compression-aware temporal video rescaling. To improve robustness against non-differentiable lossy compression and enable end-to-end training, we introduce a novel surrogate network that efficiently mimics the compression distortion introduced by the  lossy video codec. Furthermore, we identify a key challenge in applying video quality enhancement models to HFR reconstruction and address it by incorporating compression-aware features. Experimental results show that TVRN enables efficient temporal video rescaling while maintaining compatibility with modern video codecs.

Although the proposed model achieves high-fidelity HFR reconstruction, several challenges remain for deployment in bandwidth-efficient video streaming:
(1) dynamically adjusting the grouping strategy beyond simple odd/even splitting to adapt to network fluctuation \cite{james2019beta};
(2) reducing the upscaling complexity to meet the constraints of resource-limited edge devices.
In addition, the proposed surrogate network has potential applications in various computer vision tasks, such as video pre-processing \cite{lu2024preprocessing} and steganography \cite{zhang2024editguard}. 
 Another direction is to integrate temporal and spatial video rescaling for more flexible adaptation to diverse transmission scenarios.

\bibliographystyle{IEEEtran}
\bibliography{reference}

\clearpage


\twocolumn[
\begin{center}
    \Huge \emph{Supplementary Material for} TVRN: Invertible Neural Networks for Compression-Aware Temporal Video Rescaling
    \vspace{15mm}
\end{center}
]

\appendix

This supplementary material provides additional analyses, network architecture details, and more results of our proposed compression-aware temporal video rescaling method. The content is organized as follows:

\noindent
\begin{tabularx}{\linewidth}{@{}X >{\raggedleft\arraybackslash}p{.5em}@{}}
\textbf{\ref{sec: model_analysis}. Additional Model Analyses} \dotfill & \textbf{\pageref{sec: model_analysis}} \\
\  {\textcolor{red}{I}. \textit{Impact of the Downscaler on Downscaled Videos}} \dotfill & \textbf{\pageref{subsec: impact_downscaler}} \\
\  {\textcolor{red}{II}. \textit{Analysis of Learned Compression-Aware Features}} \dotfill & \textbf{\pageref{subsec: viz_compression_aware}} \\
\ {\textcolor{red}{III}. \textit{Computation Details of Frequency Analysis}} \dotfill & \textbf{\pageref{subsec: frequency_analysis}} \\
\  {\textcolor{red}{IV}. \textit{Theoretical Justification}} \dotfill & \textbf{\pageref{subsec:theoretical}} \\
\  {\textcolor{red}{V}. \textit{Upper Bound on Gradient Estimation Error}} \dotfill & \textbf{\pageref{subsec:upper_bound}} \\

\textbf{\ref{sec: supp_network}. Detailed Network Architecture} \dotfill & \textbf{\pageref{sec: supp_network}} \\

\  {\textcolor{red}{VI}. \textit{Transformation Modules}} \dotfill & \textbf{\pageref{subsec: transformation_module}} \\

\  {\textcolor{red}{VII}. \textit{Context-Aware U-Net}} \dotfill & \textbf{\pageref{subsec: context_unet}} \\

\  {\textcolor{red}{VIII}. \textit{Lightweight Bi-directional Optical Flow Network}}  & \textbf{\pageref{subsec: lightweight_op_net}} \\

\  {\textcolor{red}{IX}. \textit{Compression Encoder and Ranker}} \dotfill & \textbf{\pageref{subsec: compression_encoder}} \\

\textbf{\ref{sec: lossless_performance}. Experiments with Rounding-Based Quantization}  & \textbf{\pageref{sec: lossless_performance}} \\

\  {\textcolor{red}{X}. \textit{Experimental Setup}} \dotfill & \textbf{\pageref{subsec: experiment_setup}} \\

\  {\textcolor{red}{XII}. \textit{Experimental Results}} \dotfill & \textbf{\pageref{subsec: experimental_results}} \\

\textbf{\ref{sec: more_viz}. More Visualization Results} \dotfill & \textbf{\pageref{sec: more_viz}} \\
\end{tabularx}

\setcounter{section}{5}
\section{Additional Model Analyses}
\label{sec: model_analysis}

\subsection{Impact of the Downscaler on Downscaled Videos}
\label{subsec: impact_downscaler}
It is necessary to investigate downscaled videos before/after downscaling for information regularization, since temporal video rescaling methods aim to embed motion information from high-frame-rate videos into lower-frame-rate downsampled frames in a visually imperceptible manner. We compare the LFR videos generated by CSTVR* \cite{zhang2025continuous} and our method in terms of visual quality and encoding statistics.
As shown in Fig. \ref{fig:viz_downscaled}, both methods produce LFR videos with appealing visual fidelity with the original frames. However, compared to CSTVR*, our approach introduces fewer distortions and better preserves background details.

We further investigate the impact of downscaling on motion information generated by video codecs. Specifically, we present motion vector fields extracted from the HEVC codec with a quantization parameter of 22 in Fig. \ref{fig:viz_downscaled}. Our method generates more motion vectors in relatively static background regions, while producing fewer motion vectors in areas with complex motion compared to CSTVR*.
This selective motion vector distribution explains the superior performance of our method: (1) Allocating more motion vectors to static backgrounds helps maintain fine details and temporal consistency; (2) Reducing motion vectors in complex motion areas avoids noisy or erroneous estimates, effectively preserving embedded high-frequency information. This adaptive allocation improves reconstruction quality by balancing smooth regions and robustness in challenging motion areas.

\setcounter{figure}{12}
\begin{figure}[!t]
\setlength{\tabcolsep}{0.1pt}
\centering
\resizebox{\linewidth}{!}{
\scriptsize
\begin{tabular}{ccc}
\includegraphics[width=0.09\textwidth]{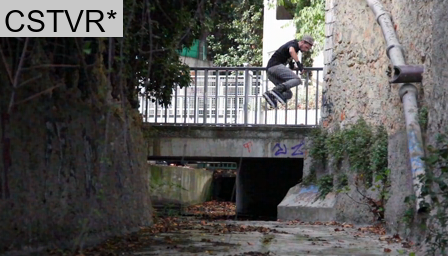}
&\includegraphics[width=0.09\textwidth]{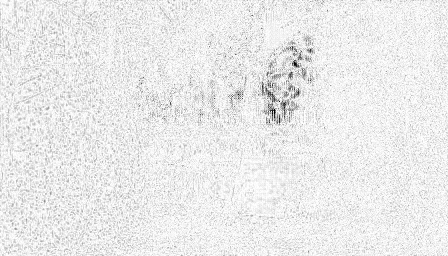}
&\includegraphics[width=0.09\textwidth]{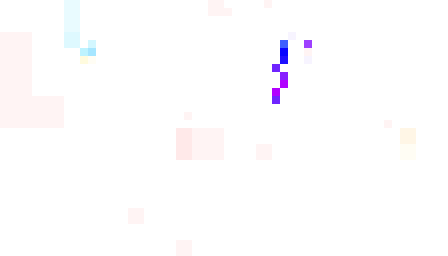}\\
\includegraphics[width=0.09\textwidth]{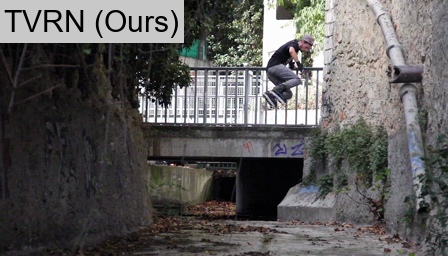}
&\includegraphics[width=0.09\textwidth]{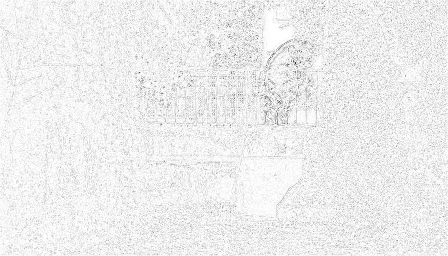}
&\includegraphics[width=0.09\textwidth]{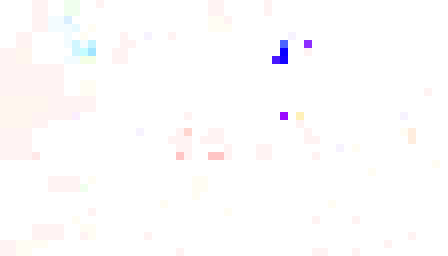}\\
{\tiny Downscaled LFR video} & {\tiny Residue} & {\tiny Motion vector}\\
\end{tabular}}
\caption{\textbf{Visualization of downscaled videos produced by CSTVR{*} and our method on the \textit{00026\_0036} sequence from the Vimeo test dataset}. Both methods produce visually pleasant results. We also show the difference between the downscaled and original frames after compression, amplified 10$\times$ for clarity.
Lastly, we visualize the motion vector field derived from the HEVC codec.}
\label{fig:viz_downscaled}
\end{figure}

\subsection{Analysis of Learned Compression-Aware Features }
\label{subsec: viz_compression_aware}

\begin{figure}[!t]
\setlength{\tabcolsep}{8pt}  
\centering
\begin{tabular}{cc}
\includegraphics[width=0.22\textwidth]{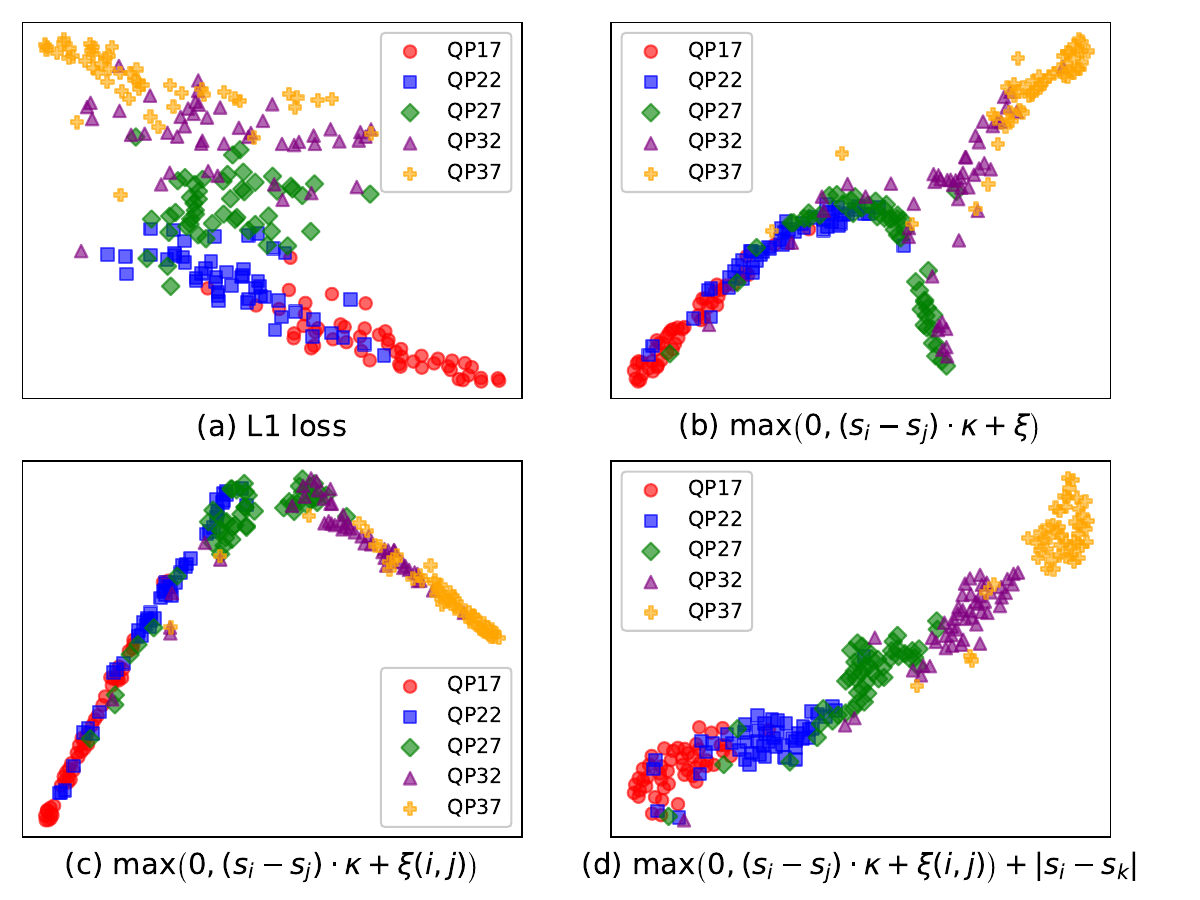}
& \includegraphics[width=0.22\textwidth]{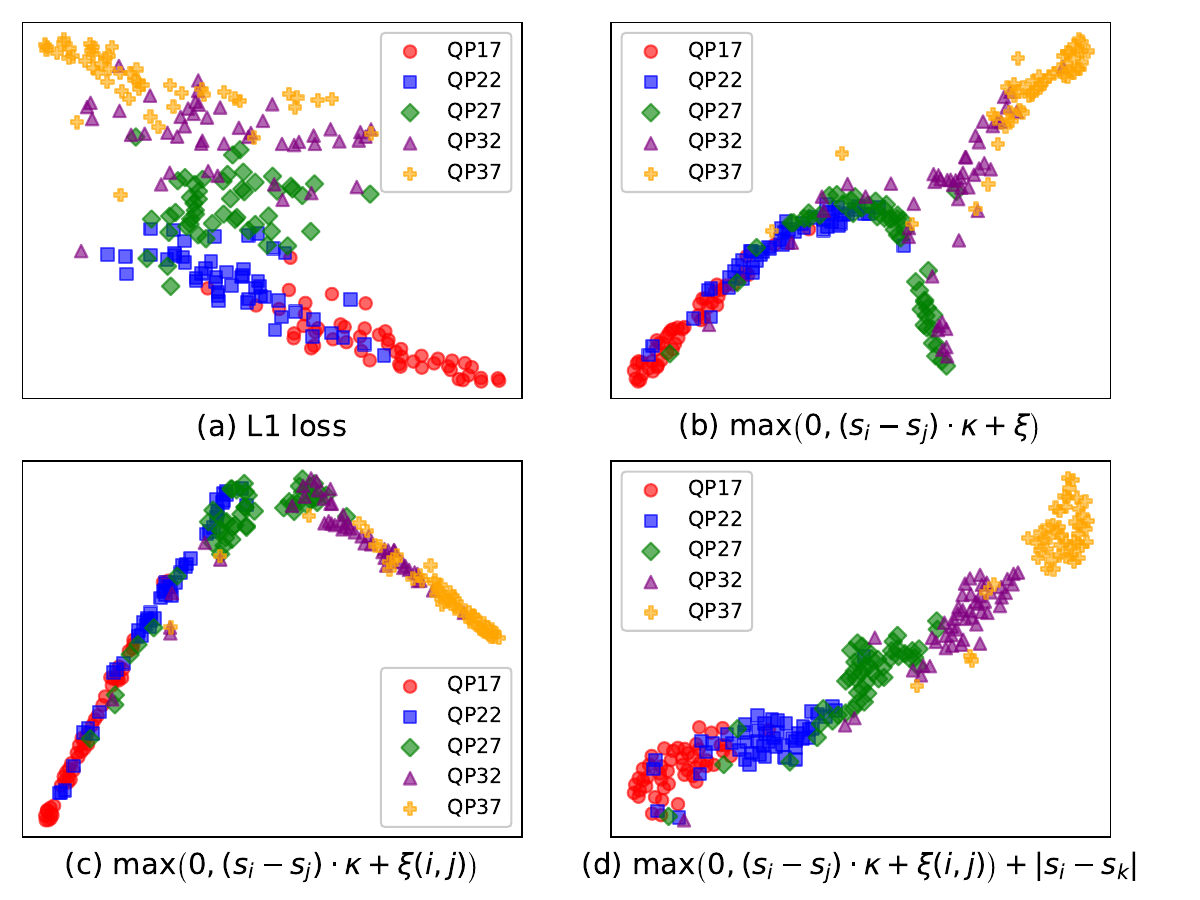}\\
{\small (a) L1 loss} & { \small (b) Ranking Loss $\mathcal{L}_{rank}$} \\
\end{tabular}
  \caption{\textbf{Visualization of compression-aware features learned with different loss functions using t-SNE}  \cite{van2008visualizing}. 
  Forty videos from the Vimeo90k test dataset are compressed with a set of different QPs. 
  Rank loss effectively strengthens the clustering of latent variables for medium QPs.
  }
\label{fig:viz_feature_tsne}
\end{figure}

\begin{figure*}[!t]
\setlength{\tabcolsep}{1pt} 
\renewcommand{\arraystretch}{0.5} 
\centering
\begin{tabular}{>{\centering\arraybackslash}m{0.02\textwidth} >{\centering\arraybackslash}m{0.2\textwidth} >
{\centering\arraybackslash}m{0.2\textwidth} >
{\centering\arraybackslash}m{0.2\textwidth} >
{\centering\arraybackslash}m{0.2\textwidth}}
\multirow{2}{*}{\rotatebox{90}{\hspace{-0.5em}\small QP22}} & 
\includegraphics[width=\linewidth]{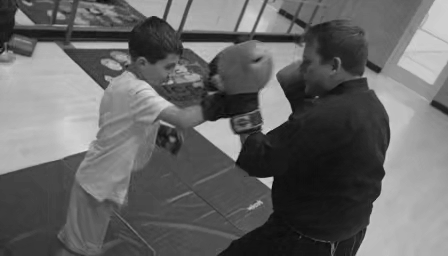} &
\includegraphics[width=\linewidth]{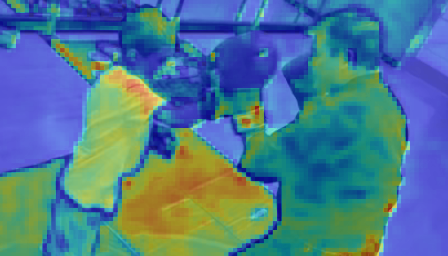} &
\includegraphics[width=\linewidth]{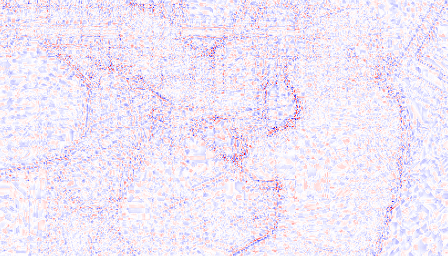} &
\includegraphics[width=\linewidth]{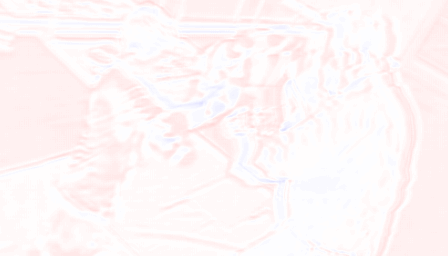} 
\\
\multirow{2}{*}{\rotatebox{90}{\hspace{-0.5em}\small QP32}} & 
\includegraphics[width=\linewidth]{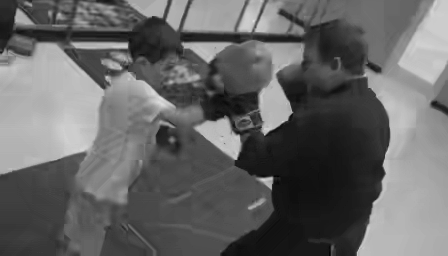} &
\includegraphics[width=\linewidth]{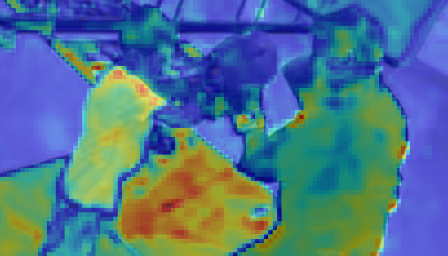} &
\includegraphics[width=\linewidth]{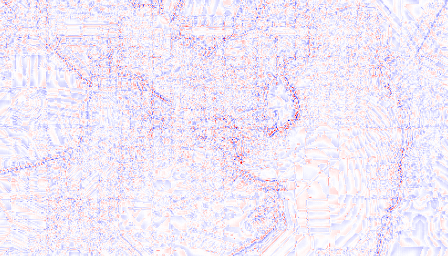} &
\includegraphics[width=\linewidth]{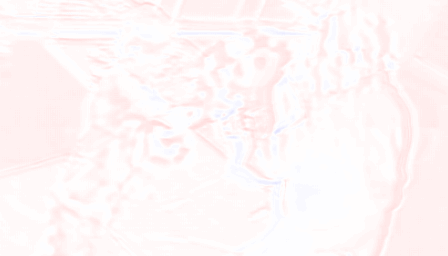} 
\\
& {\small Compressed LFR video} & {\small Compression-aware feature} & {\small Compensation (\#1)} & {\small Compensation (\#2)}   \\
\end{tabular}
\caption{\textbf{Visualization of compression-aware results}. 
From left to right: the compressed low-frame-rate video, the compression-aware feature $\mathbf{f}_{c}$, and the compensation residues produced by the restoration modules before (\#1) and after (\#2) upscaling. The visualization of $\mathbf{f}_{c}$ is obtained by averaging the absolute values across all channels. For the compensation residues, negative and positive values are represented by blue and red, respectively, with color intensity indicating their magnitude.
Our model tends to compensate for compression artifacts more actively after upscaling at QP 22 compared to QP 32.
}

\label{fig:viz_compression_aware_results}
\end{figure*}

To address the conflicting requirements of VQE models under different compression levels, we incorporate compression-aware features to guide the restoration of downscaled LFR videos. We first visualize the features generated by the compression encoder using two types of loss functions through t-SNE \cite{van2008visualizing}, as shown in Fig.~\ref{fig:viz_feature_tsne}. Specifically, the ranking loss improves the separability of latent features across different QPs compared to the L1 loss, demonstrating the effectiveness of the learning-to-rank strategy in capturing subtle compression distortions.
Furthermore, we show the compressed LFR videos, the compression-aware features $\mathbf{f}_{c} := \mathcal{E}_c(I_c^{LQ})$, and the compensation residues produced by the restoration modules before and after upscaling at QP of 22 and 32, as depicted in Fig.~\ref{fig:viz_compression_aware_results}. 
Specifically,  we average the absolute values of  $\mathbf{f}_{c}$ across all channels. The visualization results indicate that the compression encoder effectively captures distorted textures, such as carpet regions. Also, we present the compensation residues produced by the restoration modules before and after upscaling in Fig. \ref{fig:viz_compression_aware_results}. From the compensation results, we find that the model tends to compensate for compression artifacts more actively after upscaling at QP 22 compared to QP 32, consistent with the observations of Fig. 4 in the main text.

\subsection{Computation Details of Frequency Analysis}
\label{subsec: frequency_analysis}

\begin{figure}[t]
\centering
\begin{lstlisting}[language=Python]
def get_log_magnitude_spectrum(img):
    fft = np.fft.fft2(img)
    fft_shift = np.fft.fftshift(fft)
    magnitude = np.abs(fft_shift)
    # log(1 + magnitude)
    log_magnitude = np.log1p(magnitude)
    return log_magnitude
\end{lstlisting}
\caption{\textbf{Python-style code of calculation log-magnitude spectrum.}}
\label{fig:code_batch}
\end{figure}

In Fig. 5 of the main text, we visualize the log-magnitude spectrum of LFR frames generated by frame skipping, downscaling, and downscaling followed by upscaling on the SNU-FILM dataset. Here, we detail the frequency analysis procedure. To highlight differences in spectral content, we first subtract the original frame from each compressed LFR frame to suppress shared low-frequency components. The log-magnitude spectrum is then computed following the procedure shown in Fig.~\ref{fig:code_batch}.
Specifically, we compute the 2D discrete Fourier transform \(\mathscr{F}(x)\), shift the zero-frequency component to the center, take the magnitude, and apply a logarithmic transformation:
\begin{equation}
\mathscr{F}_{\log}(x) = \log(1 + |\text{FFTShift}(\mathscr{F}(x))|)
\end{equation}
The resulting log-magnitude spectra from different methods are subsequently used to compute histogram distributions and overlap ratios $\rho_\text{ovl}$ for quantitative comparison, as follows:
\begin{equation}
    \rho_\text{ovl} = \frac{A_t\cap A_f}{A_f}\times 100\%,
\end{equation}
where $A_f$ denotes the area of histogram distributions of frame skipping methods, $A_t$ indicates the area of histogram distributions of testing methods. 
A higher 
 $\rho_\text{ovl}$
  means that the spectrum of the LFR frames generated by the testing method is more similar to that of compressed LFR frames from the original sequence.

\subsection{Theoretical Justification}
\label{subsec:theoretical}
In the preliminaries of our methodology, we stated that for any continuous random variable with a well-defined density (e.g., \( \mathbf{x}_H \sim p(\mathbf{x}_H \mid \mathbf{x}_L) \)), there exists a bijective transformation \( f_{\mathbf{x}_H} \) such that \( f_{\mathbf{x}_H}(\mathbf{x}_H) \sim \mathcal{N}(0, \mathbf{I}) \)~\cite{hyvarinen1999nonlinear}. This result provides the theoretical foundation for using invertible neural networks in image and video rescaling. 
Here, we briefly present the constructive proof of  Lemma \ref{lemma:invertible_transform}.

\begin{lemma}
Let \( \mathbf{x} = (x_1, x_2, \ldots, x_n) \) be a random vector with a joint density \( p(\mathbf{x}) \) that is continuous, strictly positive, and continuously differentiable almost everywhere.  
Then there exists a bijective transformation \( f \) such that
\(
f(\mathbf{x}) \sim \mathcal{N}(0, \mathbf{I})
\).
\label{lemma:invertible_transform}
\end{lemma}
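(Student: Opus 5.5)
The plan is the constructive Knothe--Rosenblatt (sequential conditional CDF) argument of \cite{hyvarinen1999nonlinear}. It has two stages. First, map $\mathbf{x}$ to a vector uniform on $(0,1)^n$ by taking successive conditional cumulative distribution functions. Second, apply the inverse standard normal CDF $\Phi^{-1}$ coordinatewise. Concretely, define
\begin{equation}
u_1 = F_1(x_1), \quad u_k = F_k(x_k \mid x_1,\ldots,x_{k-1}), \quad k=2,\ldots,n,
\end{equation}
where $F_k(t\mid x_{1:k-1}) = \int_{-\infty}^{t} p(s \mid x_{1:k-1})\,ds$ and the conditional density is $p(x_k\mid x_{1:k-1}) = p(x_{1:k})/p(x_{1:k-1})$. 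Here $p(x_{1:k})$ is the marginal obtained by integrating $p$ over the remaining coordinates. We then set $f(\mathbf{x}) = (\Phi^{-1}(u_1),\ldots,\Phi^{-1}(u_n))$.

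The first key step is well-definedness and bijectivity. Since $p$ is continuous and strictly positive, every marginal $p(x_{1:k})$ is strictly positive, so the conditional densities are defined everywhere and are strictly positive. Hence each $t\mapsto F_k(t\mid x_{1:k-1})$ is continuous and strictly increasing from $0$ to $1$, so it is a bijection $\mathbb{R}\to(0,1)$ for every fixed prefix.

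The map $\mathbf{x}\mapsto\mathbf{u}$ is triangular, because the $k$-th output depends only on $x_{1:k}$ and is strictly monotone in $x_k$. It can therefore be inverted recursively: recover $x_1 = F_1^{-1}(u_1)$, then $x_2 = F_2^{-1}(u_2\mid x_1)$, and so on. This gives a bijection $\mathbb{R}^n\to(0,1)^n$, and composing with the bijection $\Phi^{-1}:(0,1)\to\mathbb{R}$ keeps $f$ bijective.

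The second key step is to show that $\mathbf{u}$ is uniform on $(0,1)^n$. I would argue by induction using conditional probabilities rather than Jacobians, which avoids needing differentiability everywhere. First, $u_1 = F_1(x_1)$ is uniform by the probability integral transform, since $F_1$ is a continuous strictly increasing CDF.

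Next, condition on $x_{1:k-1}$. Then $u_k$ is the conditional CDF of $x_k$ evaluated at $x_k$, so $u_k$ is uniform on $(0,1)$ given $x_{1:k-1}$. Because $x_{1:k-1}$ and $u_{1:k-1}$ determine each other bijectively, $u_k$ is uniform and independent of $u_{1:k-1}$. Multiplying the conditionals gives $\mathbf{u}\sim\mathrm{Unif}(0,1)^n$.

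Finally, $\Phi^{-1}(u_k)$ are i.i.d.\ standard normal, so $f(\mathbf{x})\sim\mathcal{N}(0,\mathbf{I})$. As a cross-check, one can also verify via the change of variables formula: the triangular Jacobian has diagonal entries $p(x_k\mid x_{1:k-1})$, whose product is $p(\mathbf{x})$. This is where differentiability almost everywhere is used, so that the Jacobian of the map exists almost everywhere.

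I expect the main obstacle to be the measure-theoretic care in the second step. The sticking points are that the conditional CDFs must be jointly measurable in the prefix and that conditional uniformity must upgrade to joint independence. Strict positivity and continuity of $p$ resolve the first issue, since the $F_k$ are continuous in all arguments by dominated convergence. I would handle the second by computing
\[
P(u_1\le a_1,\ldots,u_n\le a_n) = \prod_{k} a_k
\]
directly, iterating the conditioning from the last coordinate backward.
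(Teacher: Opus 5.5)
Your proposal is correct and takes essentially the paper's approach: the Rosenblatt map of successive conditional CDFs sends $\mathbf{x}$ to a uniform vector on the cube, triangularity plus strict monotonicity in each coordinate gives invertibility, and $\Phi^{-1}$ is applied coordinatewise. Your conditioning argument for uniformity and independence, and your composition $f=\Phi^{-1}\circ g$ (the paper writes $f=g^{-1}\circ\Phi$, i.e.\ the inverse direction), are more careful than the paper's sketch. One small correction: continuity of $F_k$ in the prefix does not follow from dominated convergence under these hypotheses, because marginals of a continuous positive density can be discontinuous, or even $+\infty$ on a null set of prefixes. You never need that continuity, though: joint measurability follows from Tonelli, and on that null set you can set $F_k(\cdot\mid x_{1:k-1})=\Phi$ without affecting bijectivity or the law of $\mathbf{u}$.
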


\begin{proof}
The construction of $f$ involves two successive bijective transformations.  First, we apply the Rosenblatt transform~\cite{rosenblatt1952remarks} to map \( \mathbf{x} = (x_1, \ldots, x_n) \) to \( \mathbf{y} = (y_1, \ldots, y_n) \in [0,1]^n \) via
\begin{equation}
\begin{aligned}
y_1 &= F_1(x_1), \\
y_2 &= F_2(x_2 \mid x_1), \\
&\;\vdots \\
y_n &= F_n(x_n \mid x_1, \ldots, x_{n-1}),
\end{aligned}
\end{equation}
where each \( F_i \) is the conditional CDF of \( x_i \) given \( x_1, \ldots, x_{i-1} \).
Under mild regularity conditions, specifically, continuity and strict monotonicity in each \( x_i \), this transform is invertible with
\begin{equation}
\begin{aligned}
x_1 &= F_1^{-1}(y_1), \\
x_2 &= F_2^{-1}(y_2 \mid x_1), \\
&\;\vdots \\
x_n &= F_n^{-1}(y_n \mid x_1, \ldots, x_{n-1}).
\end{aligned}
\end{equation}
This yields \( \mathbf{y} = g(\mathbf{x}) \sim \mathcal{U}[0,1]^n \), with independent components.
Next, applying the inverse standard Gaussian CDF element-wise gives
\begin{equation}
\mathbf{s} = \Phi^{-1}(\mathbf{y}) \sim \mathcal{N}(0, \mathbf{I}).
\end{equation}
Combining the two steps, we define the overall transformation
\begin{equation}
f = g^{-1} \circ \Phi,
\end{equation}
such that \( \mathbf{x} = f(\mathbf{s}) \), where \( \mathbf{s} \sim \mathcal{N}(0, \mathbf{I}) \). Since each step is invertible, the full transformation is bijective.
\end{proof}

\subsection{Upper Bound on Gradient Estimation Error}
\label{subsec:upper_bound}

To justify the validity of the approximation of the proposed surrogate network, we provide a theoretical analysis of the gradient estimation error bound in Theorem~\ref{thm:gradient_bound}. 
This bound connects the output space discrepancy to the gradient estimation error, thereby justifying how controlling the output difference guarantees optimization robustness.
Notably, although practical lossy codecs contain discrete operations that preclude differentiability, the following derivation analyzes a locally smoothed distortion response of the codec to provide theoretical insight into the surrogate-gradient approximation.

\begin{theorem}[Gradient Error Bound via Output-Space Discrepancy]
\label{thm:gradient_bound}
Let $\eta(\cdot)$ denote a function (representing the true video codec), and let $\phi(\cdot)$ denote another function (representing the surrogate network). Assume that within a local neighborhood $\mathcal{B}_\epsilon(x)=\{x+\delta:\|\delta\|_2\le\epsilon\}$, both functions have Lipschitz continuous gradients with Lipschitz constants $L_\eta$ and $L_\phi$, respectively. Then, the difference between the true gradient $\nabla \eta(x)$ and the surrogate gradient $\nabla \phi(x)$ is bounded by
\begin{equation}
\begin{aligned}
    \|\nabla \eta(x)-\nabla \phi(x)\|_2
    \le
    \frac{2D}{\epsilon}+\frac{(L_\eta+L_\phi)\epsilon}{2} \\
    \text{where }\ D=\sup_{\|\delta\|_2\le\epsilon}\big|\eta(x+\delta)-\phi(x+\delta)\big|.
\end{aligned}
\end{equation}
\end{theorem}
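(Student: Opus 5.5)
The plan is to work with the difference function $h=\eta-\phi$ on $\mathcal{B}_\epsilon(x)$, treating $\eta,\phi$ as scalar-valued (or as one output coordinate; the vector case then follows componentwise or by pairing with a unit vector). By assumption $\nabla h=\nabla\eta-\nabla\phi$ is Lipschitz on the ball with constant at most $L_\eta+L_\phi$. By definition of $D$, $|h(x+\delta)|\le D$ for all $\|\delta\|_2\le\epsilon$. The goal is to convert this zeroth-order (output-space) bound into a first-order bound at the center, using the standard descent-lemma estimate.

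\textbf{Step 1 (quadratic Taylor bound).} For any $\delta$ with $\|\delta\|_2\le\epsilon$, write $h(x+\delta)-h(x)-\langle\nabla h(x),\delta\rangle=\int_0^1\langle\nabla h(x+s\delta)-\nabla h(x),\delta\rangle\,ds$. Bounding the integrand by Lipschitz continuity gives
\[
\bigl|h(x+\delta)-h(x)-\langle\nabla h(x),\delta\rangle\bigr|\le \tfrac{L_\eta+L_\phi}{2}\|\delta\|_2^2 .
\]
This uses only that the segment $[x,x+\delta]$ lies in the ball, which holds by convexity.

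\textbf{Step 2 (choose the direction).} If $\nabla h(x)=0$ there is nothing to prove. Otherwise take $\delta=\epsilon\,\nabla h(x)/\|\nabla h(x)\|_2$, so that $\langle\nabla h(x),\delta\rangle=\epsilon\|\nabla h(x)\|_2$. Rearranging Step 1 gives
\[
\epsilon\|\nabla h(x)\|_2\le |h(x+\delta)|+|h(x)|+\tfrac{L_\eta+L_\phi}{2}\epsilon^2\le 2D+\tfrac{L_\eta+L_\phi}{2}\epsilon^2 .
\]
The second inequality uses $|h|\le D$ at both $x+\delta$ and $x$, which is where the factor $2D$ comes from. Dividing by $\epsilon$ yields the claimed bound.

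\textbf{Main obstacle.} The delicate point is the vector-valued case, where $\eta$ and $\phi$ map frames to frames and $\nabla$ denotes a Jacobian with its spectral norm. In that case I would fix unit vectors $u,v$, apply the scalar argument to $\langle u,h(\cdot)\rangle$ with $\delta=\epsilon v$, and take the supremum over $u,v$. For this I need to interpret $D$ with $|\cdot|$ as the Euclidean norm of the output difference and the Lipschitz constants in operator norm. I would state this interpretation explicitly and note the paper's caveat that $\eta$ is a locally smoothed model of the codec.
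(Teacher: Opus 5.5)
Your proposal is correct and follows essentially the same route as the paper: a Lipschitz-gradient Taylor remainder bound of $\tfrac{L_\eta+L_\phi}{2}\epsilon^2$, the factor $2D$ from bounding the output discrepancy at both $x$ and $x+\delta$, and the choice of $\delta$ of length $\epsilon$ aligned with the gradient difference. The only differences are that you work with $h=\eta-\phi$ directly instead of subtracting two separate expansions, and that you also treat the case $\nabla h(x)=0$ (where the paper's $\delta^*$ is undefined) and the vector-valued Jacobian case, which are refinements the paper's scalar argument leaves implicit.
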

\begin{proof}
For any perturbation $\delta$ within the local neighborhood $\mathcal{B}_\epsilon(x)$, the Taylor expansions of the true codec response $\eta(\cdot)$ and the surrogate response $\phi(\cdot)$ can be written as
\begin{align}
\eta(x+\delta) &= \eta(x) + \nabla \eta(x)^\top \delta + R_\eta(\delta), \quad |R_\eta(\delta)| \leq \frac{L_\eta}{2}\|\delta\|_2^2, \\
\phi(x+\delta) &= \phi(x) + \nabla \phi(x)^\top \delta + R_\phi(\delta), \quad |R_\phi(\delta)| \leq \frac{L_\phi}{2}\|\delta\|_2^2.
\end{align}
Subtracting the two expansions gives
\begin{equation}
\begin{aligned}
    [\nabla \eta(x)-\nabla \phi(x)]^\top \delta
    =
    [\eta(x+\delta)-\phi(x+\delta)] &
     - [\eta(x)-\phi(x)]\\
    &- [R_\eta(\delta) - R_\phi(\delta)].
\end{aligned}
\end{equation}
Since $\|\delta\|_2\le\epsilon$, both $|\eta(x+\delta)-\phi(x+\delta)|$ and $|\eta(x)-\phi(x)|$ are bounded by $D$. Therefore,
\begin{equation}
\begin{aligned}
    \left|[\nabla \eta(x)-\nabla \phi(x)]^\top \delta\right|
    &\le
    \left|\eta(x+\delta)-\phi(x+\delta)\right|
    + \left|\eta(x)-\phi(x)\right| \\
     & \quad \quad + |R_\eta(\delta)| + |R_\phi(\delta)| \\
    &\le
    2D + \frac{L_\eta+L_\phi}{2}\epsilon^2.
\label{eq:ineq_bound}
\end{aligned}
\end{equation}
The above inequality holds for any valid $\delta$. To recover the norm of the gradient difference, we construct $\delta^*$ to be aligned with $\nabla \eta(x)-\nabla \phi(x)$, namely
\begin{equation}
    \delta^*=
    \epsilon\frac{\nabla \eta(x)-\nabla \phi(x)}
    {\|\nabla \eta(x)-\nabla \phi(x)\|_2}.
\end{equation}
Substituting $\delta^*$ into Eq.~\eqref{eq:ineq_bound} yields
\begin{equation}
    \epsilon \|\nabla \eta(x)-\nabla \phi(x)\|_2
    \le
    2D+\frac{L_\eta+L_\phi}{2}\epsilon^2.
\end{equation}
Dividing both sides by $\epsilon$ gives
\begin{equation}
    \|\nabla \eta(x)-\nabla \phi(x)\|_2
    \le
    \frac{2D}{\epsilon}+\frac{(L_\eta+L_\phi)\epsilon}{2}.
\end{equation}
\end{proof}

Theorem \ref{thm:gradient_bound} provides a theoretical justification for optimization robustness. Since our goal is to ensure that the surrogate gradient $\nabla \phi(x)$ closely approximates the true gradient $\nabla \eta(x)$, this gradient difference is controlled by the output discrepancy $D$ and the local smoothness constants. By minimizing the $L_1$ loss between the true codec and the surrogate network during training, we explicitly minimize $D$. Consequently, as $D \to 0$, the upper bound of the gradient difference converges to a small value dependent on the Lipschitz constants and the perturbation radius $\epsilon$. This shows that minimizing the observational loss inherently reduces the upper bound on the local gradient mismatch, thereby supporting the use of the surrogate network as a practical gradient estimator.

\begin{figure}
    \centering
    \includegraphics[trim=0cm 0cm 0cm 0cm, width=0.95\linewidth]{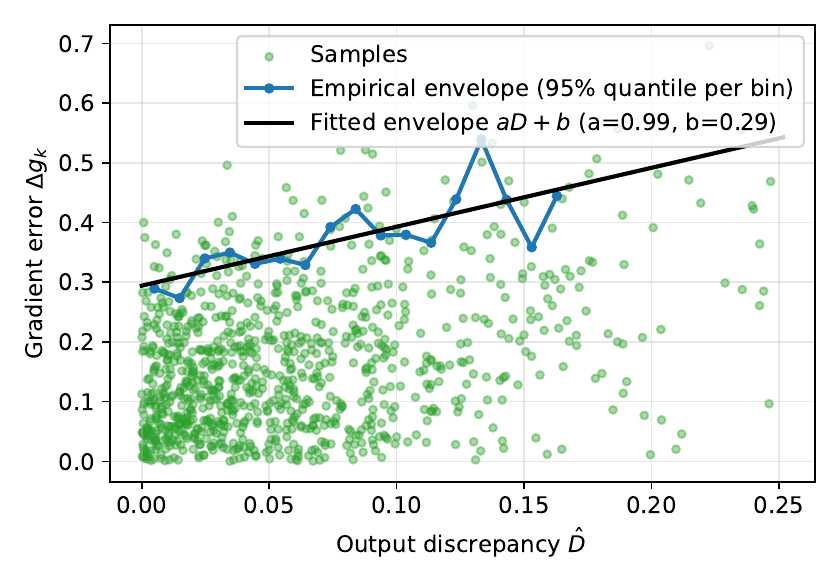}
    \caption{Empirical validation of Theorem~\ref{thm:gradient_bound}. Scatter plot of the gradient error \(\Delta g_k=\big|\nabla_{\text{num}}(\mathbf{x};\mathbf{u}_k) - \nabla_{\text{surr}}(\mathbf{x};\mathbf{u_k})\big|\) versus the output discrepancy $\hat{D}(\mathbf{x})$. The blue curve indicates 95\% quantile per bin, and the black curve shows the fitted upper envelope $0.99 {D} +0.29$.}
    \label{fig:exp2_empirical_envelope}
    \vspace{15pt} 
\end{figure}



To empirically validate Theorem~\ref{thm:gradient_bound}, we explore how the surrogate mismatch in the output space correlates with the gradient estimation error. 
For each video clip, we estimate the numerical gradient of $\eta(\cdot)$ using  finite differences. Specifically, we sample $K$ random unit directions $\{\mathbf{u_k}\}_{k=1}^{K}$ with $\|\mathbf{u}_k\|_2=1$ and compute
\begin{equation}
    \nabla_{\text{num}}(\mathbf{x};\mathbf{u}_k)=\frac{\eta(\mathbf{x}+\epsilon\mathbf{u}_k)-\eta(\mathbf{x}-\epsilon\mathbf{u}_k)}{2\epsilon}.
\end{equation}
Meanwhile, we compute the surrogate directional derivative via backpropagation:
\begin{equation}
    \nabla_{\text{surr}}(\mathbf{x};\mathbf{u_k})=\nabla_{\mathbf{x}}\phi(\mathbf{x})^\top \mathbf{u}_k,
\end{equation}
and measure the gradient error \(\Delta g_k=\big|\nabla_{\text{num}}(\mathbf{x};\mathbf{u}_k) - \nabla_{\text{surr}}(\mathbf{x};\mathbf{u_k})\big|\). To obtain an observable proxy of \(D\) in Theorem~\ref{thm:gradient_bound}, we estimate the local output discrepancy in the same neighborhood through random sampling:
\begin{equation}
    \widehat{D}(\mathbf{x})=\max_{k=1,\dots,K}\big|\eta(\mathbf{x}+\epsilon\mathbf{u}_k)-\phi(\mathbf{x}+\epsilon\mathbf{u}_k)\big|.
\end{equation}
We then plot \(\Delta g_k\) against \(\widehat{D}\) and fit an empirical upper envelope using the form \(a{\widehat{D}}+b\), which aligns with the \(\mathcal{O}({D})\) dependence predicted by Theorem~\ref{thm:gradient_bound}. As shown in Fig.~\ref{fig:exp2_empirical_envelope}, the fitted envelope basically follows this theoretical trend, supporting our interpretation that reducing the surrogate output mismatch also reduces the gradient mismatch.

Furthermore, to directly assess how surrogate-gradient quality affects the final RD performance, we conduct a controlled degradation study by injecting Gaussian noise into the surrogate gradients backpropagated to the downscaler during training. Specifically, let \(\mathbf{g}\) denote the gradient signal passed from the surrogate branch to the downscaler parameters. We perturb it as
\begin{equation}
    \widetilde{\mathbf{g}}=(1-\alpha)\mathbf{g}+\alpha\cdot \sigma(\mathbf{g})\cdot \boldsymbol{\xi}, \quad \boldsymbol{\xi}\sim\mathcal{N}(\mathbf{0},\mathbf{I}),
    \label{equ:bd_degradation}
\end{equation}
where \(\alpha\in[0,1]\) controls the mixing weight, and \(\sigma(\mathbf{g})=\mathrm{RMS}(\mathbf{g})\) (equivalently \(\|\mathbf{g}\|_2/\sqrt{|\mathbf{g}|}\)) normalizes the noise magnitude to match the scale of the original gradient. This design ensures that increasing \(\alpha\) progressively reduces the signal-to-noise ratio of the optimization direction, approaching a near-random gradient signal when \(\alpha\to 1\).
\begin{table}
    \centering
    \caption{BDBR results with different noise mixing weights ($\alpha$).}
    \label{tab:bdbr_degradation}
    \resizebox{0.5\linewidth}{!}{%
        \begin{tabular}{c|c|c}
            \hline
            $\alpha$ & BDBR (PSNR) & BDBR (SSIM) \\
            \hline \hline
            0.0 & -12.38 & -10.62 \\
            0.1 & -6.59 & -5.84 \\
            0.2 & -5.42 & -4.15 \\
            0.3 & -1.91 & -1.23 \\
            0.4 & 3.65 & 5.15 \\
            \hline
        \end{tabular}%
    }
    \vspace{-10pt}
\end{table}
As shown in Table~\ref{tab:bdbr_degradation}, the BDBR performance consistently degrades as \(\alpha\) increases, confirming that accurate surrogate gradients are crucial for end-to-end optimization under non-differentiable codecs and providing empirical support for the robustness analysis implied by Theorem~\ref{thm:gradient_bound}.

\section{Detailed Network Architecture}
\label{sec: supp_network}

\subsection{Transformation Modules}
\label{subsec: transformation_module}

Following prior designs \cite{huang2021video,ho2022video}, the coupling transformation modules $\varphi(\cdot)$, $\eta(\cdot)$, $\rho(\cdot)$, $\gamma(\cdot)$, and $\beta(\cdot)$ in MIMO-VRN, as well as the update module in MIMO-TWT, are implemented using DenseBlock \cite{DenseNet2017}, as shown in Fig.~\ref{fig:denseblock}. The DenseBlock utilizes direct connections from each layer to all subsequent layers. As a result, the $\ell$-th layer receives the feature maps from all preceding layers, $\f_0, \f_1, ..., \f_{\ell-1}$, as input:
\begin{equation}
    \f_\ell = H_\ell([\f_0, \f_1, ..., \f_{\ell-1}]),
\end{equation}
where $[\f_0, \f_1, ..., \f_{\ell-1}]$ denotes the concatenation of feature maps from layers $0$ to $\ell-1$.  
If each transformation module $H_\ell(\cdot)$ produces $k$ feature maps, then the $\ell$-th layer receives $k_0 + k \cdot (\ell-1)$ input channels, where $k_0$ is the number of channels in the input layer. Here, $k_0$ is set to $24$ in our work.

\begin{figure}[t]
\centering
\includegraphics[width=0.95\columnwidth]{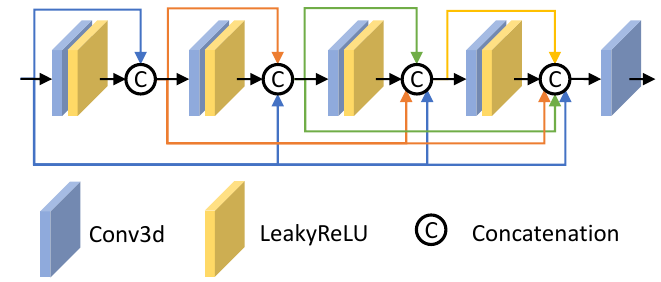}
\caption{\textbf{Detailed Structure of DenseBlock.}}
\label{fig:denseblock}
\end{figure}

\begin{figure}[t]
\centering
\includegraphics[width=0.95\columnwidth]{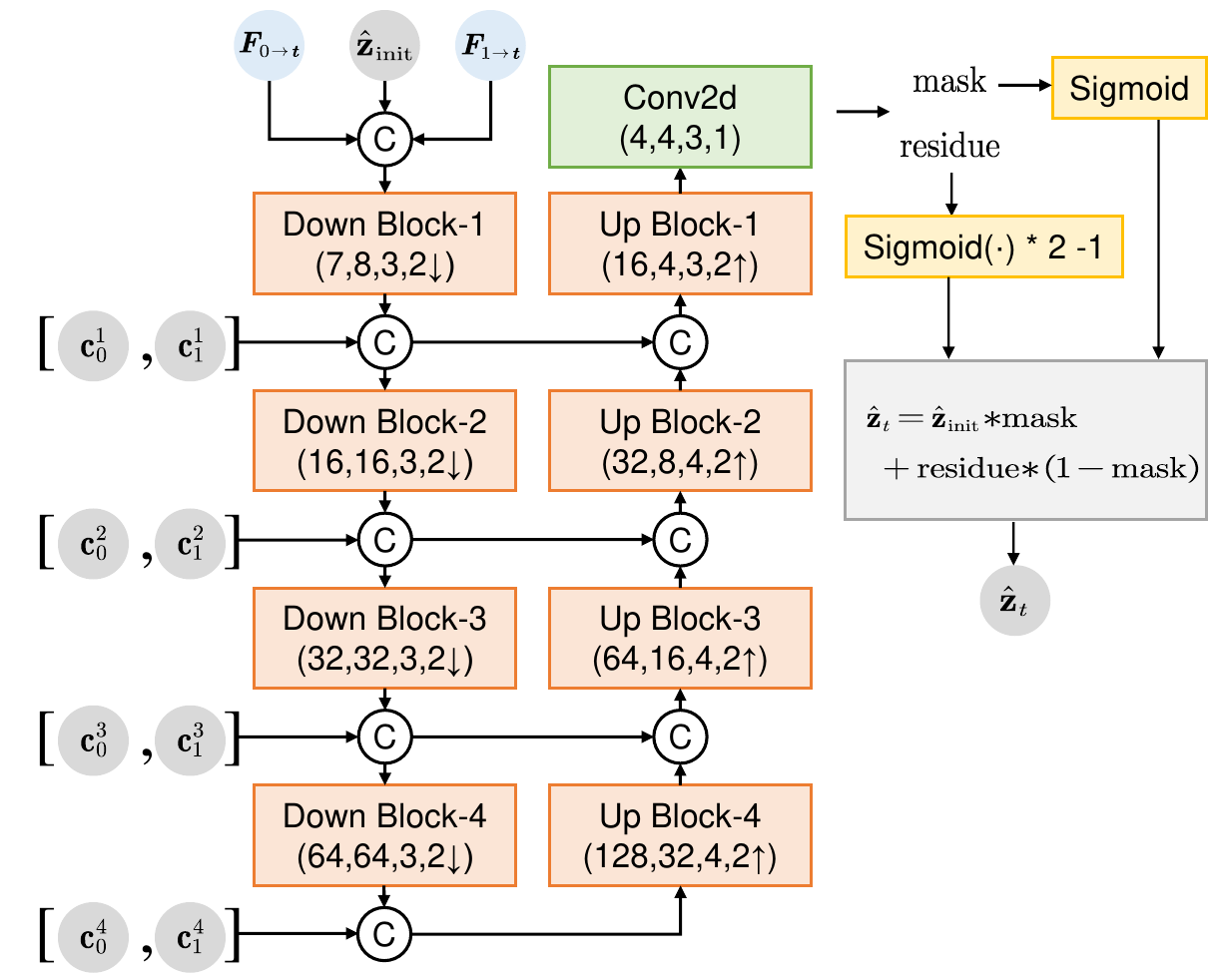}
\caption{\textbf{Detailed Structure of Context-aware U-Net.}}
\label{fig:unet}
\end{figure}

\begin{figure*}[t]
\centering
\includegraphics[width=1.6\columnwidth]{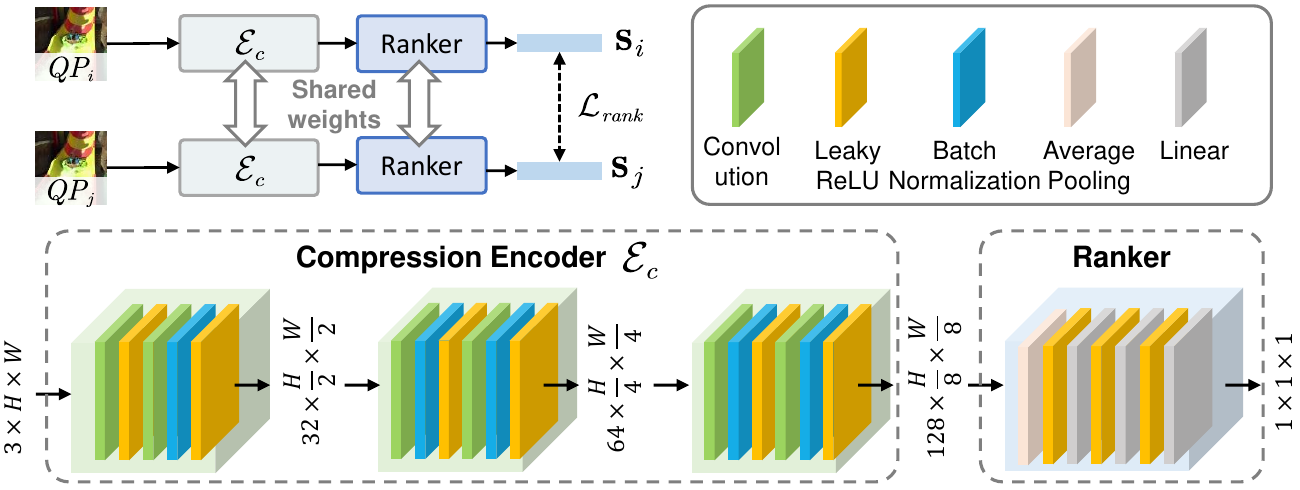}
\caption{\textbf{Detailed Structure of Compression Encoder and Ranker.}}
\label{fig:compression_encoder}
\end{figure*}

\subsection{Context-Aware U-Net}
\label{subsec: context_unet}

We design a context-aware U-Net architecture to hierarchically incorporate multi-scale contextual features into reconstructed high-frequency components, as shown in Fig. \ref{fig:unet}. The network consists of four downsampling blocks and four corresponding upsampling blocks, with skip connections between the encoder and decoder layers at the same resolution. 

Each downsampling block begins with a strided convolution for spatial reduction, followed by one or more convolutional layers with PReLU activations \cite{he2015delving}. As shown in Table \ref{tab: unet}, the first downsampling block employs multiple stacked convolutions to enhance low-level feature extraction, while subsequent blocks adopt a lighter structure with two convolutional layers.

The upsampling path mirrors the encoder, with each block consisting of a transposed convolution to restore spatial resolution, followed by PReLU activation. After upsampling, a single convolutional layers are applied to refine the output. Specifically, the first three channels of the final convolutional layer output are used as the residue \(\mathbf{R}\), and the last channel is used as the mask \(\mathbf{M}\). Both are normalized using the sigmoid function:
\begin{equation}
    \mathbf{M}' = \text{Sigmoid}(\mathbf{M}), \quad \mathbf{R}' = 2 \times \text{Sigmoid}(\mathbf{R}) - 1.
\end{equation}
The final reconstructed high-frequency components \(\hat{\mathbf{z}}_t\) are obtained via element-wise weighted summation:
\begin{equation}
    \hat{\mathbf{z}}_t = \hat{\mathbf{z}}_\text{init} \cdot \mathbf{M}' + \mathbf{R}' \cdot (1 - \mathbf{M}').
\end{equation}

\subsection{Lightweight Bi-directional Optical Flow Network}
\label{subsec: lightweight_op_net}

\setcounter{table}{5}
\begin{table}[t]
\centering
\caption{Configuration of the context-aware U-Net.\\ ``In" and ``Out" indicate the input and output channel dimensions, respectively. ``K" denotes the kernel size, ``S" the stride, and ``P" the padding.}
\renewcommand{\arraystretch}{1.2}
\label{tab: unet}
\resizebox{0.48\textwidth}{!}{%
\begin{tabular}{|l|l|c|c|c|c|c|}
\hline
\textbf{Module} & \textbf{Layer Type} & \textbf{In} & \textbf{Out} & \textbf{K} & \textbf{S} & \textbf{P} \\
\hline
\multirow{4}{*}{Down Block-1} 
 & Conv2d + PReLU & 7 & 8 & 3×3 & 2 & 1 \\
 & Conv2d + PReLU & 8 & 8 & 3×3 & 1 & 1 \\
 & Conv2d + PReLU & 8 & 8 & 3×3 & 1 & 1 \\
 & Conv2d + PReLU & 8 & 8 & 3×3 & 1 & 1 \\
\hline
\multirow{2}{*}{Down Block-2}
 & Conv2d + PReLU & 16 & 16 & 3×3 & 2 & 1 \\
 & Conv2d + PReLU & 16 & 16 & 3×3 & 1 & 1 \\
\hline
\multirow{2}{*}{Down Block-3}
 & Conv2d + PReLU & 32 & 32 & 3×3 & 2 & 1 \\
 & Conv2d + PReLU & 32 & 32 & 3×3 & 1 & 1 \\
\hline
\multirow{2}{*}{Down Block-4}
 & Conv2d + PReLU & 64 & 64 & 3×3 & 2 & 1 \\
 & Conv2d + PReLU & 64 & 64 & 3×3 & 1 & 1 \\
\hline
\multirow{3}{*}{Up Block-1}
 & ConvTrans2d + PReLU & 16 & 4 & 4×4 & 2 & 1 \\
 & Conv2d + PReLU & 4 & 4 & 3×3 & 1 & 1 \\
 & Conv2d + PReLU & 4 & 4 & 3×3 & 1 & 1 \\
\hline
Up Block-2 & ConvTrans2d + PReLU & 128 & 32 & 4×4 & 2 & 1 \\
\hline
Up Block-3 & ConvTrans2d + PReLU & 64 & 16 & 4×4 & 2 & 1 \\
\hline
Up Block-4 & ConvTrans2d + PReLU & 32 & 8 & 4×4 & 2 & 1 \\
\hline
\end{tabular}
}
\end{table}

\begin{table*}[]
\centering
\caption{\textbf{Quantitative ($\text{PSNR}_\text{avg. mse}$/SSIM) comparisons of various methods on
UCF101~\cite{soomro2012ucf101}, Vimeo90K~\cite{xue2019video} and
SNU-FILM~\cite{choi2020channel} benchmarks using the rounding-based quantization as compression. }\\
 {\textbf{BOLD}}: best
performance, {\underline{UNDERLINE}}: second best performance.}
\begin{tabular}{{|l|l||c|c|c|c|c|c|}}
\hline
\multirow{2}{*}{Downscaler} & \multirow{2}{*}{Upscaler} & \multirow{2}{*}{UCF101} &
\multirow{2}{*}{Vimeo90K} & \multicolumn{4}{c|}{SNU-FILM}   \\
\cline{5-8}
& & &  &  easy & medium & hard & extreme \\
\hline
\hline 
 \multirow{13}{*}{\begin{tabular}[c]{@{}c@{}}Frame\\      Skipping\end{tabular}}
 & AdaCoF \cite{lee2020adacof} & 35.34/0.9815 & 33.96/0.9705 & 36.25/0.9839 & 35.74/0.9778 & 33.77/0.9782 & 31.02/0.9531
                           \\
 & BMBC \cite{park2020bmbc} & 35.58/0.9814 & 34.50/0.9711 & 36.72/0.9837 & 36.01/0.9785 & 33.63/0.9691 & 30.63/0.9528
                       \\
 & CDFI \cite{ding2021cdfi} & 35.65/0.9754 & 34.66/0.9789 & 36.91/0.9817 & 36.20/0.9785 & 34.03/0.9693 & 31.24/0.9543 
                       \\
 & ABME \cite{park2021asymmetric}& 35.82/0.9820 & 35.67/0.9820 & 37.41/0.9840 & 36.46/0.9790 & 34.89/0.9722 & 32.24/0.9543
                              \\
 & XVFI$_v$ \cite{sim2021xvfi} & 35.65/0.9673 & 34.56/0.9788 & 36.98/0.9800 & 36.06/0.9740 & 34.21/0.9579 & 31.43/0.9312   \\
 & EBME-H* \cite{jin2022enhanced} & 35.89/0.9820 & 35.58/0.9822 & 37.48/0.9825 & 36.92/0.9816 & 35.02/0.9746 & 32.11/0.9592
                          \\
 & VFIformer \cite{lu2022video} & 40.20/{\textbf{0.9900}}
                  & 41.27/0.9939
                                   & 44.90/0.9969 & 40.86/0.9933
                                   & 35.44/0.9793
                                   & 30.20/0.9548 
                               \\
 & IFRNet \cite{kong2022ifrnet}
                                   & 40.19/{\underline{0.9899}}
                                   & 40.97/0.9936 & 44.87/0.9969
                                   & 40.89/0.9932
                                   & 35.40/0.9789
                                   & 30.04/0.9536  \\
 & UPR-Net \cite{jin2023unified} & 40.18/{\underline{0.9899}} & 40.80/0.9934
         & 45.14/0.9970 &40.93/0.9932 &35.44/0.9788 &30.26/0.9542
                 \\
 & UPR-Net l \cite{jin2023unified} &  40.20/{\textbf{0.9900}} 
               & 41.05/0.9937 &  45.19/0.9970
                  & 41.01/0.9933
                  & 35.58/0.9790
                  & 30.35/0.9545
                \\
 & UPR-Net L \cite{jin2023unified} &  40.24/{\textbf{0.9900}} 
                  & 41.19/0.9938
                  &  {\underline{45.21}}/{\underline{0.9970}}
                  &  {\underline{41.06}}/0.9934
                  &  35.63/0.9792
                  &  30.40/0.9547
                \\
 & EMA-VFI \cite{zhang2023extracting} &  {{40.24}}/{\textbf{0.9900}}
                  & {\underline{41.41}}/{\underline{0.9939}}
                  &  44.75/0.9940
                  &  40.86/{\underline{0.9970}}
                  &  {{35.71}}/{\underline{0.9934}}
                  &  {{30.46}}/{{0.9554}}
             \\
 & GIMM-VFI \cite{guo2024generalizable} & \underline{40.25}/\textbf{0.9900} & 40.59/0.9930 & 44.96/0.9969 & 40.93/0.9933 & \underline{35.76}/0.9802 & \underline{30.72}/\underline{0.9576}
 \\ \hline

\multicolumn{2}{|c|}{TVRN}     &   {\textbf{49.05}}/{\textbf{0.9900}}
                  & {\textbf{47.50}}/{\textbf{0.9977}} 
                  &  {\textbf{51.01}}/{\textbf{0.9989}} 
                  &  {\textbf{48.97}}/{\textbf{0.9983}} 
                  &  {\textbf{45.57}}/{\textbf{0.9966}} 
                  &  {\textbf{41.36}}/{\textbf{0.9905}} 
           \\
\hline
\end{tabular}

\label{tab:sota_lossless}
\end{table*}

\begin{figure*}[]
\centering
\includegraphics[width=0.85\textwidth]{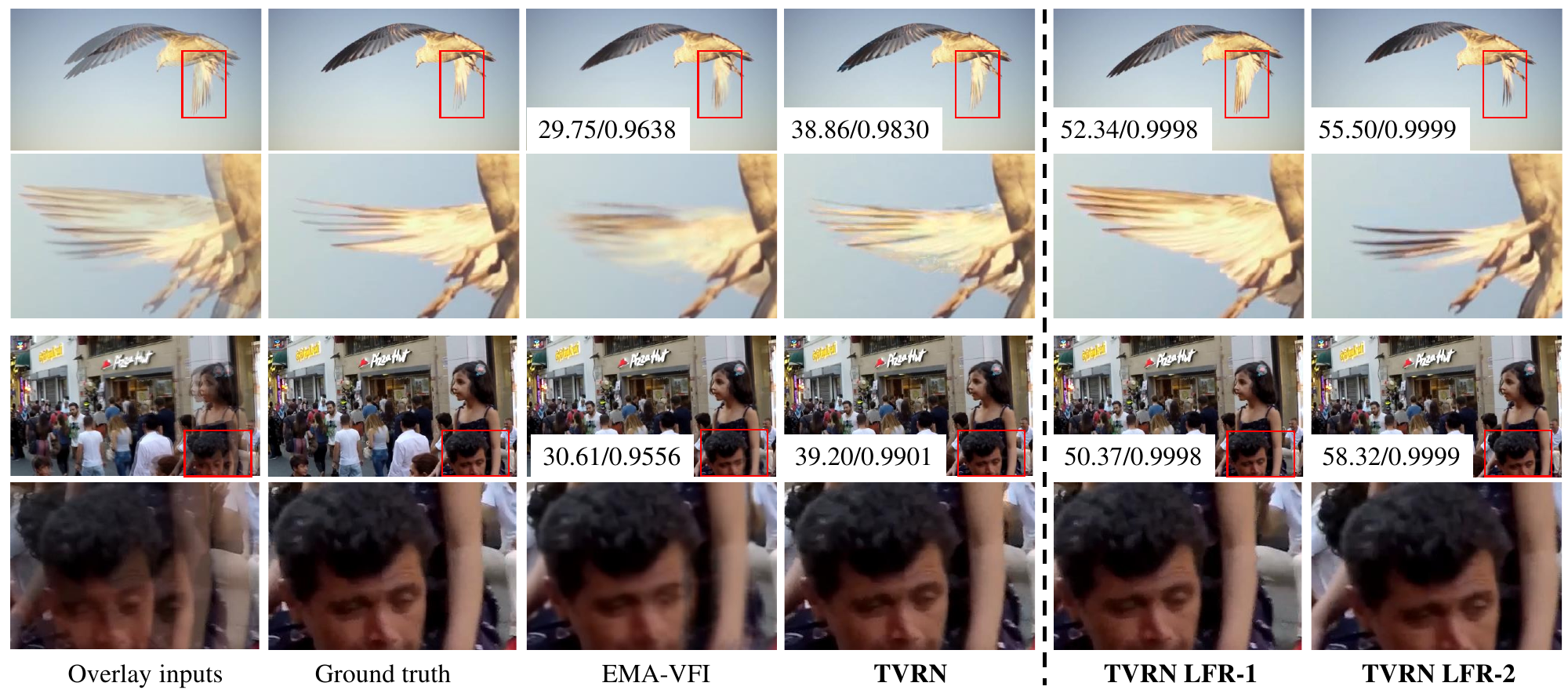}
\caption{
    \textbf{Qualitative comparisons on SNU-FILM datasets~\cite{choi2020channel} with the rounding-based quantization.}
}
\label{fig:vis-snufilm_lossless}
\end{figure*}

We adopt the lightweight Bi-directional Motion Estimation module proposed in~\cite{jin2022enhanced}, which estimates optical flow between two input images \(\mathbf{I}_0, \mathbf{I}_1 \in \mathbb{R}^{3 \times H \times W}\) via a coarse-to-fine strategy with \(L\) pyramid levels. At each level \(\ell\), the input images are downsampled by a factor of \(s_\ell = 2^{-\ell}\), yielding \(\mathbf{I}_0^{(\ell)}\) and \(\mathbf{I}_1^{(\ell)}\).
Specifically, a two-stage feature pyramid extracts multi-scale representations \(\mathbf{C}_1^{(\ell)}, \mathbf{C}_2^{(\ell)} \in \mathbb{R}^{2c \times \frac{H}{4 s_\ell} \times \frac{W}{4 s_\ell}}\) from each warped image, where \(c=24\). To initialize estimation at level \(\ell\), the flow \(\mathbf{F}^{(\ell+1)}\) and hidden features \(\mathbf{H}^{(\ell+1)}\) from the coarser level are upsampled and scaled:
\begin{equation}
    \tilde{\mathbf{F}}^{(\ell)} = 2 \cdot \text{Upsample}(\mathbf{F}^{(\ell+1)}), \quad 
    \tilde{\mathbf{H}}^{(\ell)} = \text{Upsample}(\mathbf{H}^{(\ell+1)}).
\end{equation}
Afterwards, motion compensation is performed using a middle-oriented forward warping:
\begin{equation}
    \hat{\mathbf{I}}_0^{(\ell)} = \mathcal{W}\big(\mathbf{I}_0^{(\ell)}, t \cdot \tilde{\mathbf{F}}^{(\ell)}_{1:2}\big), \quad 
    \hat{\mathbf{I}}_1^{(\ell)} = \mathcal{W}\big(\mathbf{I}_1^{(\ell)}, (1-t) \cdot \tilde{\mathbf{F}}^{(\ell)}_{3:4}\big).
\end{equation}
Then, the warped images are passed through the feature extractor to obtain \(\mathbf{C}_1^{(\ell)}, \mathbf{C}_2^{(\ell)}\), which, together with \(\tilde{\mathbf{F}}^{(\ell)}\) and \(\tilde{\mathbf{H}}^{(\ell)}\), are used by the flow estimator \(\mathcal{G}_{op}(\cdot)\) to produce a residual flow and updated hidden state:
\begin{equation}
    \Delta \mathbf{F}^{(\ell)}, \mathbf{H}^{(\ell)} = \mathcal{G}_{op}\big(\mathbf{C}_2^{(\ell)}, \mathbf{C}_2^{(\ell)}, \tilde{\mathbf{H}}^{(\ell)}, \tilde{\mathbf{F}}^{(\ell)}\big).
\end{equation}
The flow is then refined via residual connection:
\begin{equation}
    \mathbf{F}^{(\ell)} = \tilde{\mathbf{F}}^{(\ell)} + \Delta \mathbf{F}^{(\ell)}.
\end{equation}
Starting from zero initialization at the coarsest level, the network iteratively refines flow estimates across all pyramid levels. Finally, the final flow \(\mathbf{F}^{(0)}\) is upsampled by a factor of 4 to match the input resolution.

\begin{figure*}[]
\centering
\includegraphics[width=0.85\textwidth]{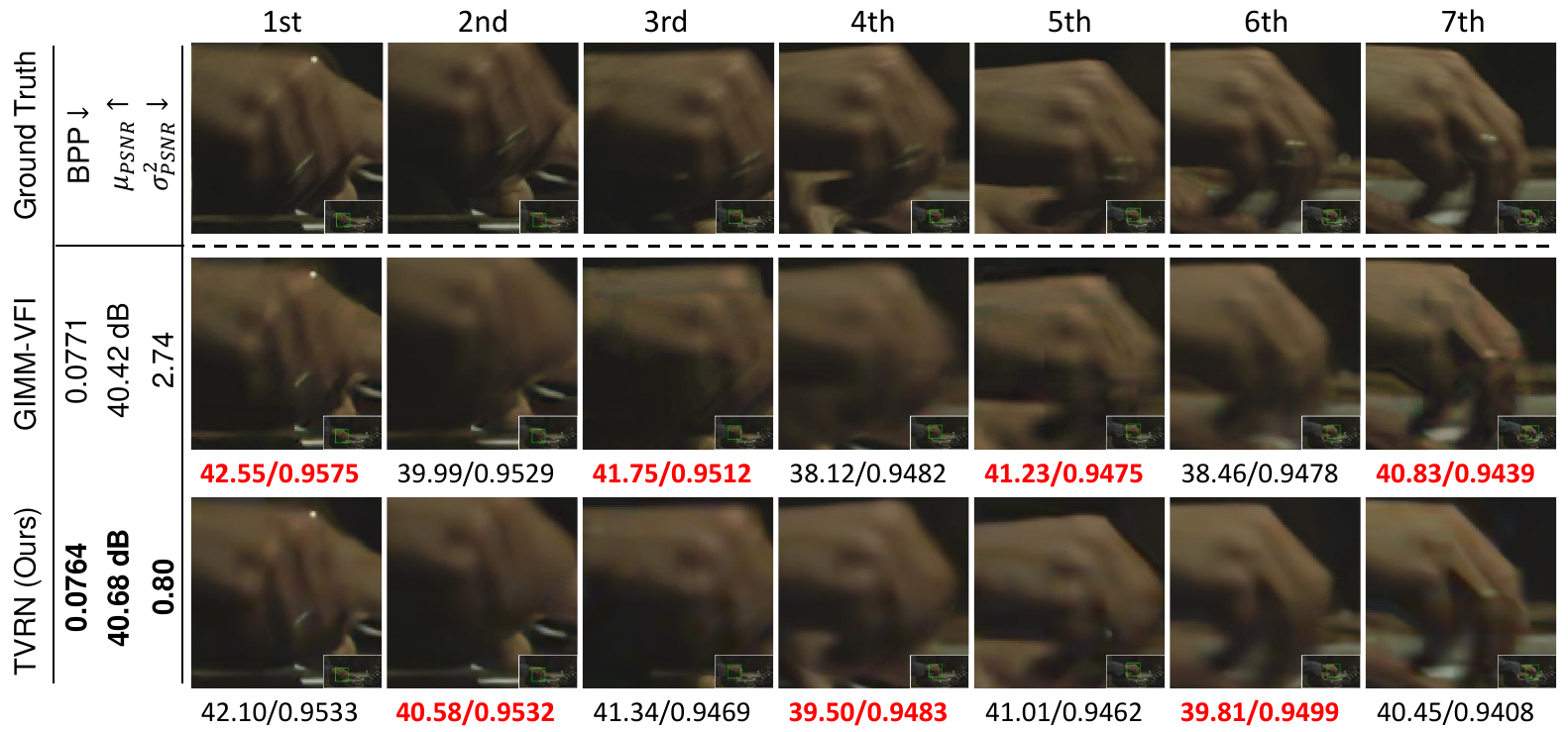}
\caption{
    \textbf{Qualitative comparisons of upscaled HFR videos on vimeo90K Septuplet test dataset \cite{xue2019video} with the lossy H.265 codec.}
}
\label{fig:vis-upscaled}
\end{figure*}

\begin{figure*}[]
\centering
\includegraphics[width=0.85\textwidth]{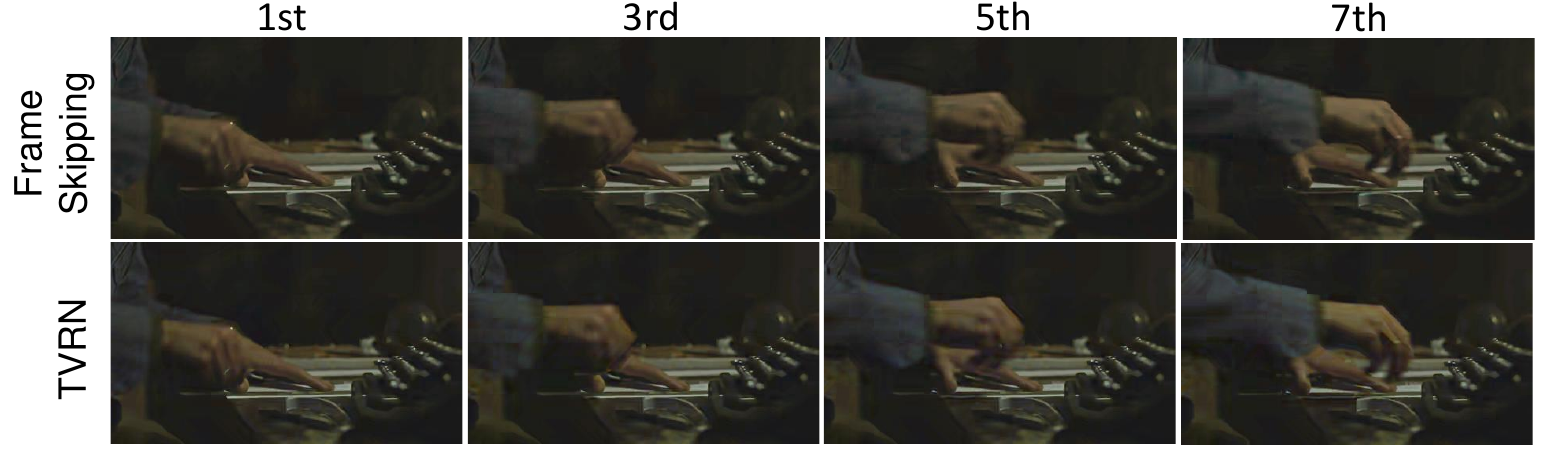}
\caption{
    \textbf{Qualitative comparisons of downscaled LFR videos on vimeo90K Septuplet test dataset \cite{xue2019video} with the lossy H.265 codec.}
}
\label{fig:vis-downscaled}
\end{figure*}

\subsection{Compression Encoder and Ranker}
\label{subsec: compression_encoder}

In Fig. \textcolor{red}{6}, we extract multi-level compression-aware features from compressed input frames using a pretrained compression encoder. Fig.~\ref{fig:viz_feature_tsne} further visualizes these features under different loss functions.
Here, we provide details on the learning-to-rank framework, as well as the architecture of the compression encoder and the ranker. As shown in Fig.~\ref{fig:compression_encoder}, we adopt a Siamese network composed of two identical branches, each containing a compression encoder and a ranker with shared parameters. Given a compressed frame $\mathbf{x} \in \mathbb{R}^{3 \times H \times W}$ at a specific QP, the encoder $\mathcal{E}_c$ outputs a scalar score $\mathbf{z} \in \mathbb{R}$ that reflects its compression level.
The compression encoder consists of three residual blocks, producing multi-level compression-aware features $\{\mathbf{f}_c^{(\ell)} \in \mathbb{R}^{(32\ell) \times \frac{H}{2^{\ell}} \times \frac{W}{2^{\ell}}}\}_{\ell=1}^{3}$. These features are later fused with aligned contextual features $\mathbf{f}_d\in\mathbb{R}^{C\times H\times W}$ from the enhancement module via a context-aware U-Net to generate the weight map $\mathbf{w}_q$ in Fig. \textcolor{red}{6}. During pretraining, the feature $\mathbf{f}c^{(2)}$ is passed to the ranker, which includes global average pooling followed by three fully connected layers, to yield a scalar score $\mathbf{s}$. By maximizing the score difference between frames with varying compression levels using the ranking loss $\mathcal{L}_{\text{rank}}$, the compression encoder is encouraged to capture subtle variations in compression quality, as evidenced by Fig. \ref{fig:viz_compression_aware_results}. In our implementation, the quantization parameter is randomly sampled from the list $[17, 22, 27, 32, 37]$.

\section{Experiments with Rounding-Based Quantization} 
\label{sec: lossless_performance}

In this section, we present additional comparison results with prior methods that adopt rounding-based quantization to approximate the non-differentiable encoder, following the default settings commonly used in rescaling approaches~\cite{xiao2020invertible, huang2021video, tian2021self, yang2023self, tian2023clsa}. Although this compression strategy is not used in actual deployment, previous works often simplify the compression process using this setup. Therefore, we include these experimental results in the appendix for reference.

\subsection{Experimental Setup}

\label{subsec: experiment_setup}
In this study, we use rounding-based quantization as a substitute for the surrogate network and present the reconstruction results in Table \ref{tab:sota_lossless} and the visual effects in Fig. \ref{fig:vis-snufilm_lossless}. The results indicate that employing a rounding-based quantization as a compression method significantly enhances reconstruction quality, with minimal impact on LFR videos.
It is important to note that since the reference frame undergoes only a rounding-based quantization for frame-skipping methods,  the commonly used logarithmic average PSNR metric:
\begin{equation}
    \mathrm{PSNR}_{\text{avg.log}}(\mathbf{x}, \hat{\mathbf{x}}) = \frac{1}{n} \sum_{i=0}^{n-1} 10\log_{10} \frac{\text{MAX}^2}{\text{MSE}(\mathbf{x}_i, \hat{\mathbf{x}}_i)},
\end{equation}
yields theoretically infinite values for non-interpolated frames where \(MSE(\mathbf{x}_i, \hat{\mathbf{x}}_i) = 0\). To avoid this issue, we instead compute PSNR based on the average MSE across all frames:
\begin{equation}
    \mathrm{PSNR}_{\text{avg.mse}}(\mathbf{x}, \hat{\mathbf{x}}) = 10\log_{10} \frac{\text{MAX}^2}{\frac{1}{n} \sum_{i=0}^{n-1} \text{MSE}(\mathbf{x}_i, \hat{\mathbf{x}}_i)},
\end{equation}
where \(n\) is the number of frames, and \(\mathbf{x}\) and \(\hat{\mathbf{x}}\) denote the ground truth and predicted frame sequences, respectively.
For consistency with the VFI evaluation configurations, we adopt a 3-frame setting. In this case, reported PSNR values from existing VFI methods, which are typically measured only on the interpolated frame, can be aligned with \(\mathrm{PSNR}_{\text{avg.log}}\) by adding a constant \(10\log_{10}3 \approx 4.77\) dB.

\subsection{Experimental Results}
\label{subsec: experimental_results}

The quantitative results are presented in Table~\ref{tab:sota_lossless}, and the visual comparisons with frame-skipping methods are shown in Fig.~\ref{fig:vis-snufilm_lossless}. From these results, we draw two key observations:

\begin{itemize}
    \item \textbf{Generative models, \textit{e.g.}, GIMM-VFI~\cite{guo2024generalizable}, basically unperforms under rounding-based quantization in terms of objective reconstruction quality, compared to VFI methods optimized for distortion metrics.} This is expected, as GIMM-VFI is designed for generalizable implicit motion modeling rather than precise reconstruction. However, under realistic compression scenarios, GIMM-VFI significantly outperforms state-of-the-art VFI methods~\cite{zhang2023extracting} in terms of PSNR and SSIM, as shown in the main text. Given the lack of prior research focused on video frame generation under compressed conditions, we believe this phenomenon is worthwhile to investigate further, specifically, {why generative models demonstrate better robustness to complex compression conditions compared to distortion-optimized methods}.

    \item \textbf{Our method successfully hides high-quality motion cues within LFR frames with minimal visual degradation under rounding-based quantization.} As shown in Fig.~\ref{fig:vis-snufilm_lossless}, the two LFR frames achieve PSNR scores above 50~dB, indicating that the distortion introduced by downscaling is visually negligible. Nevertheless, this subtle motion embedding yields significant benefits for final reconstruction, particularly in preserving structural details such as feather edges and contours.
\end{itemize}


\section{More Visualization Results}
\label{sec: more_viz}


We first present a qualitative comparison of a specific sequence from the Vimeo90K septuplet test dataset \cite{xue2019video}. Fig.~\ref{fig:vis-upscaled} and Fig.~\ref{fig:vis-downscaled} show the upscaled HFR reconstructions and the corresponding downscaled LFR references, respectively. Compared to frame-skipping methods based on GIMM-VFI \cite{guo2024generalizable}, our proposed approach sacrifices some reference frame quality to improve interpolation performance. This trade-off leads to not only a higher overall reconstruction quality (\textbf{40.68}~dB vs. 40.42~dB in frame-wise PSNR average), but also significantly reduced temporal quality fluctuation in the HFR reconstruction (\textbf{0.80} vs. 2.74 in frame-wise PSNR variance). In addition, the downscaled LFR frames show negligible perceptual differences from the original frames, supporting the effectiveness of our motion steganography design.

To supplement the visual results presented in the main text, we further provide more qualitative comparisons. On one hand, Fig.~\ref{fig:comp_hfr_selected} showcases the reconstruction results of interpolated frames from other representative sequences. On the other hand, Fig.~\ref{fig:vimeo_test_1}, \ref{fig:vimeo_test_2}, \ref{fig:vimeo_test_3}, and \ref{fig:vimeo_test_4} present the reconstruction HFR frames and downscaled LFR frames of the full sequences.

\begin{figure*}[!t]
\setlength{\tabcolsep}{0.5pt}
\centering
\label{fig: vimeo_test_1}

\resizebox{1.0\textwidth}{!}{
\scriptsize
\begin{tabular}{ccccccc>{\columncolor[HTML]{FFEEED}}c}
Ground Truth &  UPR-Net L \cite{jin2023unified} & IFRNet \cite{kong2022ifrnet} & EMA-VFI \cite{zhang2023extracting} & GIMM-VFI \cite{guo2024generalizable} & STAA \cite{xiang2022learning} & CSTVR* \cite{zhang2025continuous} & \textbf{TVRN (Ours)} \\
\includegraphics[width=0.09\textwidth]{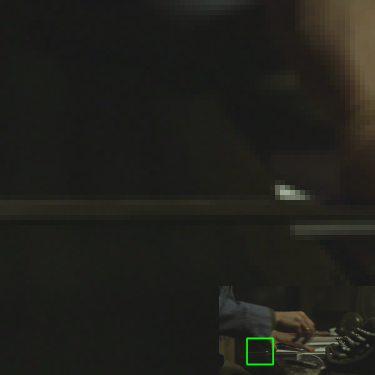}
&\includegraphics[width=0.09\textwidth]{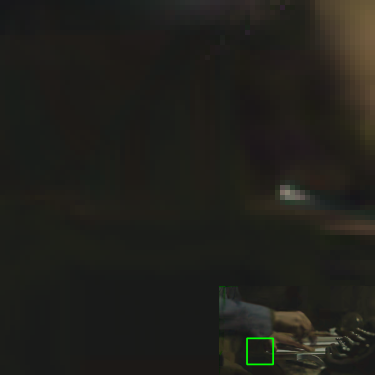}
&\includegraphics[width=0.09\textwidth]{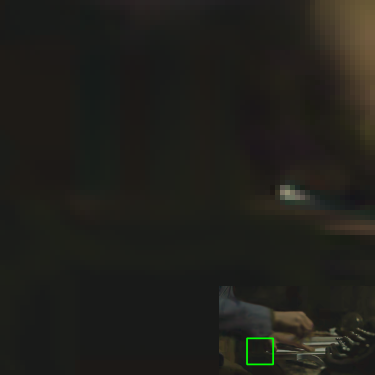}
&\includegraphics[width=0.09\textwidth]{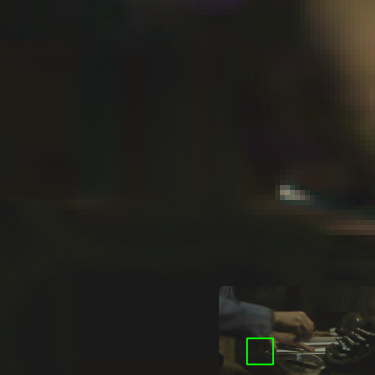}
&\includegraphics[width=0.09\textwidth]{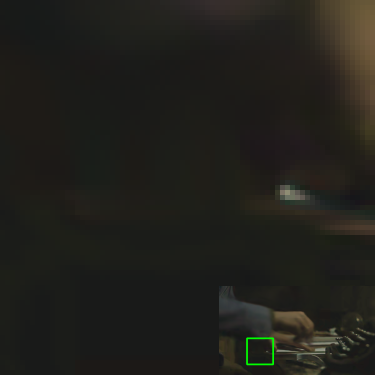}
&\includegraphics[width=0.09\textwidth]{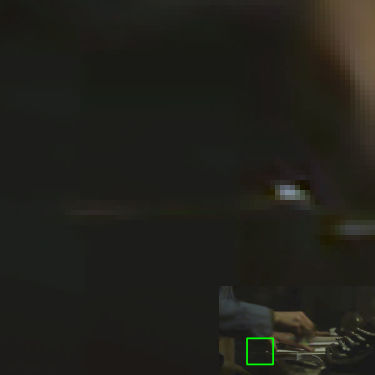}
&\includegraphics[width=0.09\textwidth]{Fig/vimeo_comparison/00025_0034/00025_0034_frame5_qp26_CVRS_finetuned_psnr37.02_ssim0.9476_bpp0.0891.png}
&\includegraphics[width=0.09\textwidth]{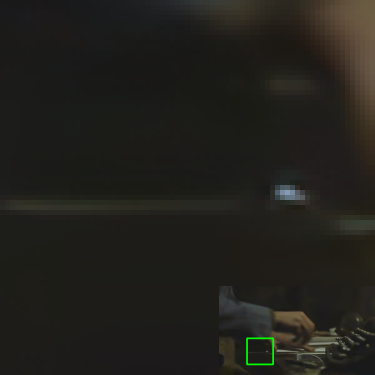}\\
BPP/PSNR & 0.0771/35.34 & {0.0771}/{38.61} & \secondbest{0.0771}/\secondbest{39.03} & 0.0771/38.46 & 0.0839/38.40 & 0.0891/37.02 & \textbf{0.0764}/\textbf{39.79}
\end{tabular}}

\resizebox{1.0\textwidth}{!}{
\scriptsize
\begin{tabular}{ccccccc>{\columncolor[HTML]{FFEEED}}c}
\includegraphics[width=0.09\textwidth]{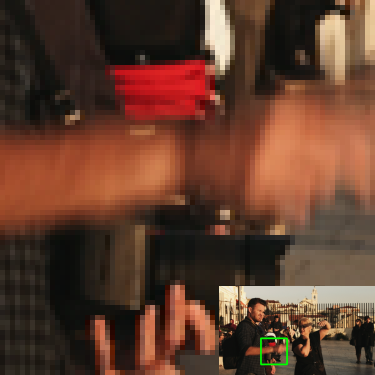}
&\includegraphics[width=0.09\textwidth]{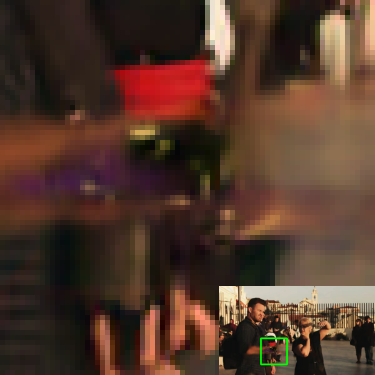}
&\includegraphics[width=0.09\textwidth]{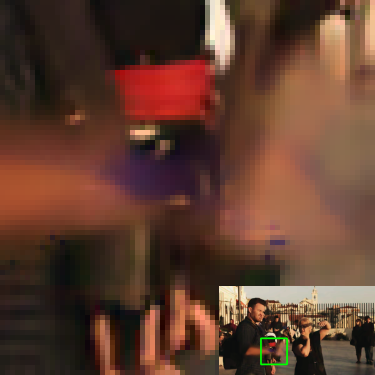}
&\includegraphics[width=0.09\textwidth]{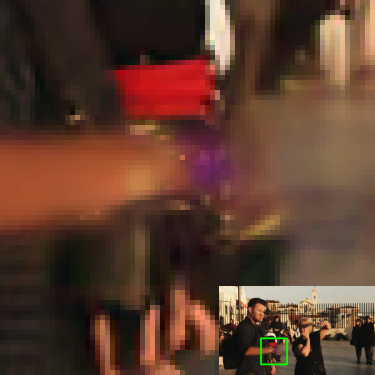}
&\includegraphics[width=0.09\textwidth]{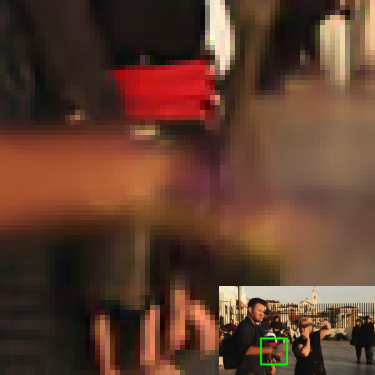}
&\includegraphics[width=0.09\textwidth]{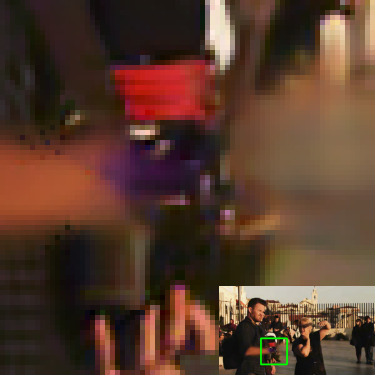}
&\includegraphics[width=0.09\textwidth]{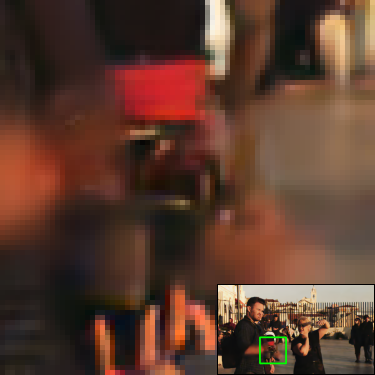}
&\includegraphics[width=0.09\textwidth]{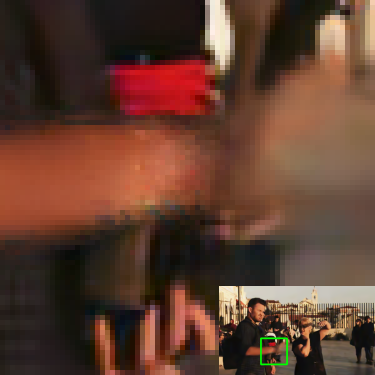}\\
BPP/PSNR & 0.3481/30.26 & 0.3481/30.59 & \secondbest{0.3481}/\secondbest{31.06} & {0.3481}/{30.80} & 0.3613/30.40 & 0.3911/30.61 & \textbf{0.3411}/\textbf{31.07}
\end{tabular}}

\resizebox{1.0\textwidth}{!}{
\scriptsize
\begin{tabular}{ccccccc>{\columncolor[HTML]{FFEEED}}c}
\includegraphics[width=0.09\textwidth]{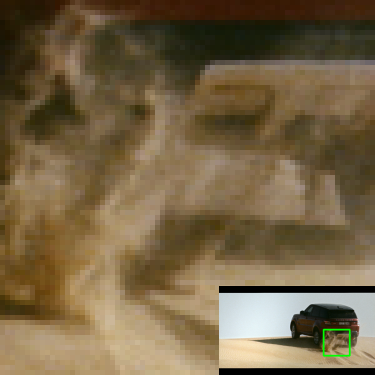}
&\includegraphics[width=0.09\textwidth]{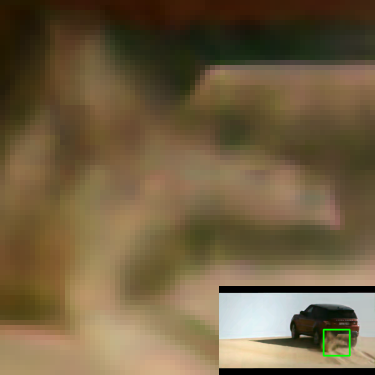}
&\includegraphics[width=0.09\textwidth]{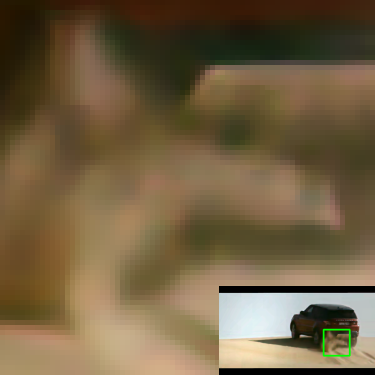}
&\includegraphics[width=0.09\textwidth]{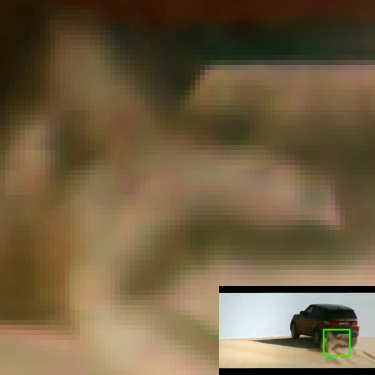}
&\includegraphics[width=0.09\textwidth]{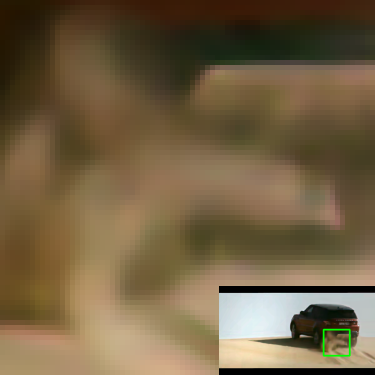}
&\includegraphics[width=0.09\textwidth]{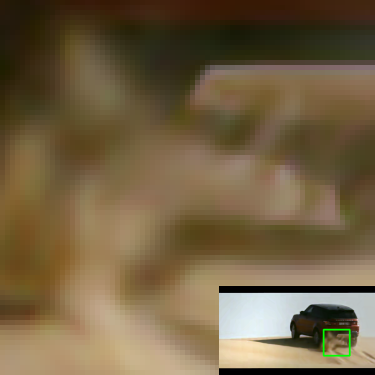}
&\includegraphics[width=0.09\textwidth]{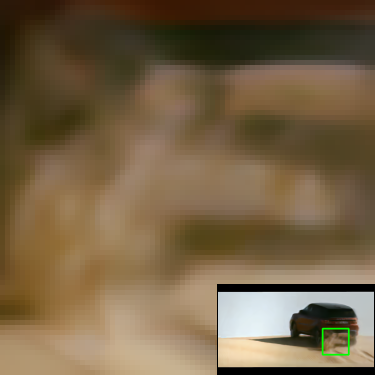}
&\includegraphics[width=0.09\textwidth]{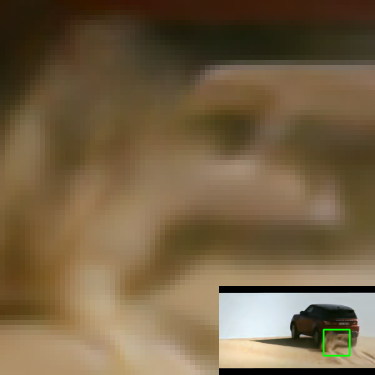}\\
BPP/PSNR & 0.0913/35.22 & 0.0913/35.28 & 0.0913/35.31 & \secondbest{0.0913}/35.40 & 0.0951/\secondbest{35.45} & 0.1066/34.02 & \textbf{0.0859}/\textbf{35.68}
\end{tabular}}

\resizebox{1.0\textwidth}{!}{
\scriptsize
\begin{tabular}{ccccccc>{\columncolor[HTML]{FFEEED}}c}
\includegraphics[width=0.09\textwidth]{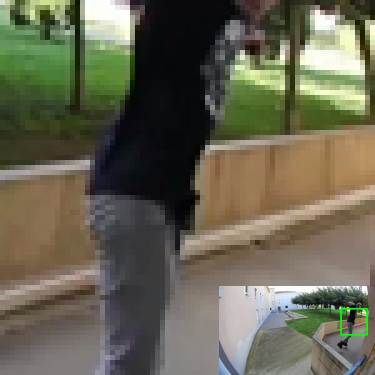}
&\includegraphics[width=0.09\textwidth]{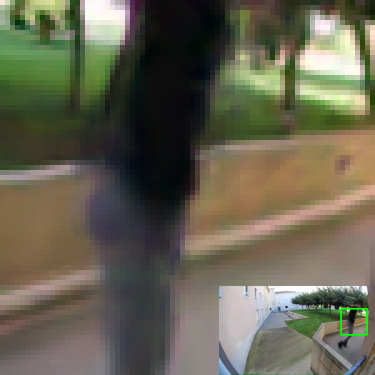}
&\includegraphics[width=0.09\textwidth]{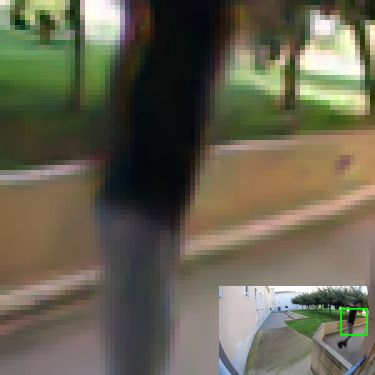}
&\includegraphics[width=0.09\textwidth]{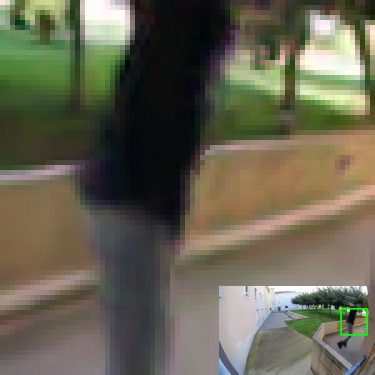}
&\includegraphics[width=0.09\textwidth]{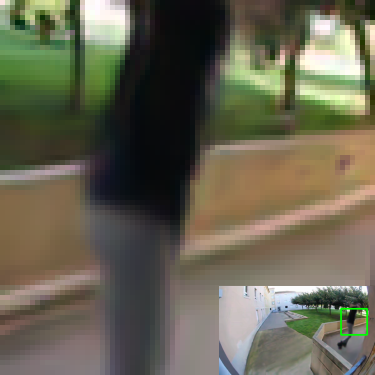}
&\includegraphics[width=0.09\textwidth]{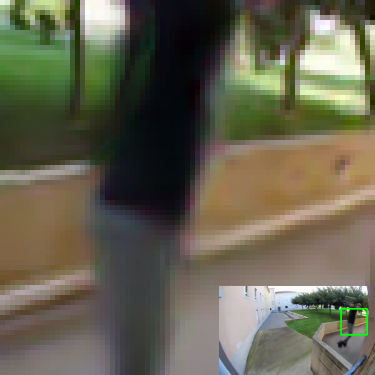}
&\includegraphics[width=0.09\textwidth]{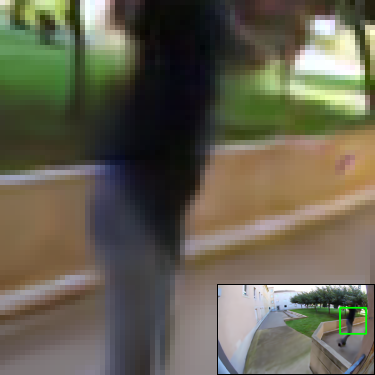}
&\includegraphics[width=0.09\textwidth]{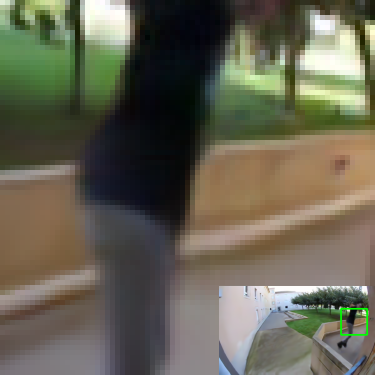}\\
BPP/PSNR & 0.2515/31.76 & 0.2515/31.98 & 0.2515/\secondbest{32.08} & 0.2515/32.07 & \secondbest{0.2488}/31.85 & 0.2367/31.06 & \textbf{0.2302}/\textbf{32.37} \\
\end{tabular}}

\resizebox{1.0\textwidth}{!}{
\scriptsize
\begin{tabular}{ccccccc>{\columncolor[HTML]{FFEEED}}c}
\includegraphics[width=0.09\textwidth]{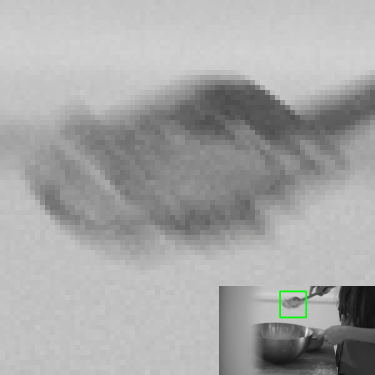}
&\includegraphics[width=0.09\textwidth]{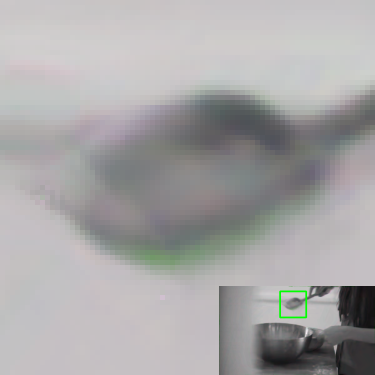}
&\includegraphics[width=0.09\textwidth]{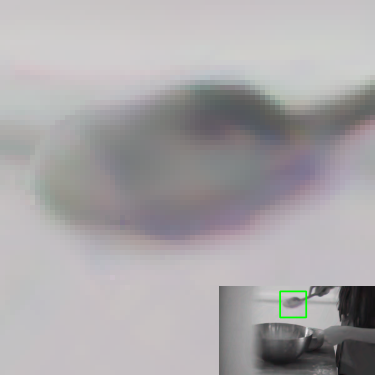}
&\includegraphics[width=0.09\textwidth]{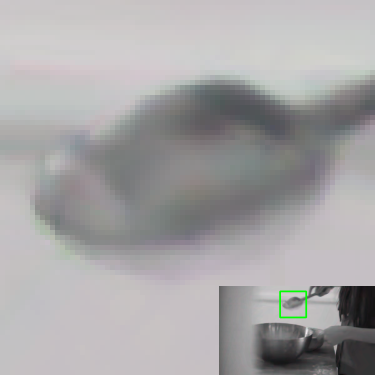}
&\includegraphics[width=0.09\textwidth]{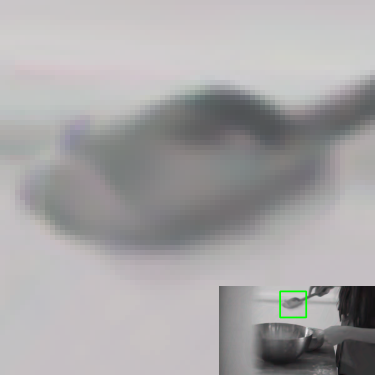}
&\includegraphics[width=0.09\textwidth]{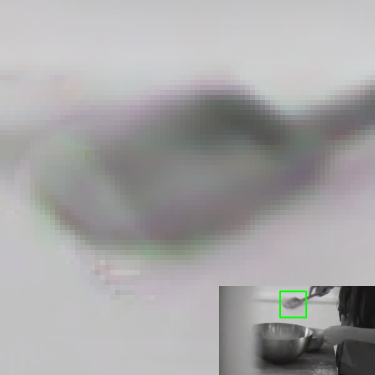}
&\includegraphics[width=0.09\textwidth]{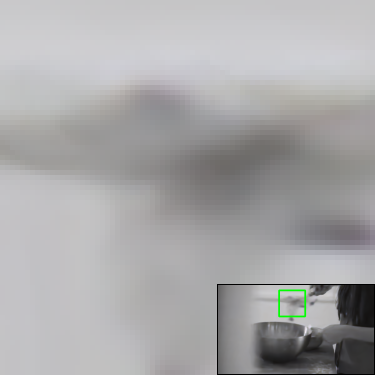}
&\includegraphics[width=0.09\textwidth]{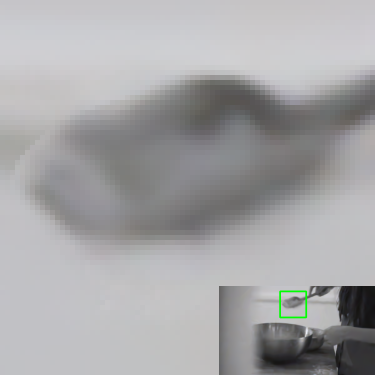}\\
BPP/PSNR & 0.0960/36.19 & {0.0960}/\secondbest{38.15} & 0.0960/38.12 & 0.0960/38.01 & 0.0968/37.27 & \secondbest{0.0948}/33.57 & \textbf{0.0833}/\textbf{39.44}
\end{tabular}}

\resizebox{1.0\textwidth}{!}{
\scriptsize
\begin{tabular}{ccccccc>{\columncolor[HTML]{FFEEED}}c}
\includegraphics[width=0.09\textwidth]{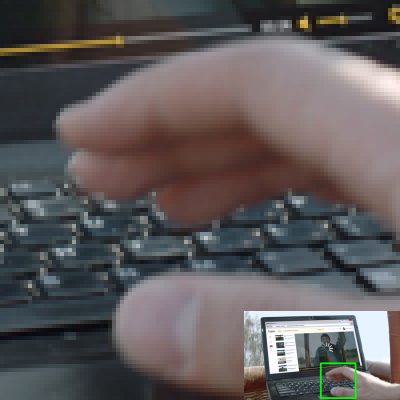}
&\includegraphics[width=0.09\textwidth]{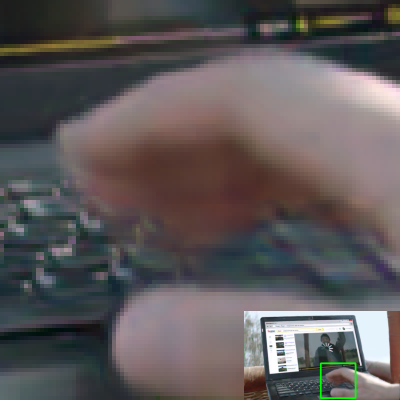}
&\includegraphics[width=0.09\textwidth]{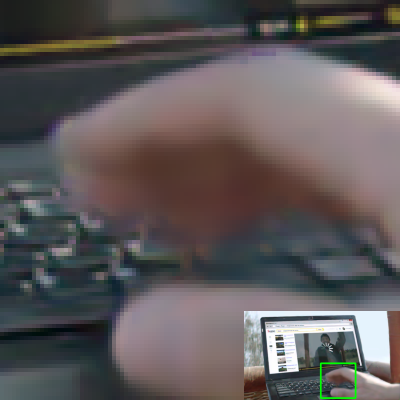}
&\includegraphics[width=0.09\textwidth]{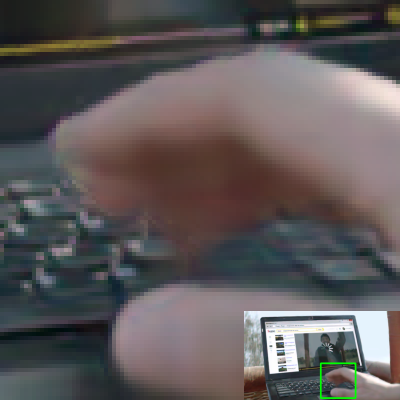}
&\includegraphics[width=0.09\textwidth]{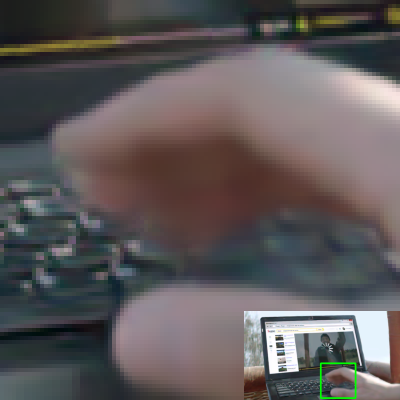}
&\includegraphics[width=0.09\textwidth]{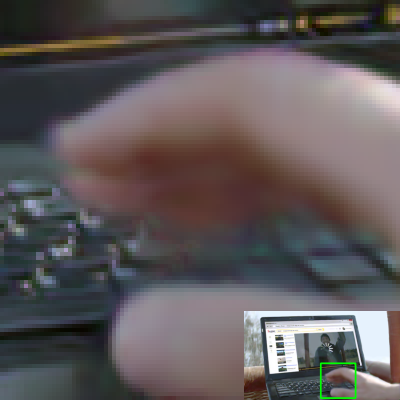}
&\includegraphics[width=0.09\textwidth]{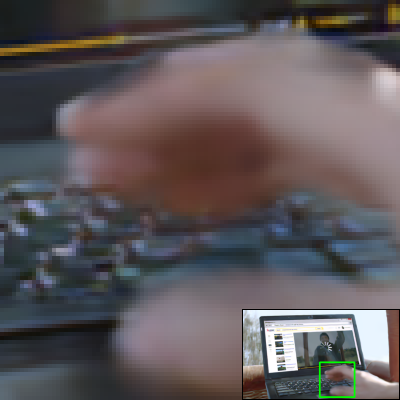}
&\includegraphics[width=0.09\textwidth]{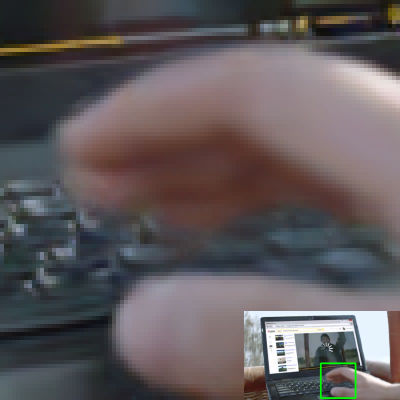} \\
BPP/PSNR& 0.2514/29.86 & 0.2514/30.13 & 0.2514/29.92 & 0.2514/30.14 & \secondbest{0.2443}/{30.18} & 0.2349/\secondbest{30.81} & \textbf{0.2323}/\textbf{32.23} \\
\end{tabular}
}

\caption{\textbf{Qualitative comparison of our method and competitive methods on the vimeo septuplet dataset~\cite{xue2019video}.} 
We crop the frames for easier comparison and visualize the interpolated frames at the bottom-right. 
 Error-prone regions are highlighted with red boxes, best viewed by zooming in.}
\label{fig:comp_hfr_selected}
\end{figure*}

\begin{figure*}[!t]
\setlength{\tabcolsep}{0.5pt}
\centering

\resizebox{1.0\textwidth}{!}{
\scriptsize
\begin{tabular}{cccccc>{\columncolor[HTML]{FFEEED}}cc}
Ground Truth &  UPR-Net L \cite{jin2023unified}  & EMA-VFI \cite{zhang2023extracting} & GIMM-VFI \cite{guo2024generalizable} & STAA \cite{xiang2022learning} & CSTVR* \cite{zhang2025continuous} & \textbf{TVRN (Ours)} & \textbf{TVRN$\downarrow$} \\
\includegraphics[width=0.09\textwidth]{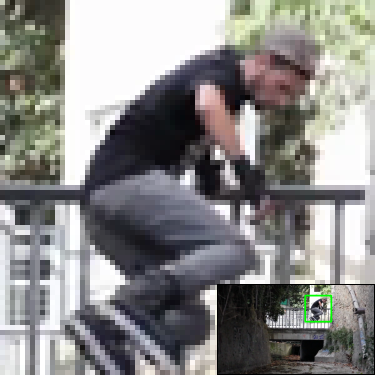}
&\includegraphics[width=0.09\textwidth]{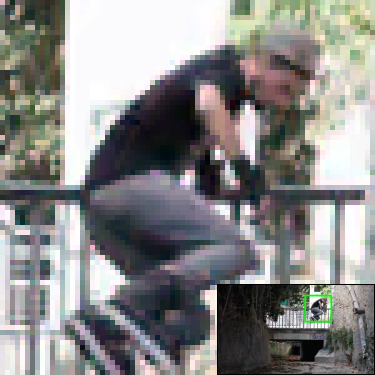}
&\includegraphics[width=0.09\textwidth]{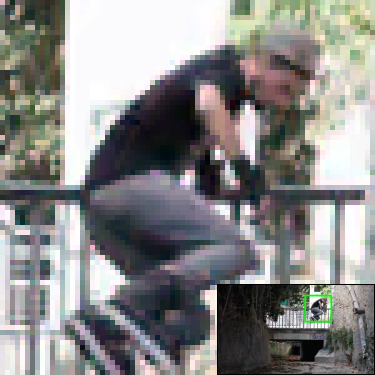}
&\includegraphics[width=0.09\textwidth]{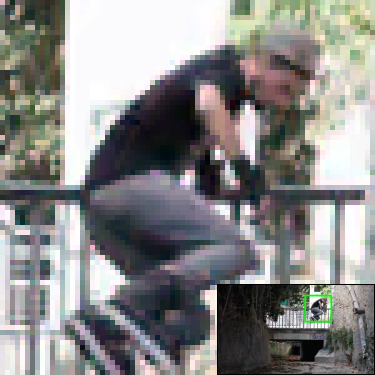}
&\includegraphics[width=0.09\textwidth]{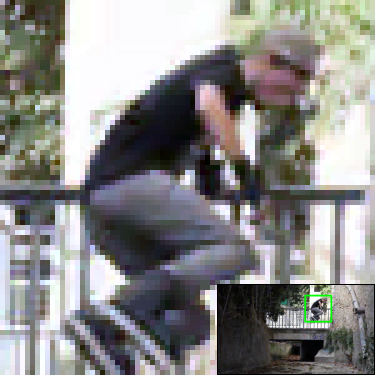}
&\includegraphics[width=0.09\textwidth]{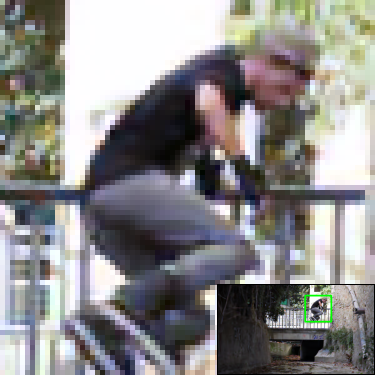}
&\includegraphics[width=0.09\textwidth]{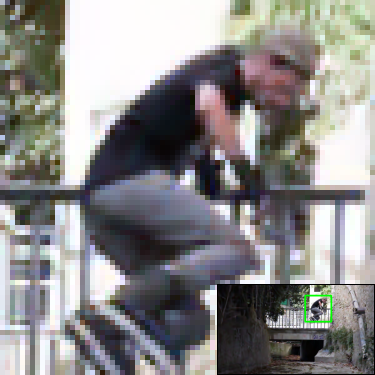}
&\includegraphics[width=0.09\textwidth]{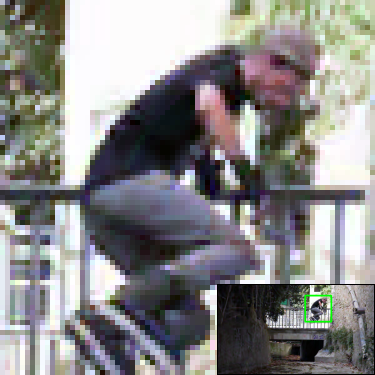}\\
\includegraphics[width=0.09\textwidth]{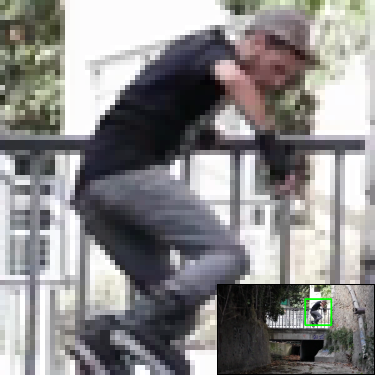}
&\includegraphics[width=0.09\textwidth]{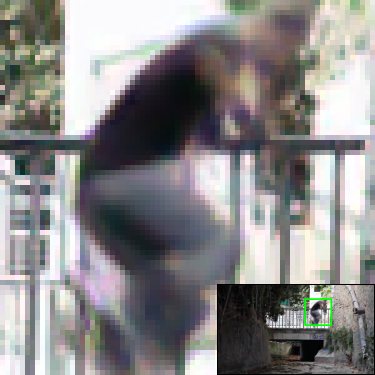}
&\includegraphics[width=0.09\textwidth]{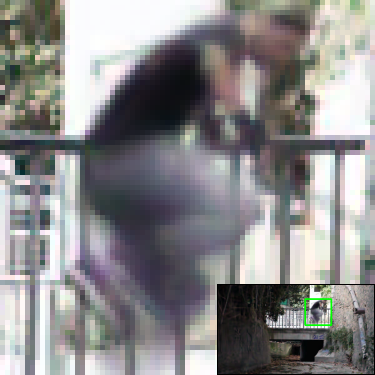}
&\includegraphics[width=0.09\textwidth]{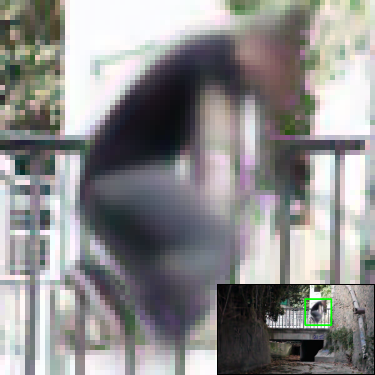}
&\includegraphics[width=0.09\textwidth]{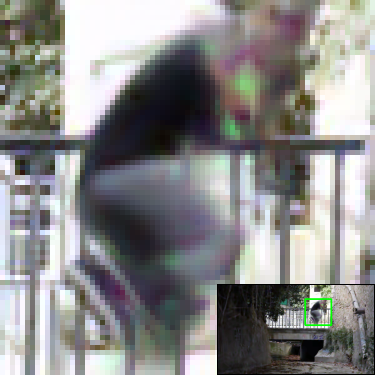}
&\includegraphics[width=0.09\textwidth]{Fig/vimeo_comparison/00026_0036/frame1_qp26_CVRS_finetuned_psnr30.84_ssim0.8630_bpp0.3496.png}
&\includegraphics[width=0.09\textwidth]{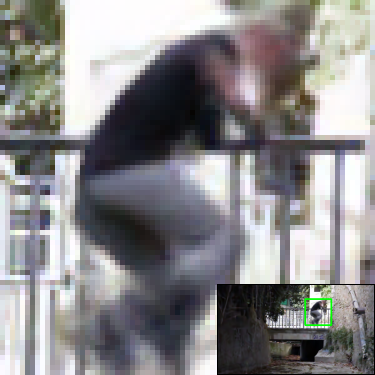}
&\includegraphics[width=0.09\textwidth]{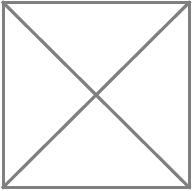}\\
\includegraphics[width=0.09\textwidth]{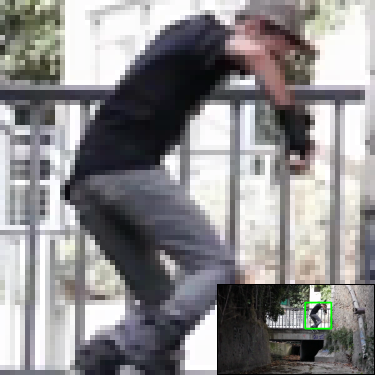}
&\includegraphics[width=0.09\textwidth]{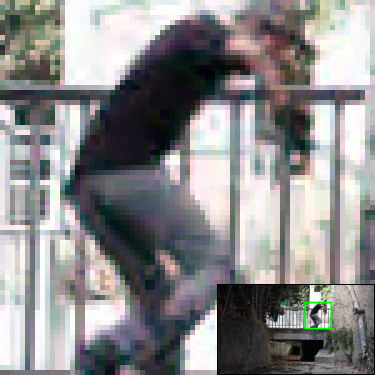}
&\includegraphics[width=0.09\textwidth]{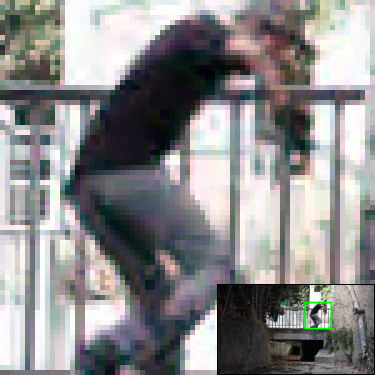}
&\includegraphics[width=0.09\textwidth]{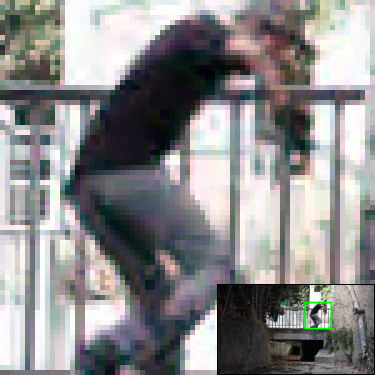}
&\includegraphics[width=0.09\textwidth]{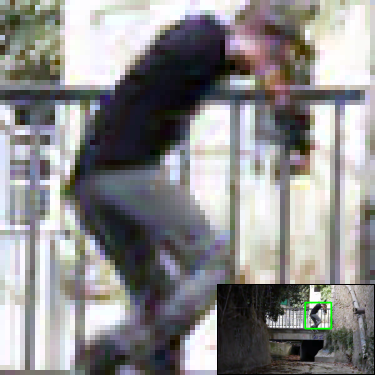}
&\includegraphics[width=0.09\textwidth]{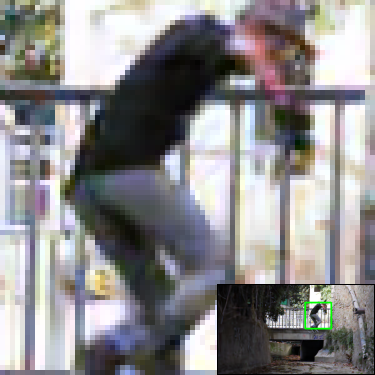}
&\includegraphics[width=0.09\textwidth]{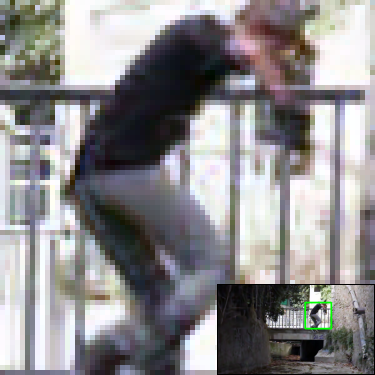}
&\includegraphics[width=0.09\textwidth]{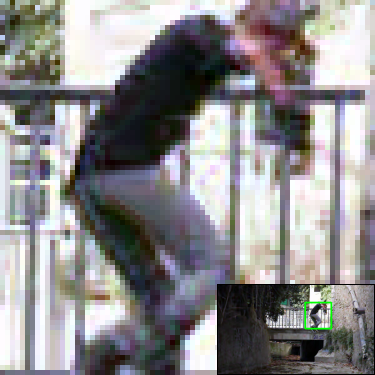}\\
\includegraphics[width=0.09\textwidth]{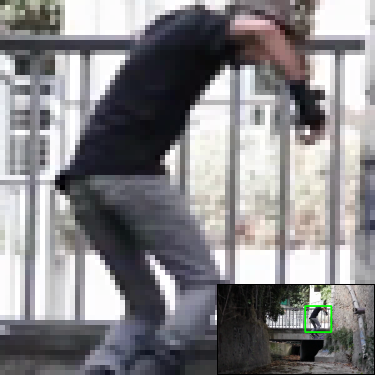}
&\includegraphics[width=0.09\textwidth]{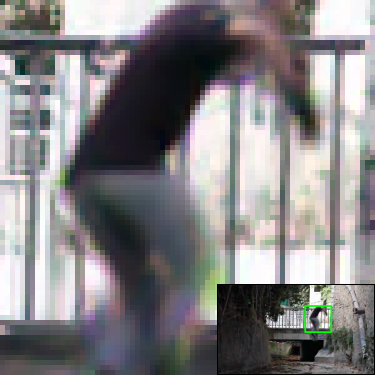}
&\includegraphics[width=0.09\textwidth]{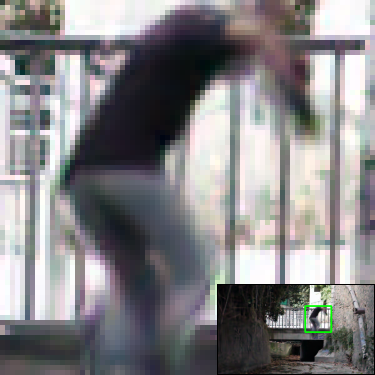}
&\includegraphics[width=0.09\textwidth]{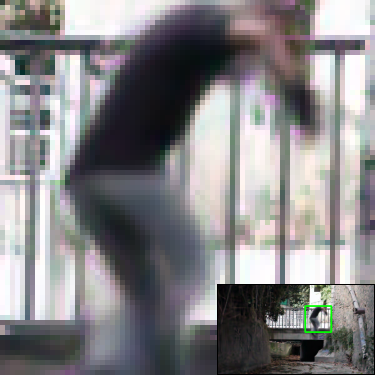}
&\includegraphics[width=0.09\textwidth]{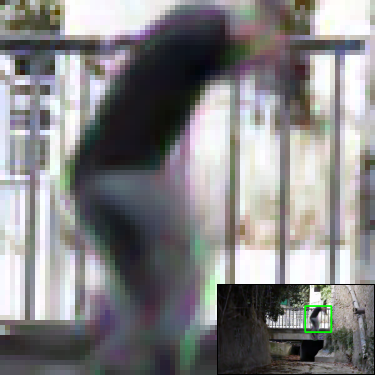}
&\includegraphics[width=0.09\textwidth]{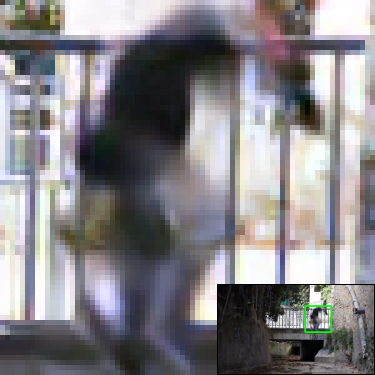}
&\includegraphics[width=0.09\textwidth]{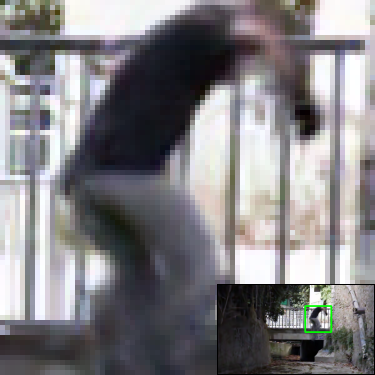}
&\includegraphics[width=0.09\textwidth]{Fig/cross_filled_square.pdf}\\
\includegraphics[width=0.09\textwidth]{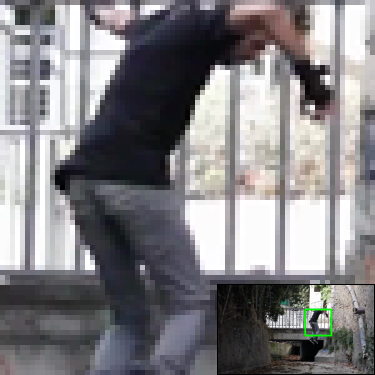}
&\includegraphics[width=0.09\textwidth]{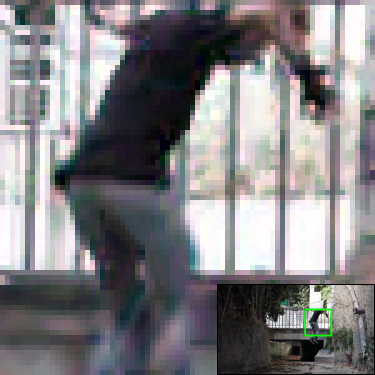}
&\includegraphics[width=0.09\textwidth]{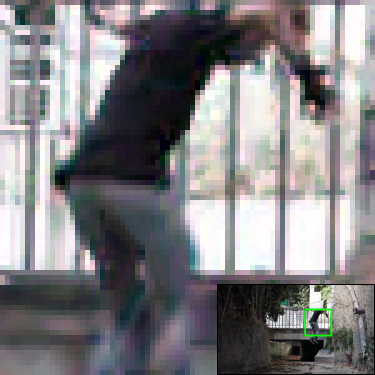}
&\includegraphics[width=0.09\textwidth]{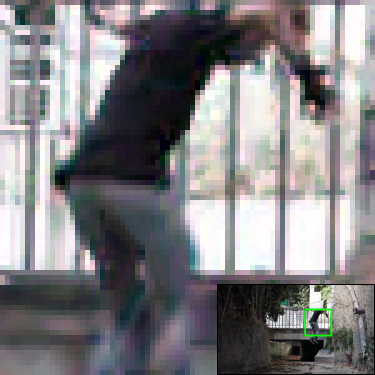}
&\includegraphics[width=0.09\textwidth]{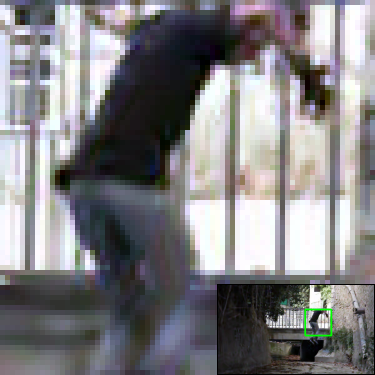}
&\includegraphics[width=0.09\textwidth]{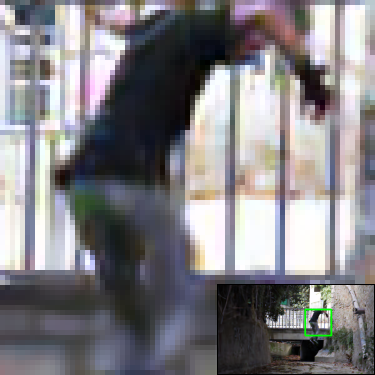}
&\includegraphics[width=0.09\textwidth]{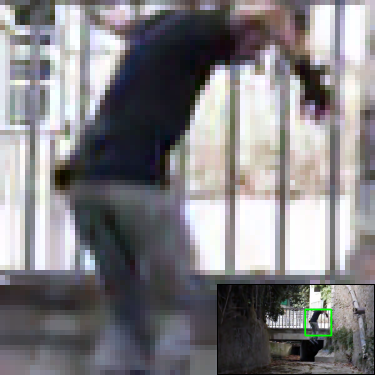}
&\includegraphics[width=0.09\textwidth]{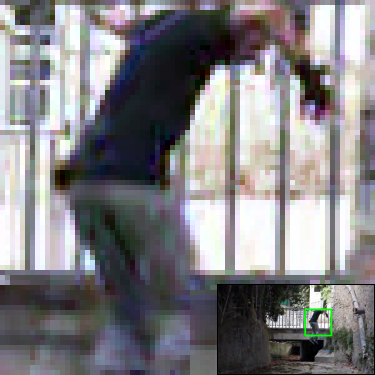}\\
\includegraphics[width=0.09\textwidth]{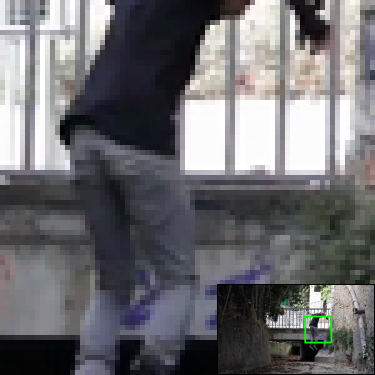}
&\includegraphics[width=0.09\textwidth]{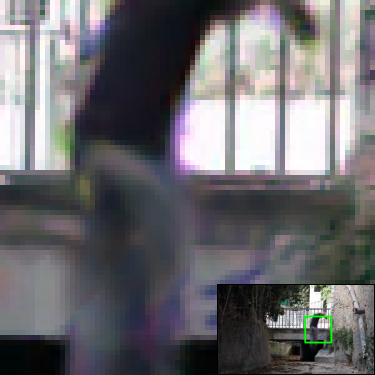}
&\includegraphics[width=0.09\textwidth]{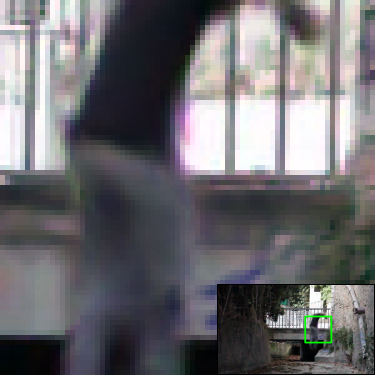}
&\includegraphics[width=0.09\textwidth]{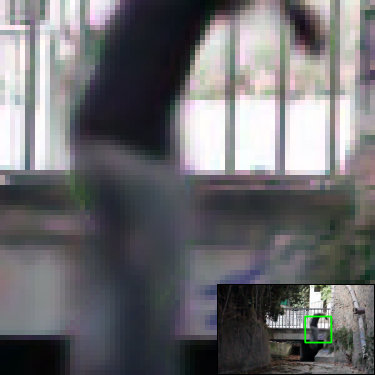}
&\includegraphics[width=0.09\textwidth]{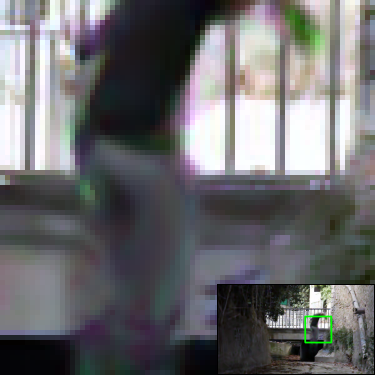}
&\includegraphics[width=0.09\textwidth]{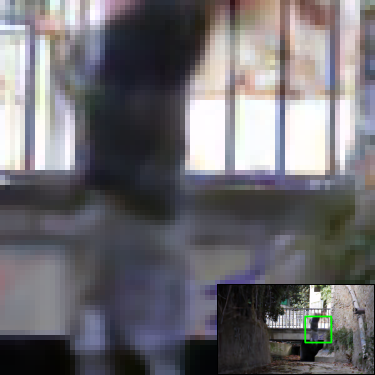}
&\includegraphics[width=0.09\textwidth]{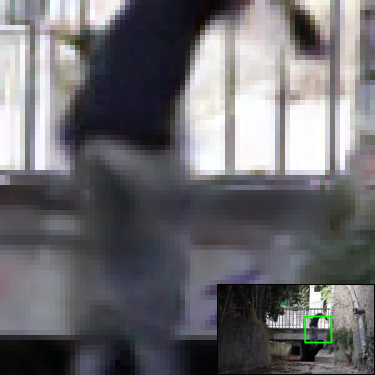}
&\includegraphics[width=0.09\textwidth]{Fig/cross_filled_square.pdf}\\
\includegraphics[width=0.09\textwidth]{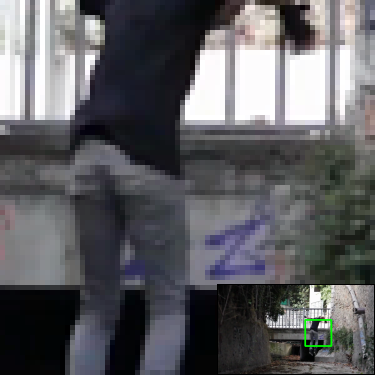}
&\includegraphics[width=0.09\textwidth]{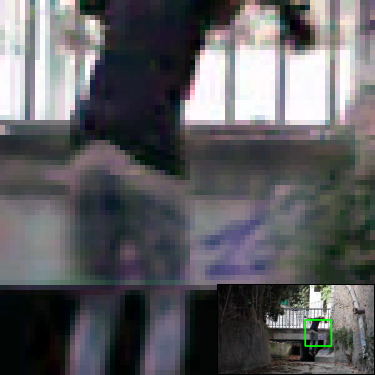}
&\includegraphics[width=0.09\textwidth]{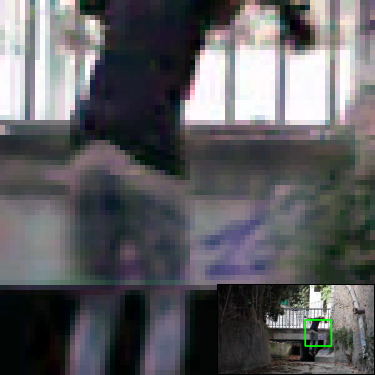}
&\includegraphics[width=0.09\textwidth]{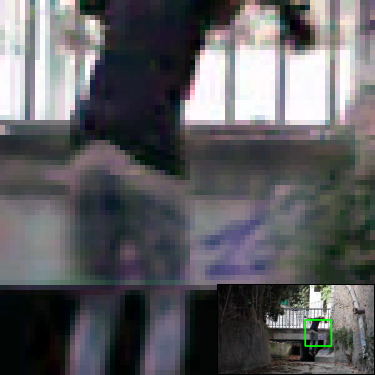}
&\includegraphics[width=0.09\textwidth]{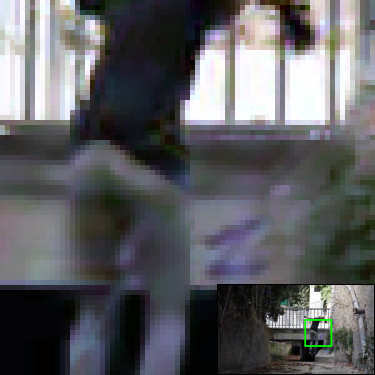}
&\includegraphics[width=0.09\textwidth]{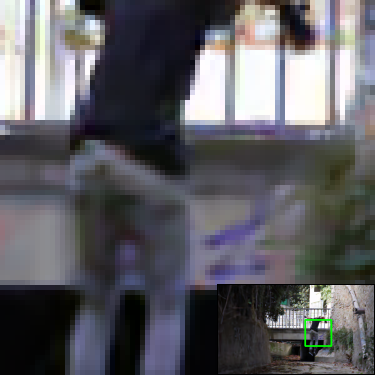}
&\includegraphics[width=0.09\textwidth]{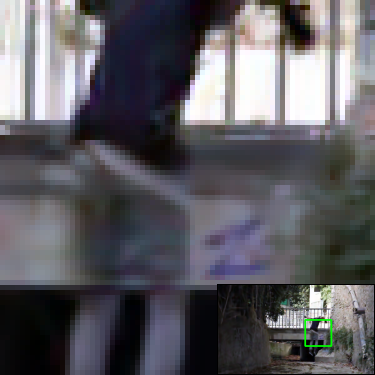}
&\includegraphics[width=0.09\textwidth]{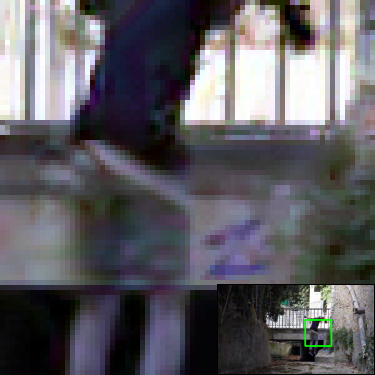}\\
\end{tabular}}

\caption{\textbf{Qualitative comparison of our method and competitive methods on the full-length sequence \textit{00026\_0036} from the Vimeo septuplet dataset~\cite{xue2019video}.} 
We crop the frames for easier comparison and visualize the interpolated frames at the bottom right. TVRN$\downarrow$ donates the downscaled low-frame-rate video.}
\label{fig:vimeo_test_1}
\end{figure*}

\begin{figure*}[!t]
\setlength{\tabcolsep}{0.5pt}
\centering
\label{fig: vimeo_test_2}

\resizebox{1.0\textwidth}{!}{
\scriptsize
\begin{tabular}{cccccc>{\columncolor[HTML]{FFEEED}}cc}
Ground Truth &  UPR-Net L \cite{jin2023unified}  & EMA-VFI \cite{zhang2023extracting} & GIMM-VFI \cite{guo2024generalizable} & STAA \cite{xiang2022learning} & CSTVR* \cite{zhang2025continuous} & \textbf{TVRN (Ours)} & \textbf{TVRN$\downarrow$} \\
\includegraphics[width=0.09\textwidth]{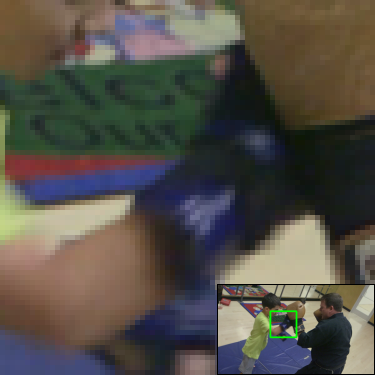}
&\includegraphics[width=0.09\textwidth]{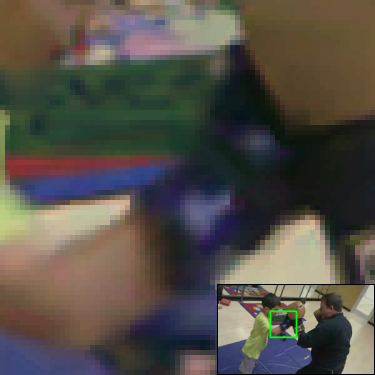}
&\includegraphics[width=0.09\textwidth]{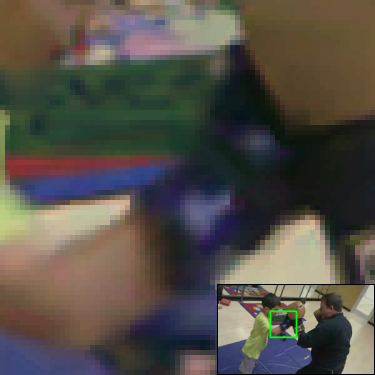}
&\includegraphics[width=0.09\textwidth]{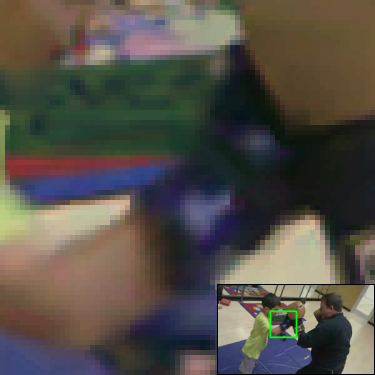}
&\includegraphics[width=0.09\textwidth]{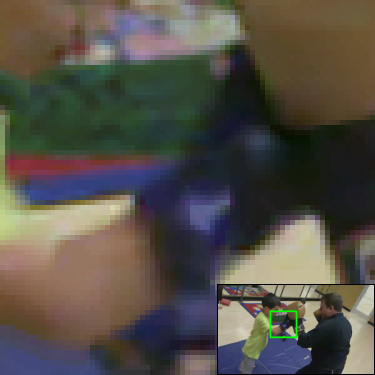}
&\includegraphics[width=0.09\textwidth]{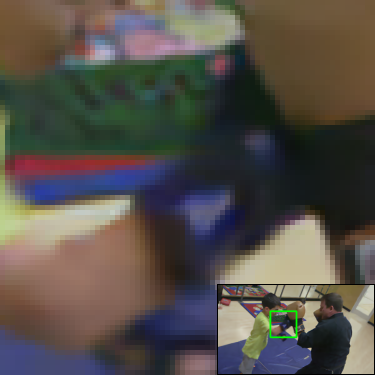}
&\includegraphics[width=0.09\textwidth]{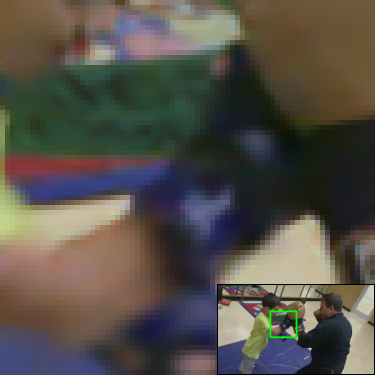}
&\includegraphics[width=0.09\textwidth]{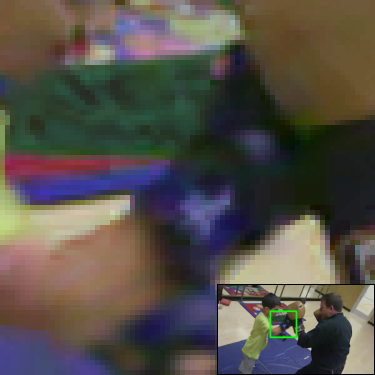}\\
\includegraphics[width=0.09\textwidth]{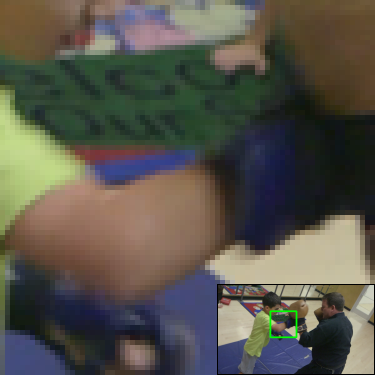}
&\includegraphics[width=0.09\textwidth]{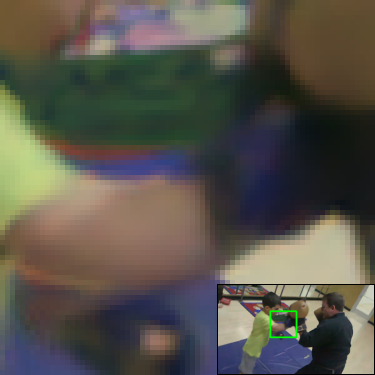}
&\includegraphics[width=0.09\textwidth]{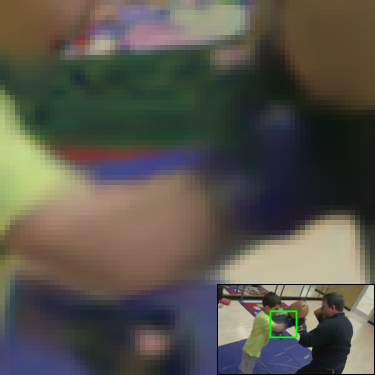}
&\includegraphics[width=0.09\textwidth]{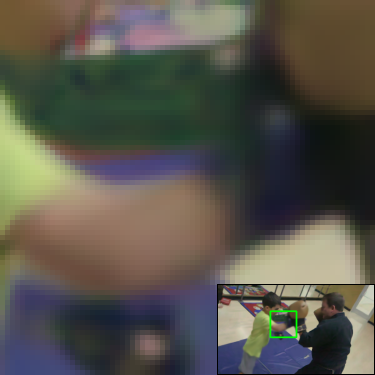}
&\includegraphics[width=0.09\textwidth]{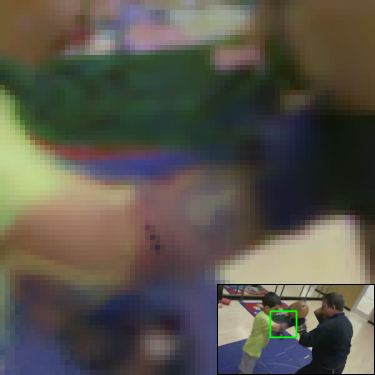}
&\includegraphics[width=0.09\textwidth]{Fig/vimeo_comparison/00023_0059/frame1_qp26_CVRS_finetuned_psnr33.86_ssim0.9337_bpp0.2081.png}
&\includegraphics[width=0.09\textwidth]{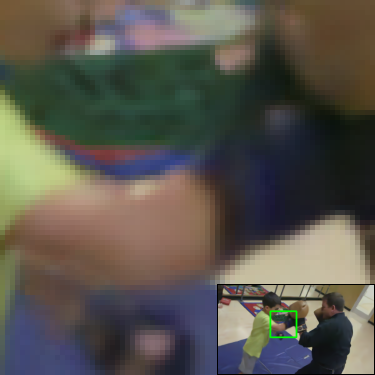}
&\includegraphics[width=0.09\textwidth]{Fig/cross_filled_square.pdf}\\
\includegraphics[width=0.09\textwidth]{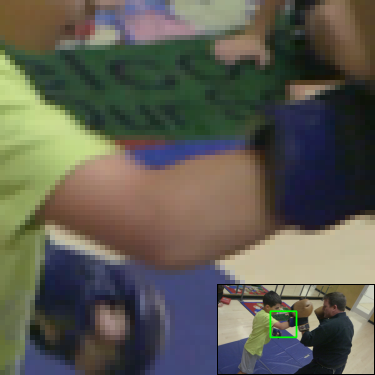}
&\includegraphics[width=0.09\textwidth]{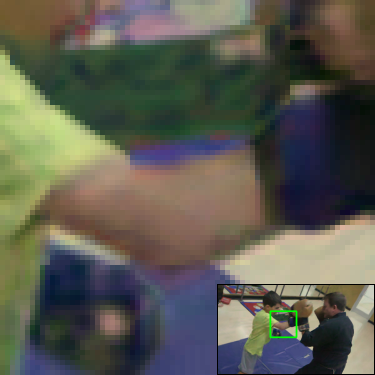}
&\includegraphics[width=0.09\textwidth]{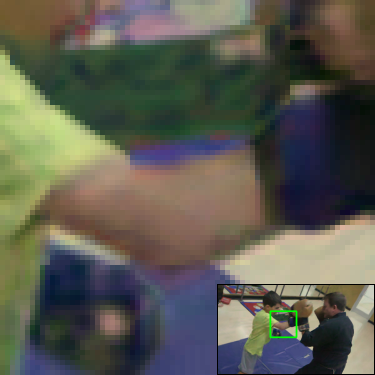}
&\includegraphics[width=0.09\textwidth]{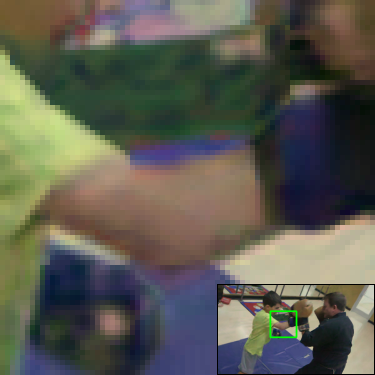}
&\includegraphics[width=0.09\textwidth]{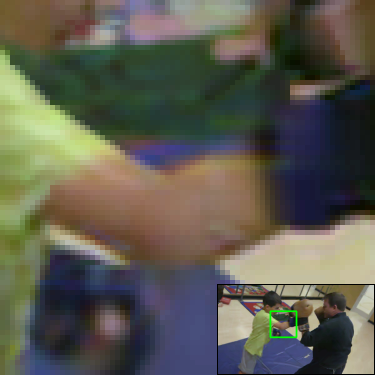}
&\includegraphics[width=0.09\textwidth]{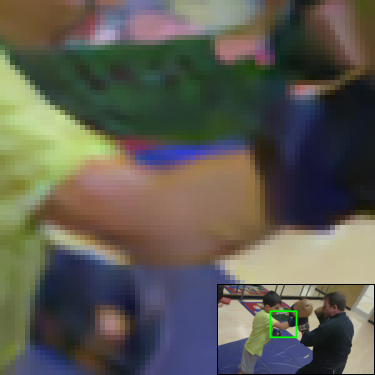}
&\includegraphics[width=0.09\textwidth]{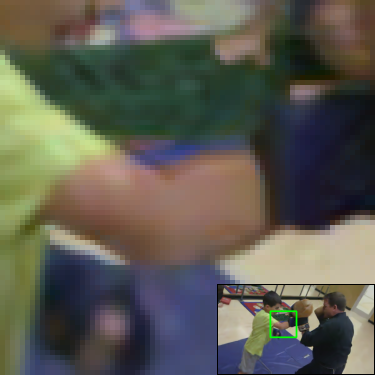}
&\includegraphics[width=0.09\textwidth]{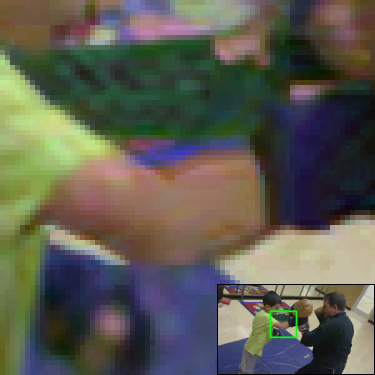}\\
\includegraphics[width=0.09\textwidth]{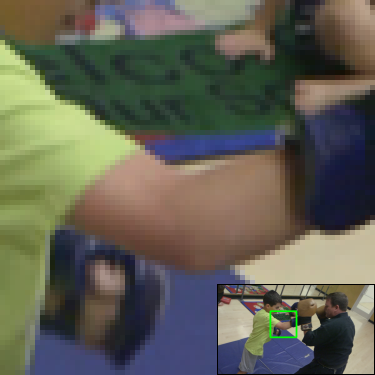}
&\includegraphics[width=0.09\textwidth]{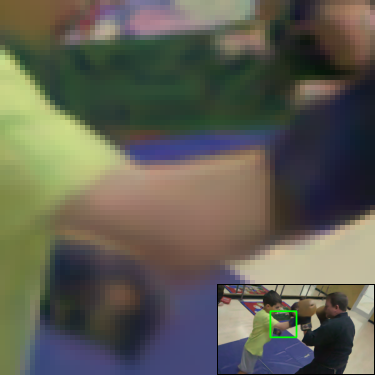}
&\includegraphics[width=0.09\textwidth]{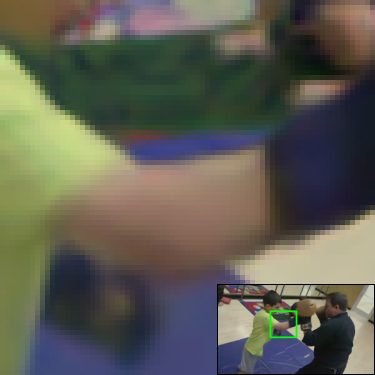}
&\includegraphics[width=0.09\textwidth]{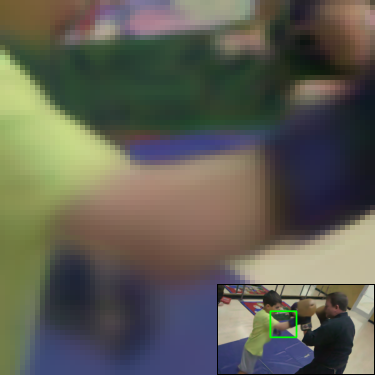}
&\includegraphics[width=0.09\textwidth]{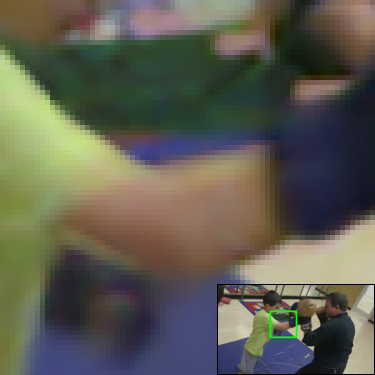}
&\includegraphics[width=0.09\textwidth]{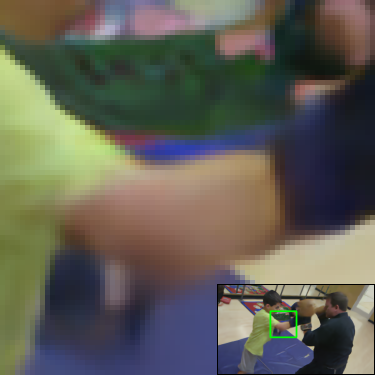}
&\includegraphics[width=0.09\textwidth]{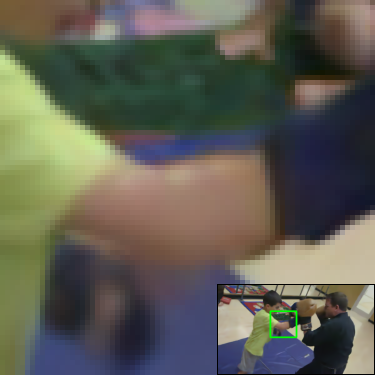}
&\includegraphics[width=0.09\textwidth]{Fig/cross_filled_square.pdf}\\
\includegraphics[width=0.09\textwidth]{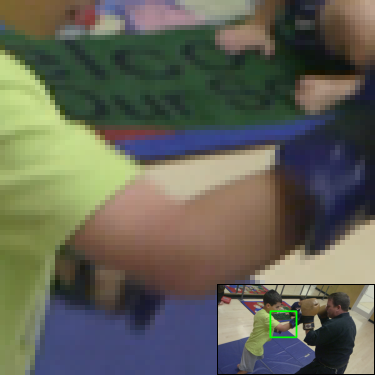}
&\includegraphics[width=0.09\textwidth]{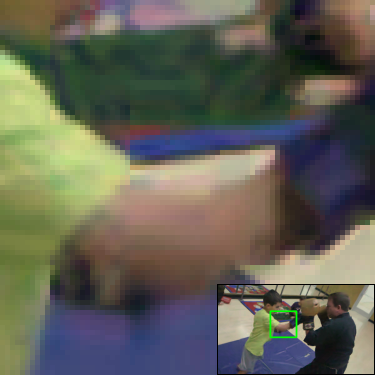}
&\includegraphics[width=0.09\textwidth]{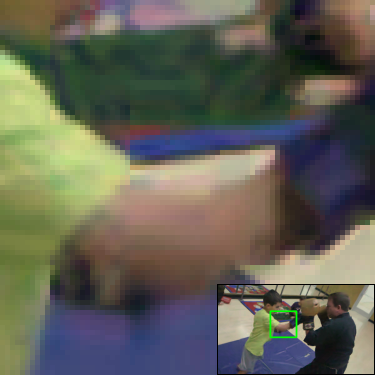}
&\includegraphics[width=0.09\textwidth]{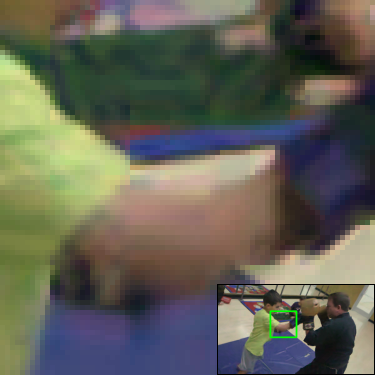}
&\includegraphics[width=0.09\textwidth]{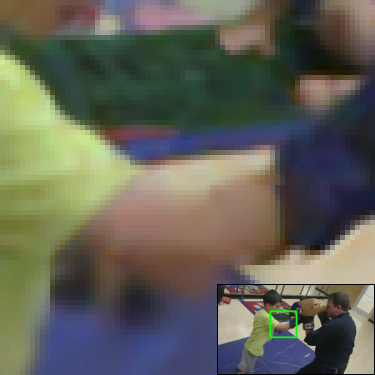}
&\includegraphics[width=0.09\textwidth]{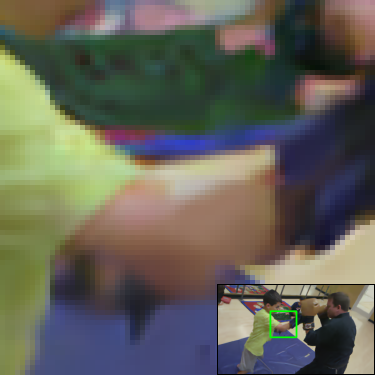}
&\includegraphics[width=0.09\textwidth]{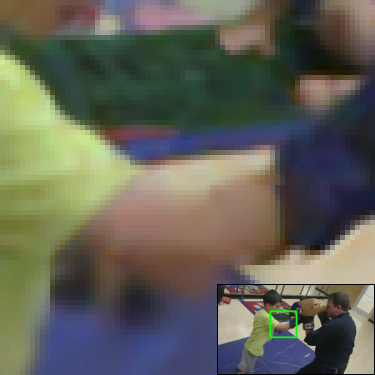}
&\includegraphics[width=0.09\textwidth]{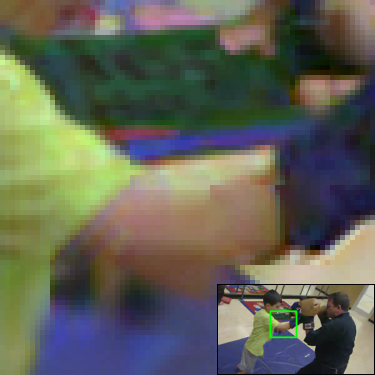}\\
\includegraphics[width=0.09\textwidth]{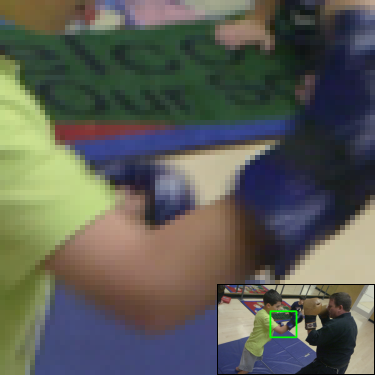}
&\includegraphics[width=0.09\textwidth]{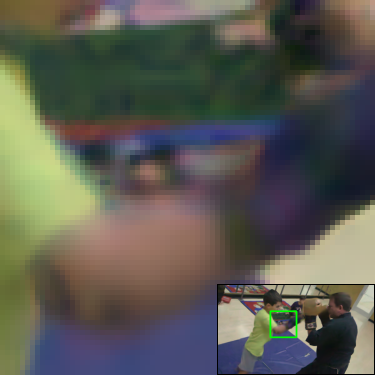}
&\includegraphics[width=0.09\textwidth]{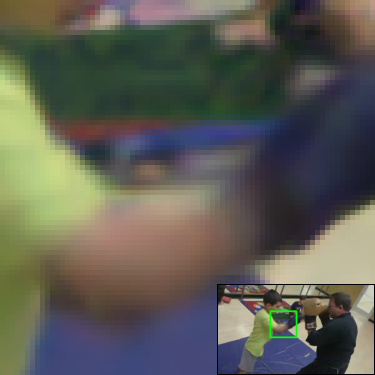}
&\includegraphics[width=0.09\textwidth]{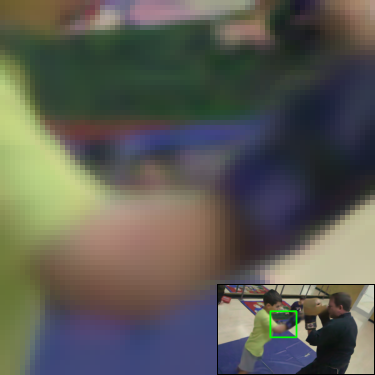}
&\includegraphics[width=0.09\textwidth]{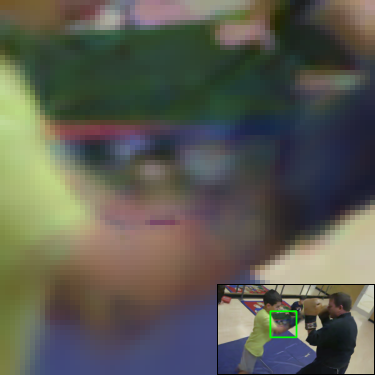}
&\includegraphics[width=0.09\textwidth]{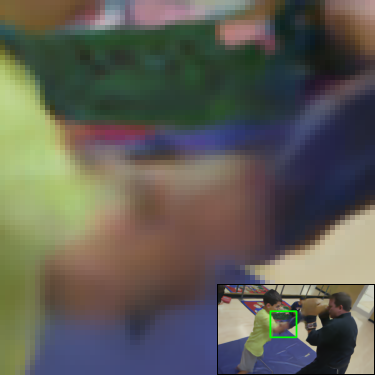}
&\includegraphics[width=0.09\textwidth]{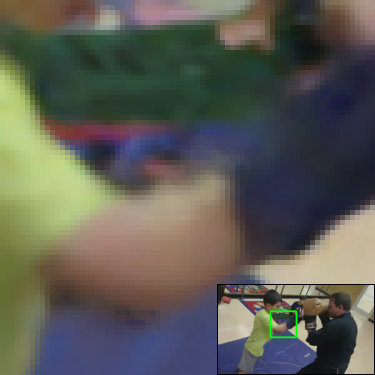}
&\includegraphics[width=0.09\textwidth]{Fig/cross_filled_square.pdf}\\
\includegraphics[width=0.09\textwidth]{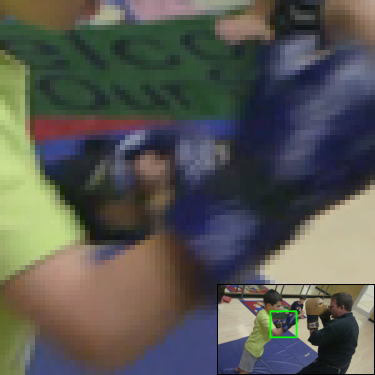}
&\includegraphics[width=0.09\textwidth]{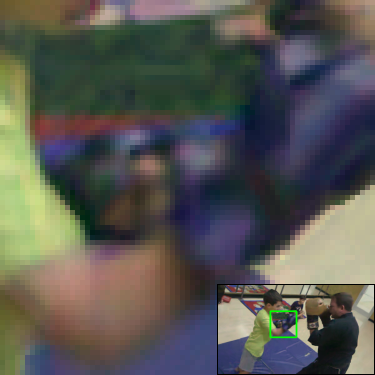}
&\includegraphics[width=0.09\textwidth]{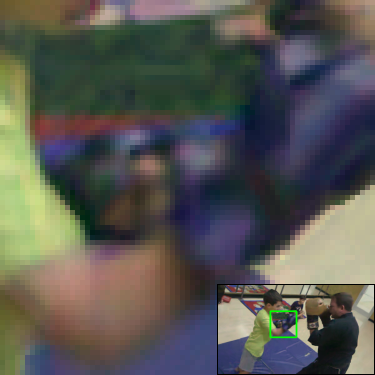}
&\includegraphics[width=0.09\textwidth]{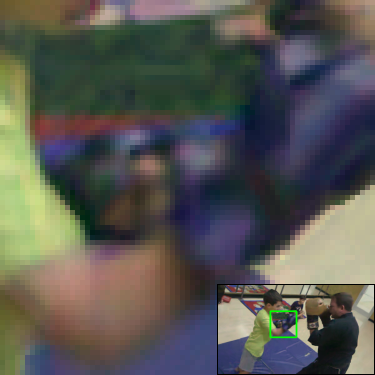}
&\includegraphics[width=0.09\textwidth]{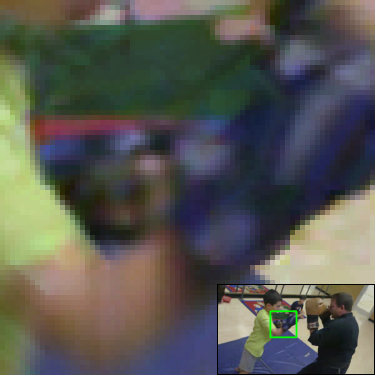}
&\includegraphics[width=0.09\textwidth]{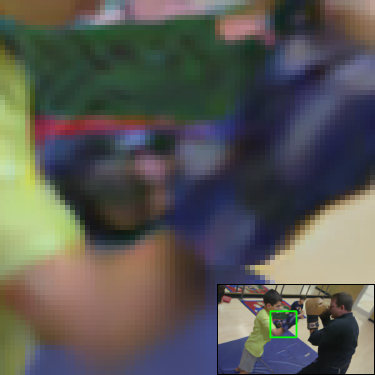}
&\includegraphics[width=0.09\textwidth]{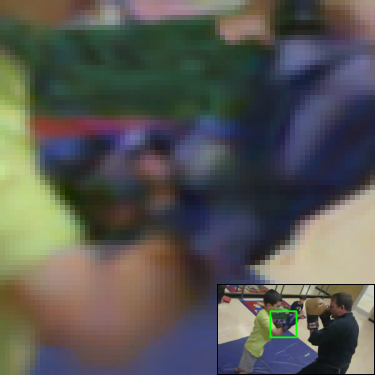}
&\includegraphics[width=0.09\textwidth]{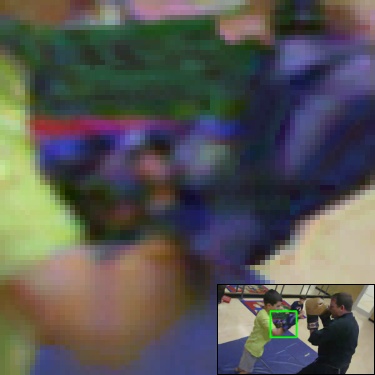}\\
\end{tabular}}

\caption{\textbf{Qualitative comparison of our method and competitive methods on the full-length sequence \textit{00023\_0059} from the Vimeo septuplet dataset~\cite{xue2019video}.} 
We crop the frames for easier comparison and visualize the interpolated frames at the bottom right. TVRN$\downarrow$ donates the downscaled low-frame-rate video.}
\label{fig:vimeo_test_2}
\end{figure*}

\begin{figure*}[!t]
\setlength{\tabcolsep}{0.5pt}
\centering

\label{fig: vimeo_test_3}

\resizebox{1.0\textwidth}{!}{
\scriptsize
\begin{tabular}{cccccc>{\columncolor[HTML]{FFEEED}}cc}
Ground Truth &  UPR-Net L \cite{jin2023unified}  & EMA-VFI \cite{zhang2023extracting} & GIMM-VFI \cite{guo2024generalizable} & STAA \cite{xiang2022learning} & CSTVR* \cite{zhang2025continuous} & \textbf{TVRN (Ours)} & \textbf{TVRN$\downarrow$} \\
\includegraphics[width=0.09\textwidth]{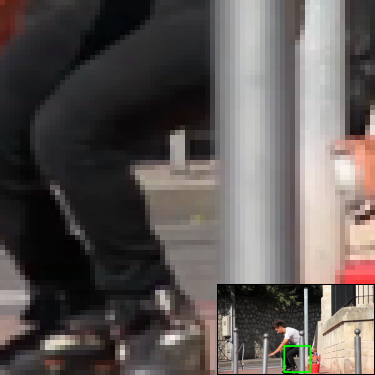}
&\includegraphics[width=0.09\textwidth]{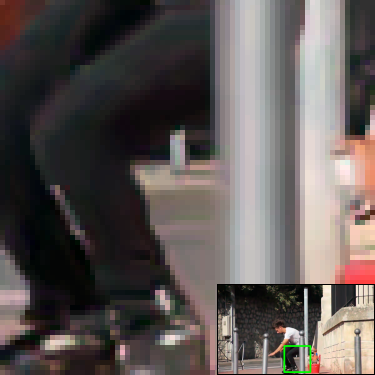}
&\includegraphics[width=0.09\textwidth]{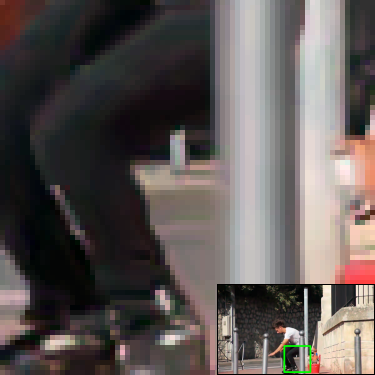}
&\includegraphics[width=0.09\textwidth]{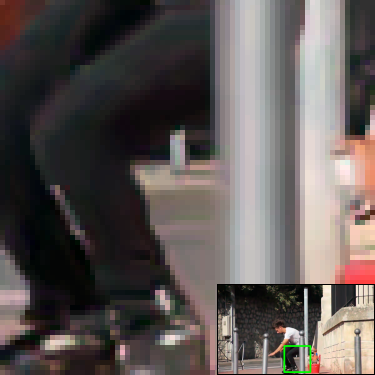}
&\includegraphics[width=0.09\textwidth]{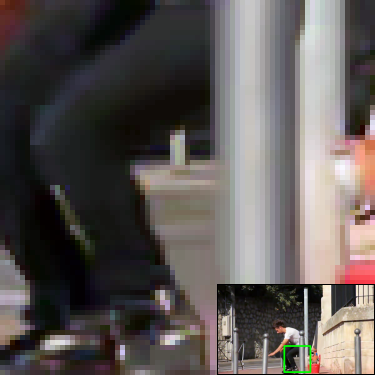}
&\includegraphics[width=0.09\textwidth]{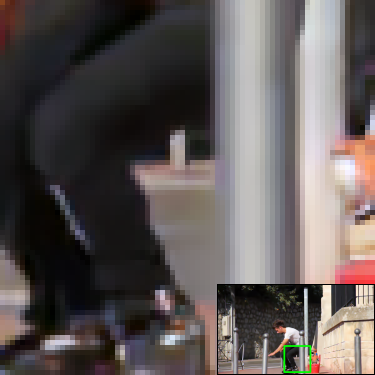}
&\includegraphics[width=0.09\textwidth]{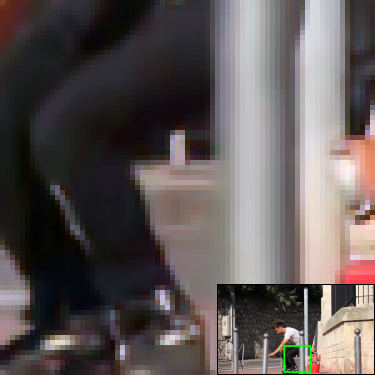}
&\includegraphics[width=0.09\textwidth]{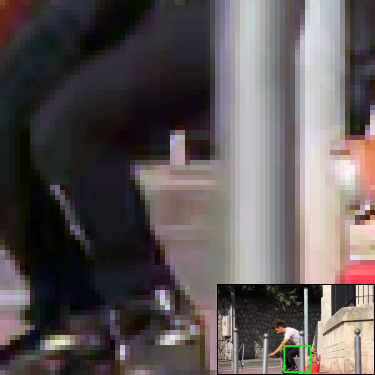}\\
\includegraphics[width=0.09\textwidth]{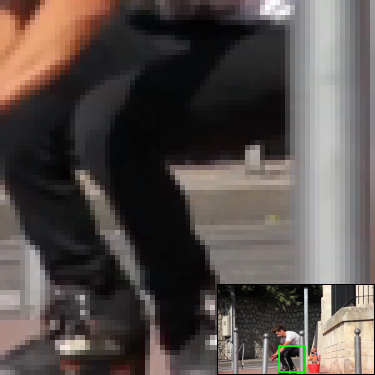}
&\includegraphics[width=0.09\textwidth]{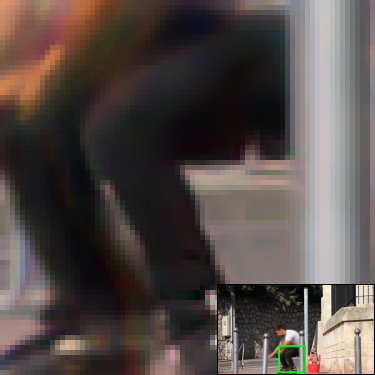}
&\includegraphics[width=0.09\textwidth]{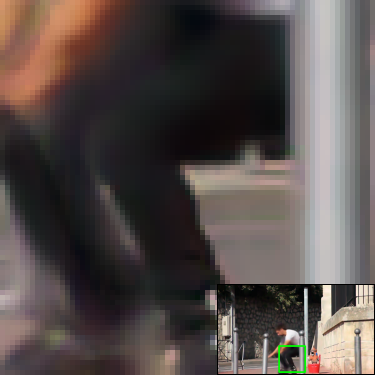}
&\includegraphics[width=0.09\textwidth]{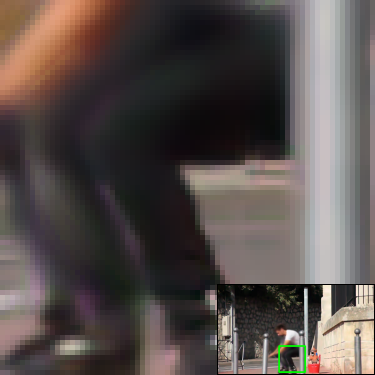}
&\includegraphics[width=0.09\textwidth]{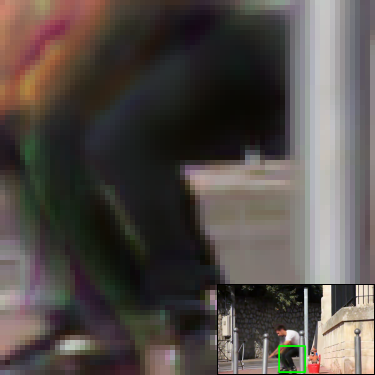}
&\includegraphics[width=0.09\textwidth]{Fig/vimeo_comparison/00026_0003/frame1_qp26_CVRS_finetuned_psnr30.40_ssim0.8835_bpp0.3167.png}
&\includegraphics[width=0.09\textwidth]{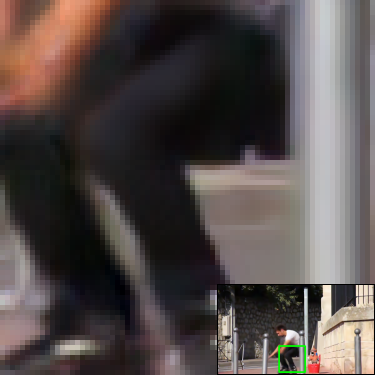}
&\includegraphics[width=0.09\textwidth]{Fig/cross_filled_square.pdf}\\
\includegraphics[width=0.09\textwidth]{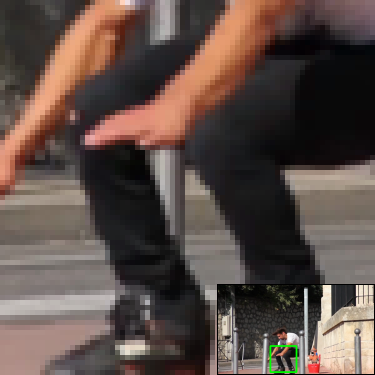}
&\includegraphics[width=0.09\textwidth]{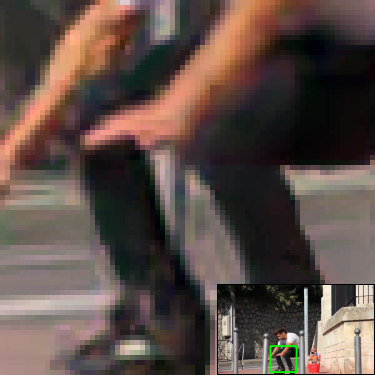}
&\includegraphics[width=0.09\textwidth]{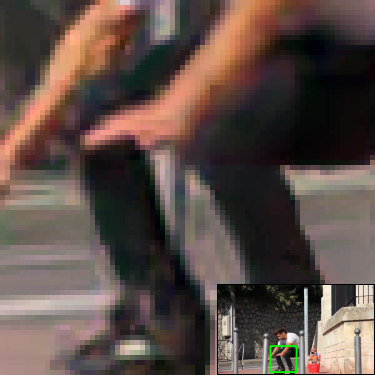}
&\includegraphics[width=0.09\textwidth]{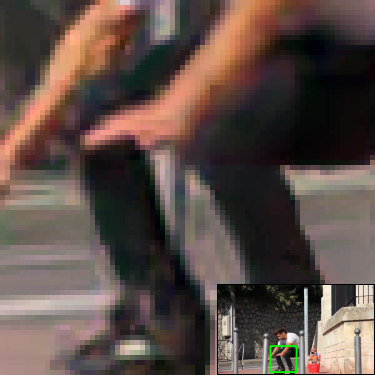}
&\includegraphics[width=0.09\textwidth]{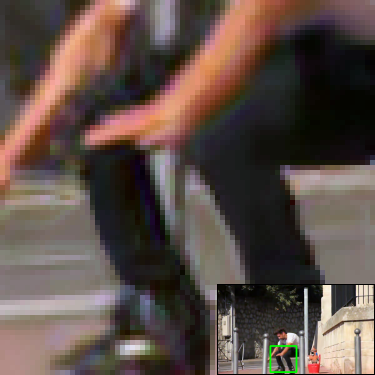}
&\includegraphics[width=0.09\textwidth]{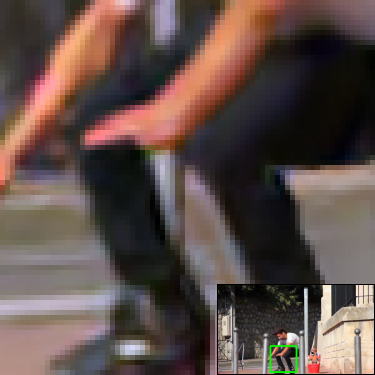}
&\includegraphics[width=0.09\textwidth]{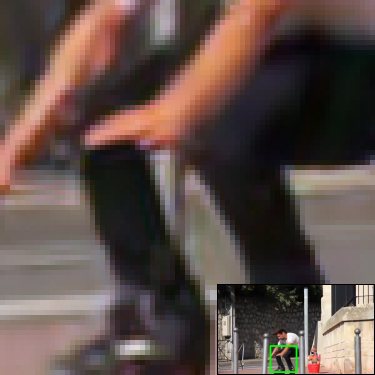}
&\includegraphics[width=0.09\textwidth]{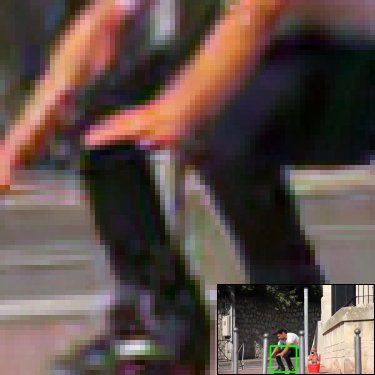}\\
\includegraphics[width=0.09\textwidth]{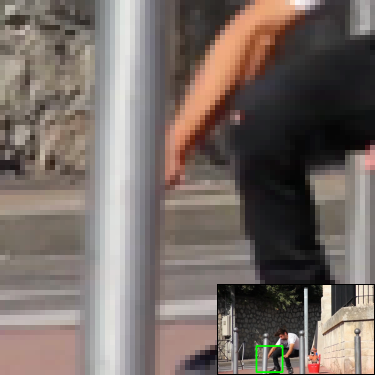}
&\includegraphics[width=0.09\textwidth]{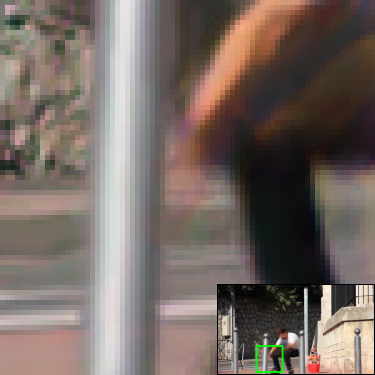}
&\includegraphics[width=0.09\textwidth]{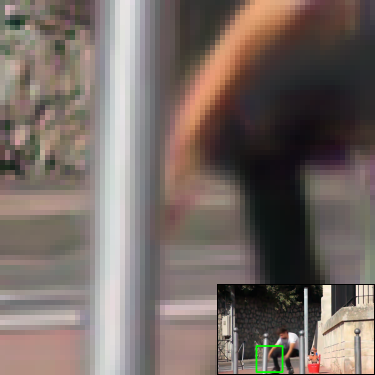}
&\includegraphics[width=0.09\textwidth]{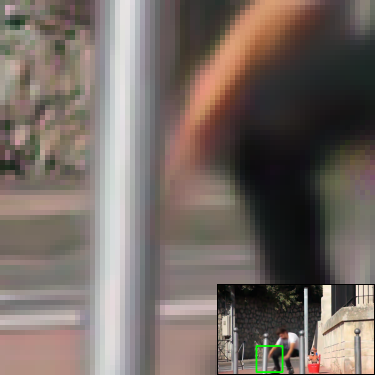}
&\includegraphics[width=0.09\textwidth]{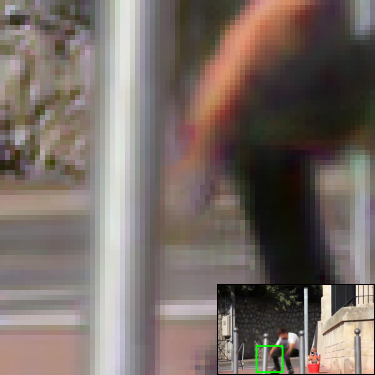}
&\includegraphics[width=0.09\textwidth]{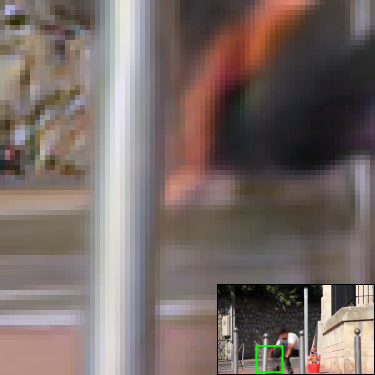}
&\includegraphics[width=0.09\textwidth]{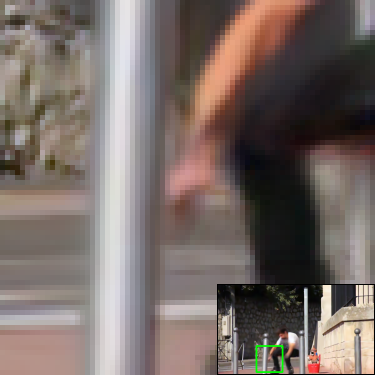}
&\includegraphics[width=0.09\textwidth]{Fig/cross_filled_square.pdf}\\
\includegraphics[width=0.09\textwidth]{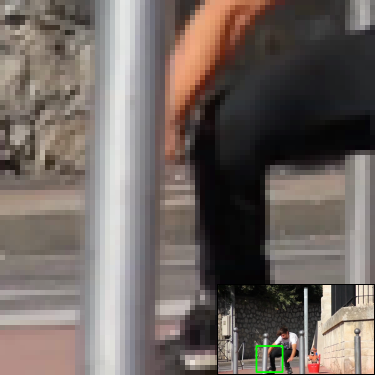}
&\includegraphics[width=0.09\textwidth]{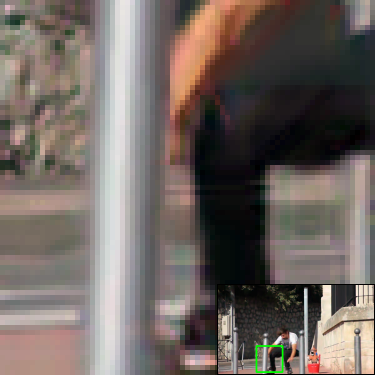}
&\includegraphics[width=0.09\textwidth]{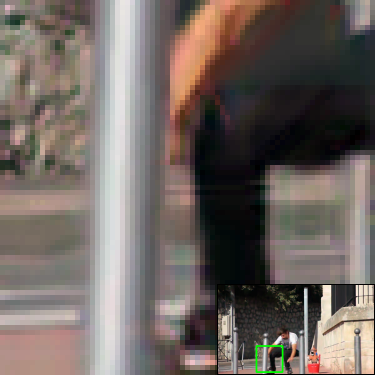}
&\includegraphics[width=0.09\textwidth]{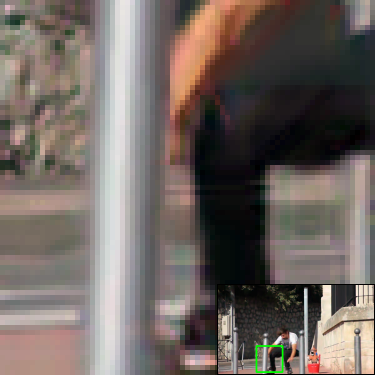}
&\includegraphics[width=0.09\textwidth]{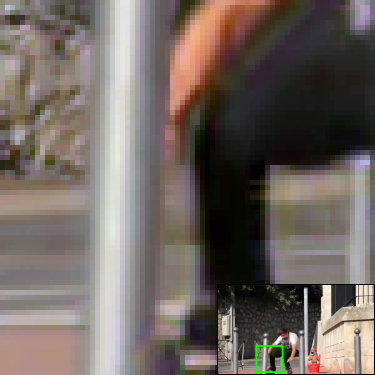}
&\includegraphics[width=0.09\textwidth]{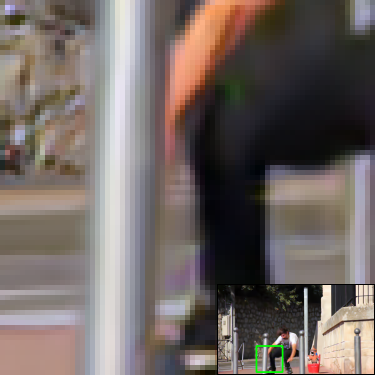}
&\includegraphics[width=0.09\textwidth]{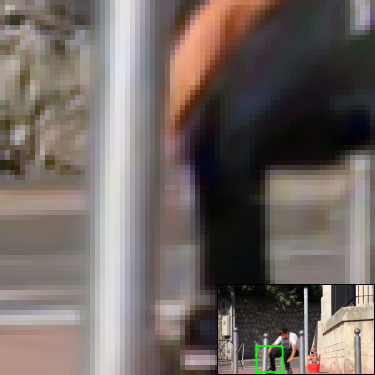}
&\includegraphics[width=0.09\textwidth]{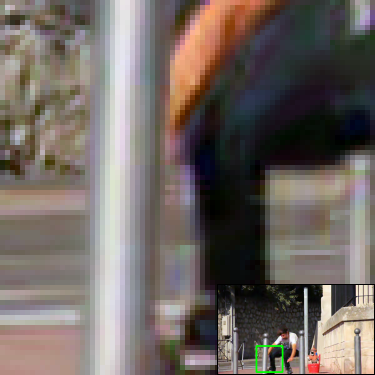}\\
\includegraphics[width=0.09\textwidth]{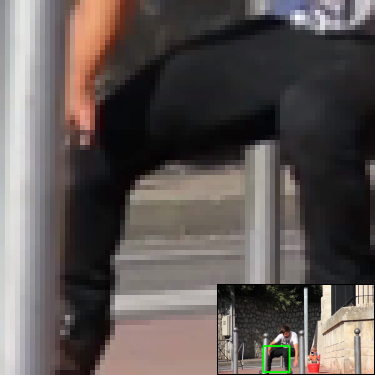}
&\includegraphics[width=0.09\textwidth]{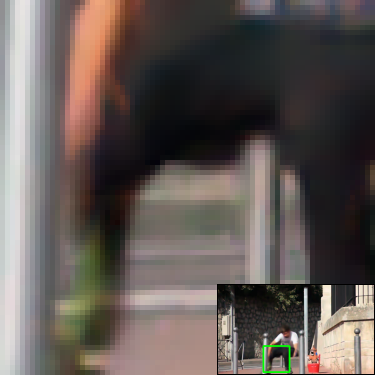}
&\includegraphics[width=0.09\textwidth]{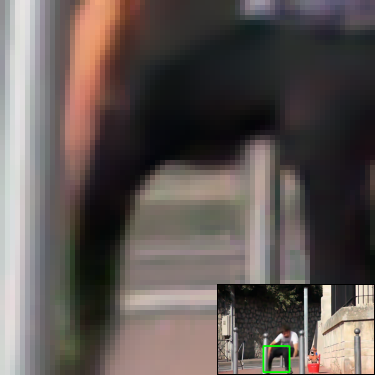}
&\includegraphics[width=0.09\textwidth]{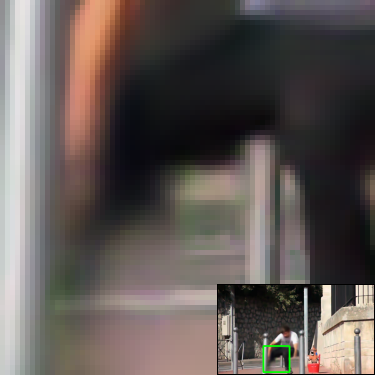}
&\includegraphics[width=0.09\textwidth]{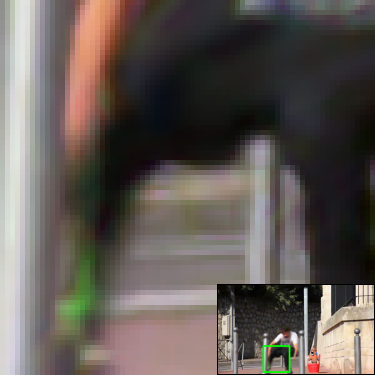}
&\includegraphics[width=0.09\textwidth]{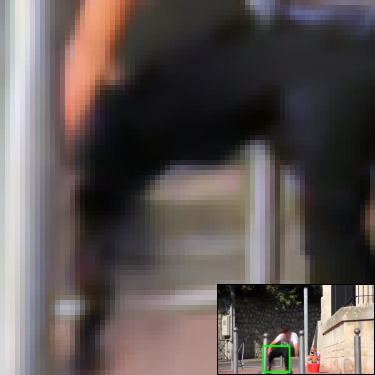}
&\includegraphics[width=0.09\textwidth]{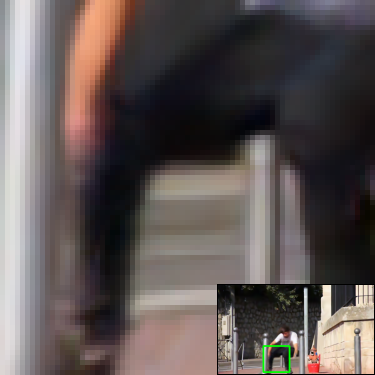}
&\includegraphics[width=0.09\textwidth]{Fig/cross_filled_square.pdf}\\
\includegraphics[width=0.09\textwidth]{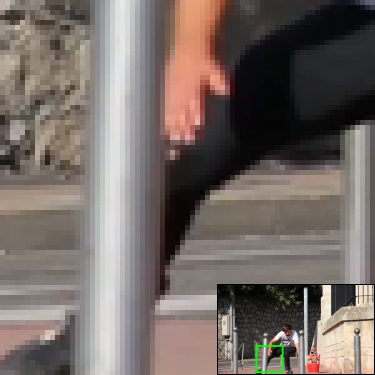}
&\includegraphics[width=0.09\textwidth]{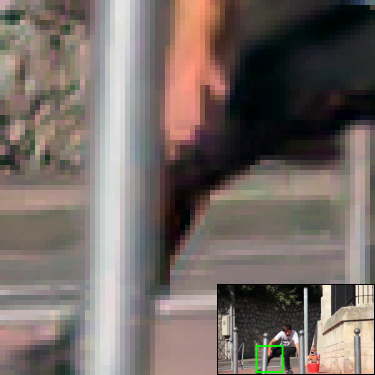}
&\includegraphics[width=0.09\textwidth]{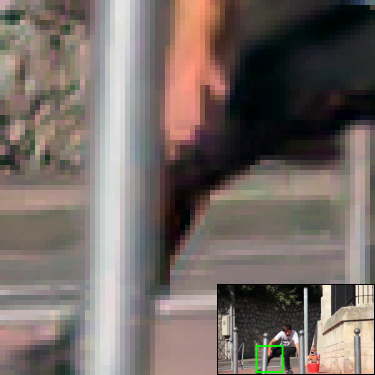}
&\includegraphics[width=0.09\textwidth]{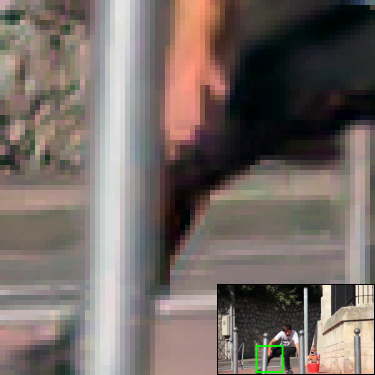}
&\includegraphics[width=0.09\textwidth]{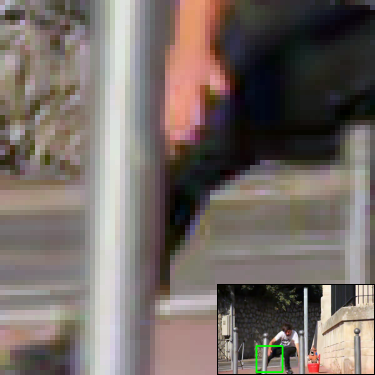}
&\includegraphics[width=0.09\textwidth]{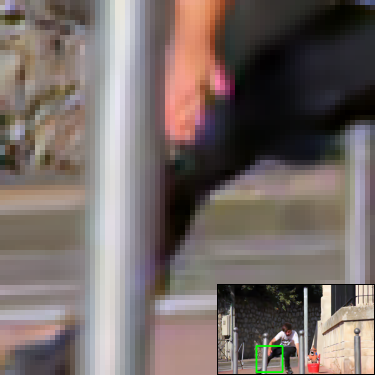}
&\includegraphics[width=0.09\textwidth]{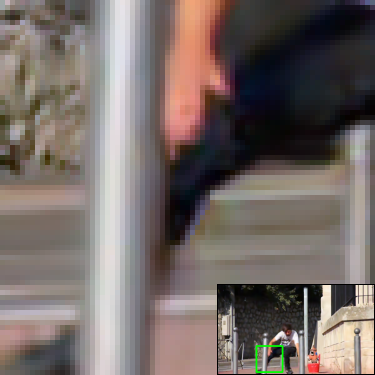}
&\includegraphics[width=0.09\textwidth]{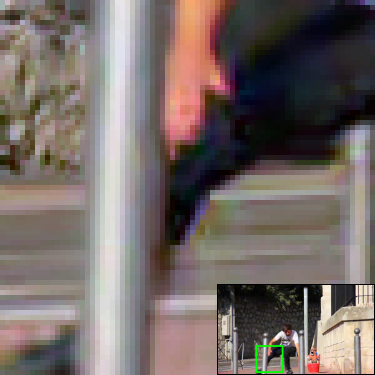}\\
\end{tabular}}

\caption{\textbf{Qualitative comparison of our method and competitive methods on the full-length sequence \textit{00026\_0003} from the Vimeo septuplet dataset~\cite{xue2019video}.} 
We crop the frames for easier comparison and visualize the interpolated frames at the bottom right. TVRN$\downarrow$ donates the downscaled low-frame-rate video.}
\label{fig:vimeo_test_3}
\end{figure*}

\begin{figure*}[!t]
\setlength{\tabcolsep}{0.5pt}
\centering
\label{fig: vimeo_test_4}

\resizebox{1.0\textwidth}{!}{
\scriptsize
\begin{tabular}{cccccc>{\columncolor[HTML]{FFEEED}}cc}
Ground Truth &  UPR-Net L \cite{jin2023unified}  & EMA-VFI \cite{zhang2023extracting} & GIMM-VFI \cite{guo2024generalizable} & STAA \cite{xiang2022learning} & CSTVR* \cite{zhang2025continuous} & \textbf{TVRN (Ours)} & \textbf{TVRN$\downarrow$} \\
\includegraphics[width=0.09\textwidth]{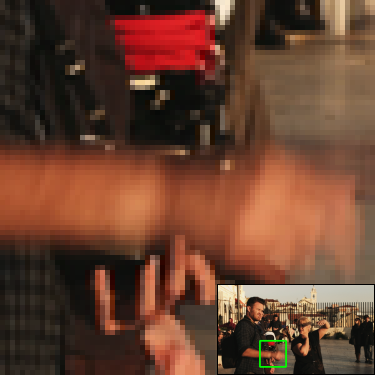}
&\includegraphics[width=0.09\textwidth]{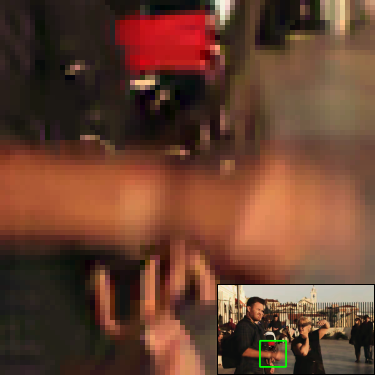}
&\includegraphics[width=0.09\textwidth]{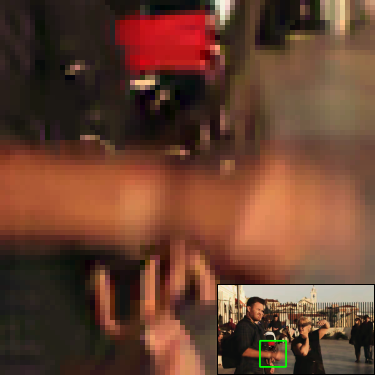}
&\includegraphics[width=0.09\textwidth]{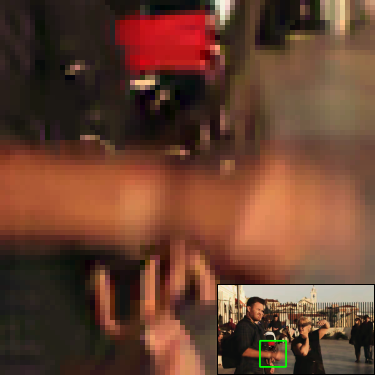}
&\includegraphics[width=0.09\textwidth]{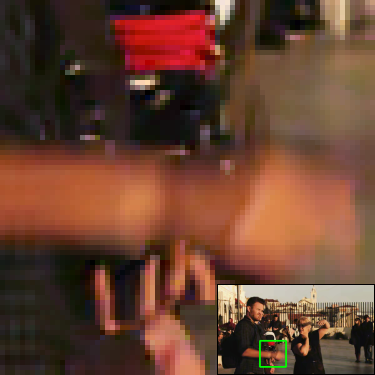}
&\includegraphics[width=0.09\textwidth]{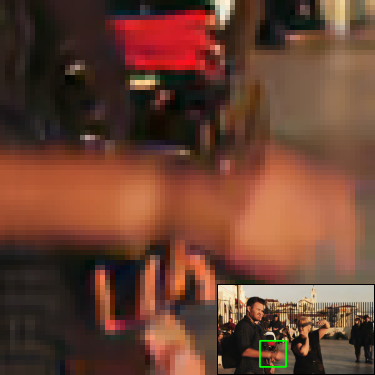}
&\includegraphics[width=0.09\textwidth]{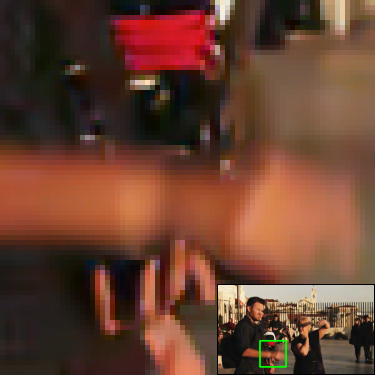}
&\includegraphics[width=0.09\textwidth]{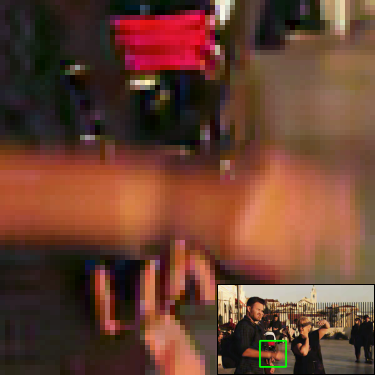}\\
\includegraphics[width=0.09\textwidth]{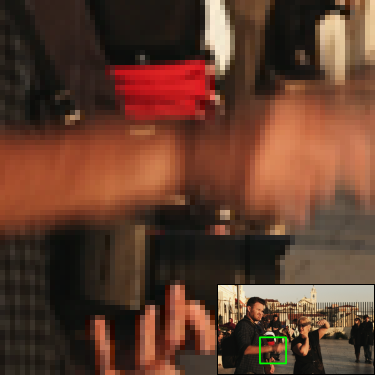}
&\includegraphics[width=0.09\textwidth]{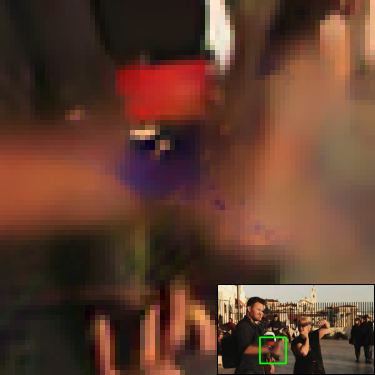}
&\includegraphics[width=0.09\textwidth]{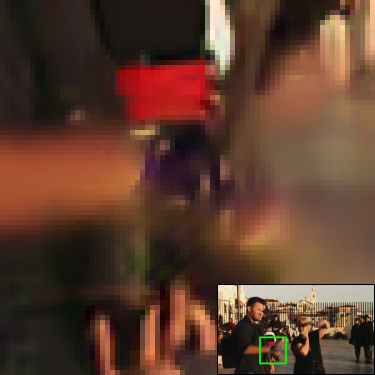}
&\includegraphics[width=0.09\textwidth]{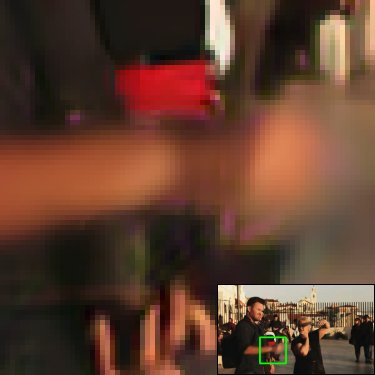}
&\includegraphics[width=0.09\textwidth]{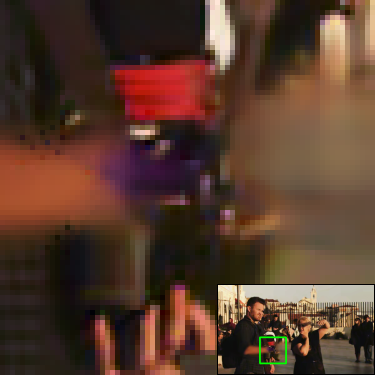}
&\includegraphics[width=0.09\textwidth]{Fig/vimeo_comparison/00002_0238/frame1_qp26_CVRS_finetuned_psnr30.61_ssim0.9206_bpp0.3911.png}
&\includegraphics[width=0.09\textwidth]{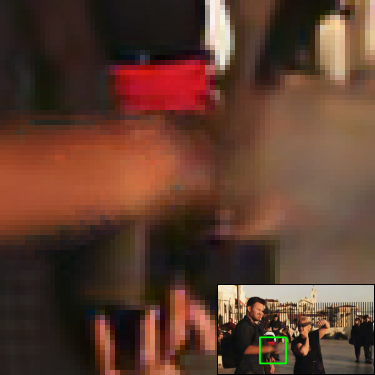}
&\includegraphics[width=0.09\textwidth]{Fig/cross_filled_square.pdf}\\
\includegraphics[width=0.09\textwidth]{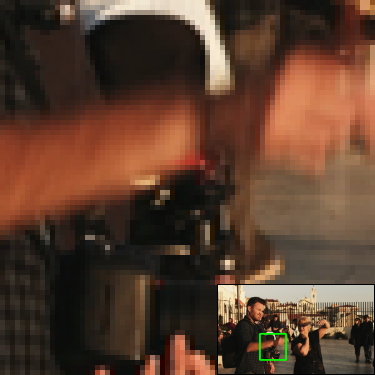}
&\includegraphics[width=0.09\textwidth]{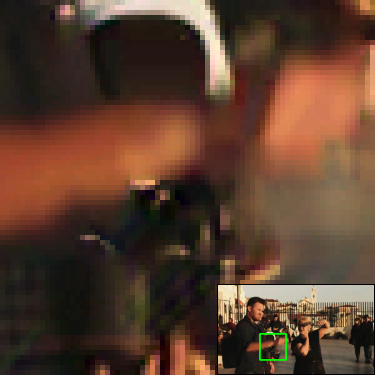}
&\includegraphics[width=0.09\textwidth]{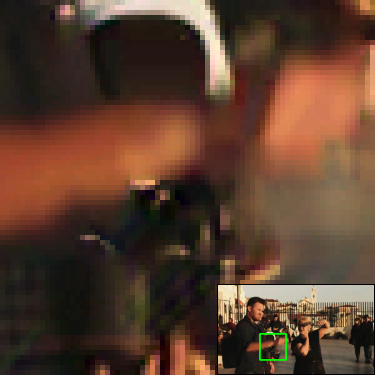}
&\includegraphics[width=0.09\textwidth]{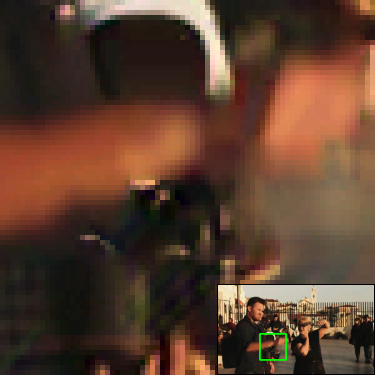}
&\includegraphics[width=0.09\textwidth]{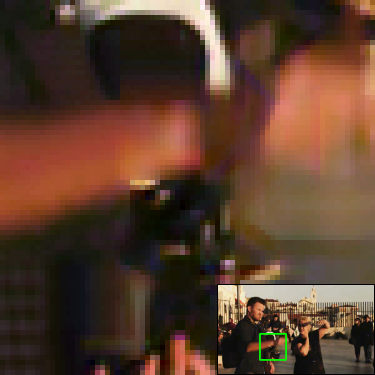}
&\includegraphics[width=0.09\textwidth]{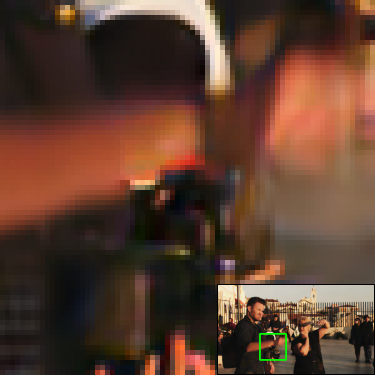}
&\includegraphics[width=0.09\textwidth]{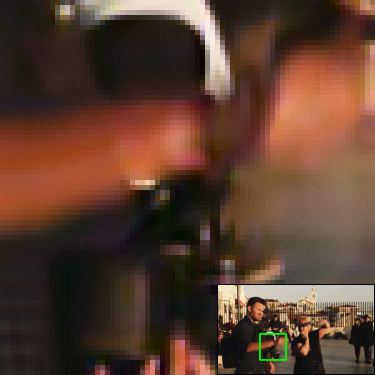}
&\includegraphics[width=0.09\textwidth]{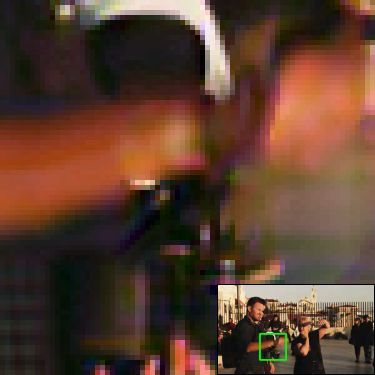}\\
\includegraphics[width=0.09\textwidth]{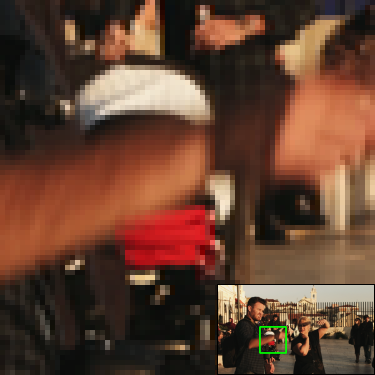}
&\includegraphics[width=0.09\textwidth]{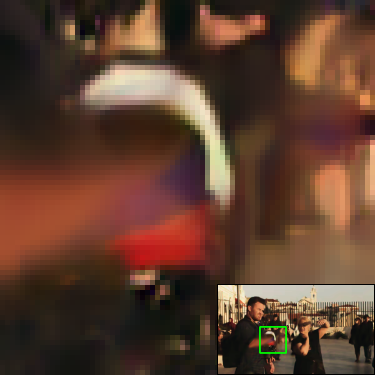}
&\includegraphics[width=0.09\textwidth]{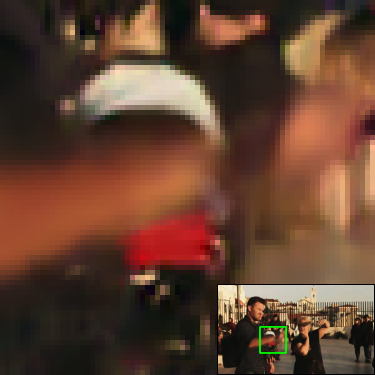}
&\includegraphics[width=0.09\textwidth]{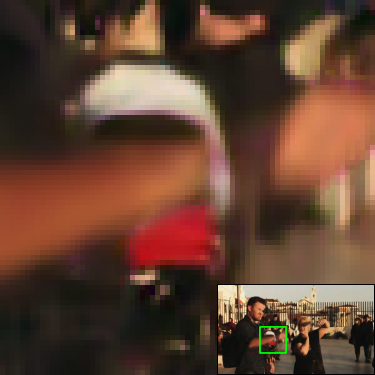}
&\includegraphics[width=0.09\textwidth]{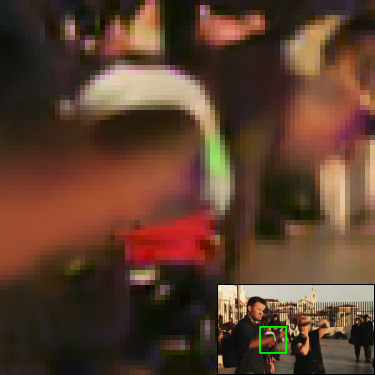}
&\includegraphics[width=0.09\textwidth]{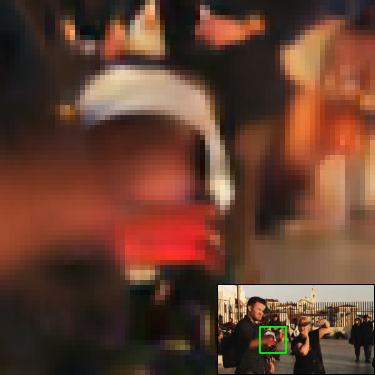}
&\includegraphics[width=0.09\textwidth]{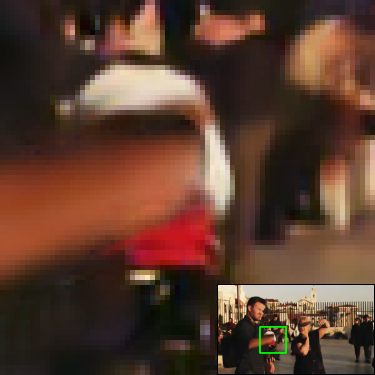}
&\includegraphics[width=0.09\textwidth]{Fig/cross_filled_square.pdf}\\
\includegraphics[width=0.09\textwidth]{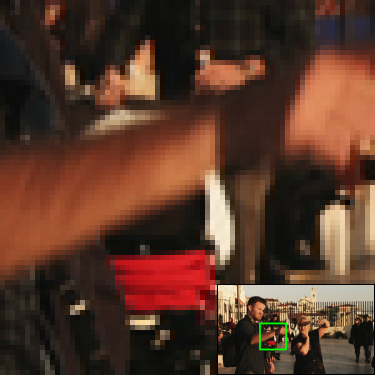}
&\includegraphics[width=0.09\textwidth]{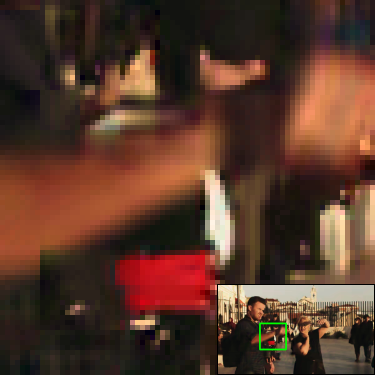}
&\includegraphics[width=0.09\textwidth]{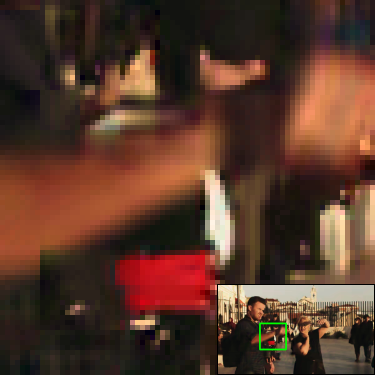}
&\includegraphics[width=0.09\textwidth]{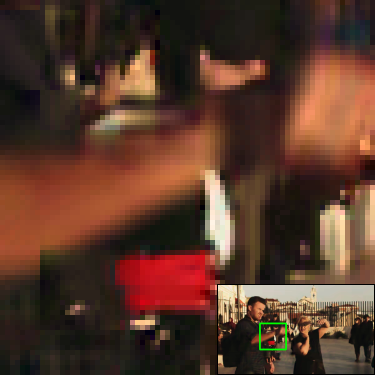}
&\includegraphics[width=0.09\textwidth]{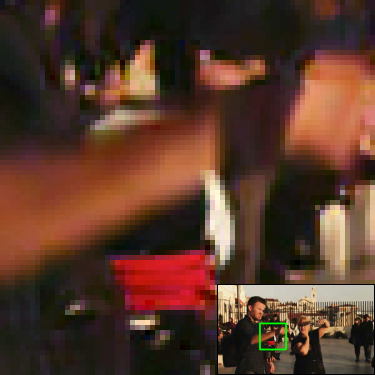}
&\includegraphics[width=0.09\textwidth]{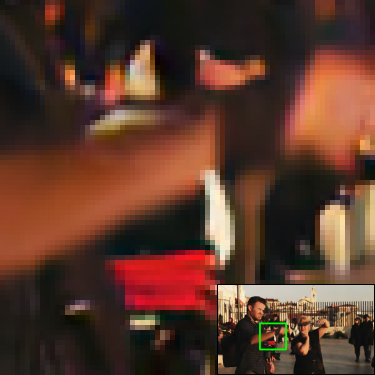}
&\includegraphics[width=0.09\textwidth]{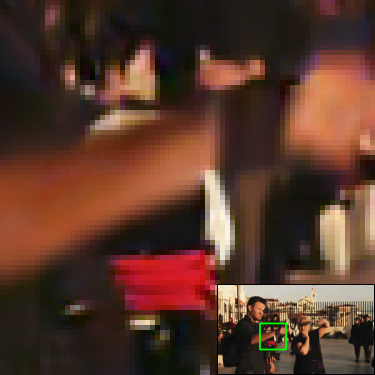}
&\includegraphics[width=0.09\textwidth]{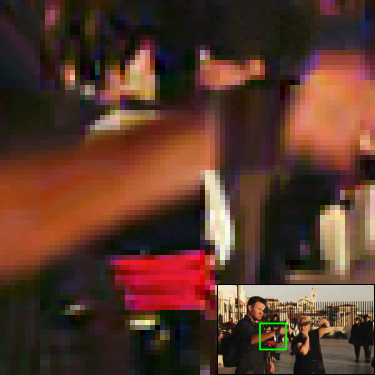}\\
\includegraphics[width=0.09\textwidth]{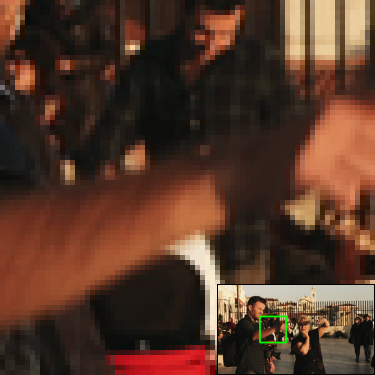}
&\includegraphics[width=0.09\textwidth]{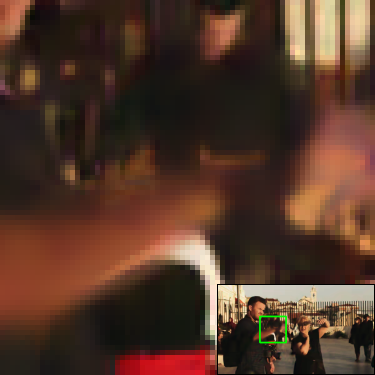}
&\includegraphics[width=0.09\textwidth]{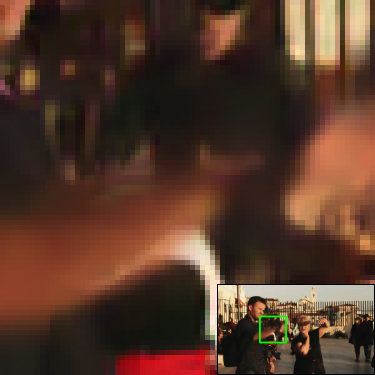}
&\includegraphics[width=0.09\textwidth]{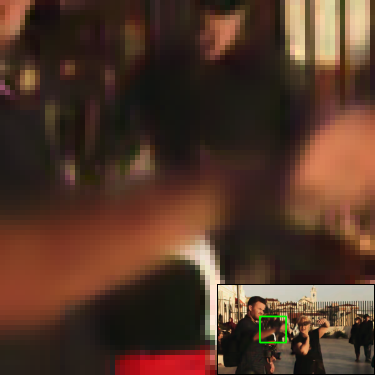}
&\includegraphics[width=0.09\textwidth]{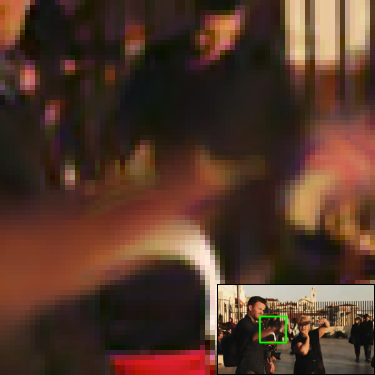}
&\includegraphics[width=0.09\textwidth]{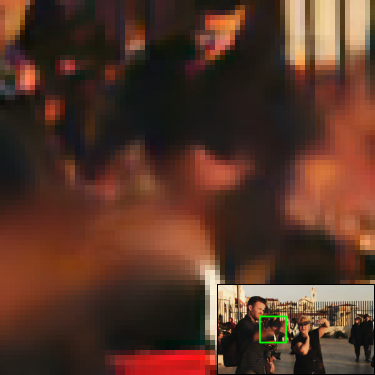}
&\includegraphics[width=0.09\textwidth]{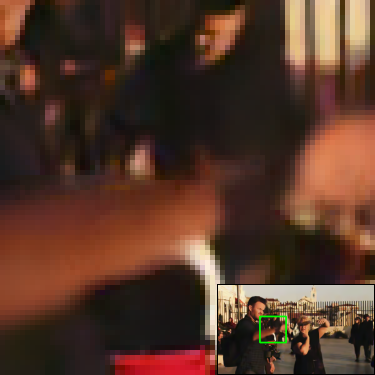}
&\includegraphics[width=0.09\textwidth]{Fig/cross_filled_square.pdf}\\
\includegraphics[width=0.09\textwidth]{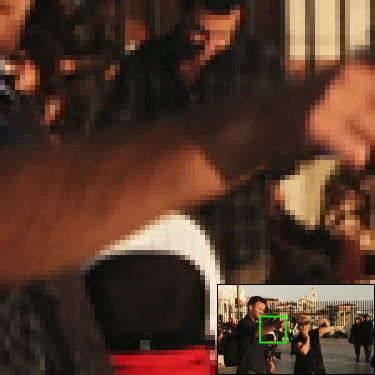}
&\includegraphics[width=0.09\textwidth]{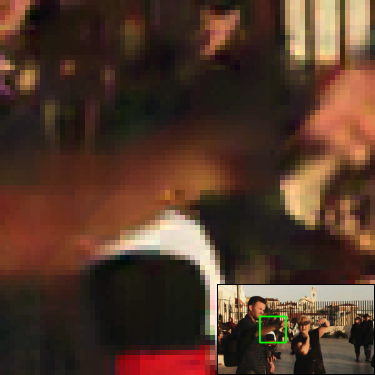}
&\includegraphics[width=0.09\textwidth]{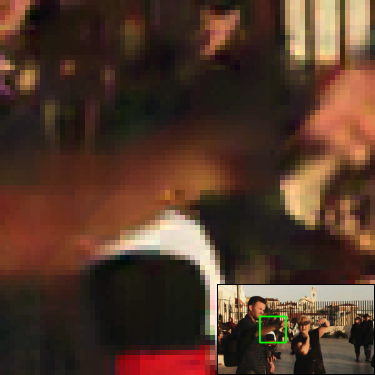}
&\includegraphics[width=0.09\textwidth]{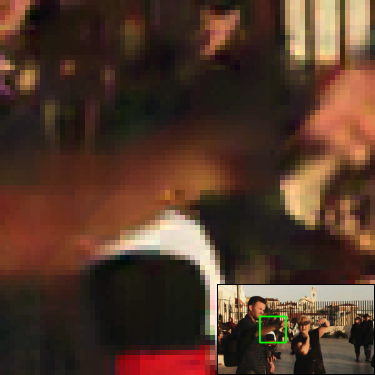}
&\includegraphics[width=0.09\textwidth]{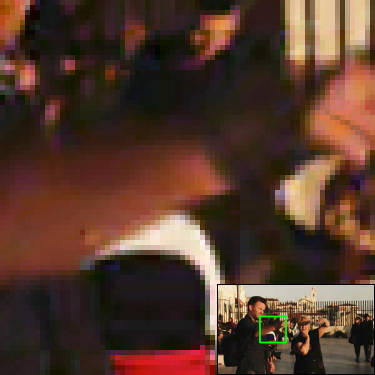}
&\includegraphics[width=0.09\textwidth]{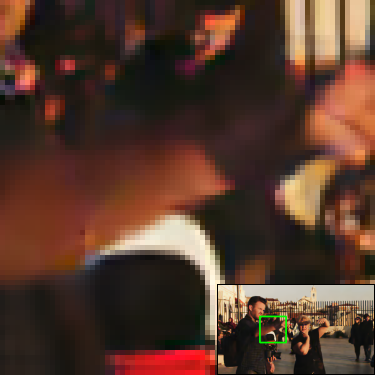}
&\includegraphics[width=0.09\textwidth]{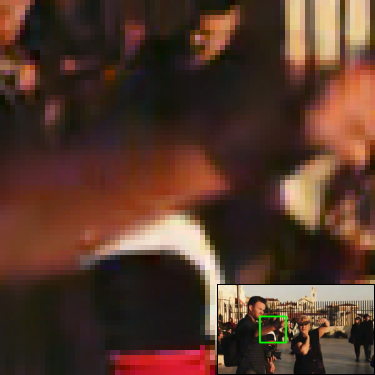}
&\includegraphics[width=0.09\textwidth]{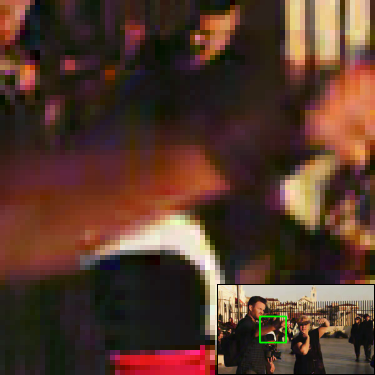}\\
\end{tabular}}

\caption{\textbf{Qualitative comparison of our method and competitive methods on the full-length sequence \textit{00002\_0238} from the Vimeo septuplet dataset~\cite{xue2019video}.} 
We crop the frames for easier comparison and visualize the interpolated frames at the bottom right. TVRN$\downarrow$ donates the downscaled low-frame-rate video.}
\label{fig:vimeo_test_4}
\end{figure*}

\end{document}